\newtheorem{theorem}{Theorem}
\newtheorem{lemma}[theorem]{Lemma}
\newtheorem{remark}[theorem]{Remark}
\newtheorem{proposition}[theorem]{Proposition}
\newtheorem{hypothesis}[theorem]{Hypothesis}
\newtheorem{definition}[theorem]{Definition}
\numberwithin{equation}{section}
\numberwithin{theorem}{section}
\DeclareMathOperator{\Sym}{Sym}
\DeclareMathOperator{\Div}{Divisor}
\newcommand{\z}{\#z}
\newcommand{\p}{\#p}
\newcommand{\vs}{\vec{s}}
\newcommand{\vz}{\vec{z}}
\newcommand{\vq}{\vec{q}}
\newcommand{\vr}{\vec{r}}
\newcommand{\vm}{\vec{m}}
\newcommand{\ve}{\vec{e}}
\newcommand{\vo}{\vec{\omega}}
\newcommand{\Ee}{E_{\vec{e}}}
\newcommand{\vte}{\vartheta_{\vec{e}}}
\newcommand{\ff}{\delta_{\be}}
\newcommand{\RR}{\rho_{\be, \boldsymbol{\nu}}}
\newcommand{\bx}{\mathbf{x}}
\newcommand{\be}{\boldsymbol{\eta}}
\newcommand{\bnu}{\boldsymbol{\nu}}
\newcommand{\AC}{\mathcal A \mathcal C}
\newcommand{\GA}{\mathbf G_N^{\text{Aztec}}}
\DeclareMathOperator{\dist}{dist}
\newcommand{\I}{\mathrm{i}}
\newcommand{\R}{\mathbb{R}}
\newcommand{\Z}{\mathbb{Z}}
\newcommand{\C}{\mathbb{C}}
\newcommand{\cR}{\mathcal{R}}
\newcommand{\sfP}{\mathsf{P}}
\author[M. Piorkowski]{Mateusz Piorkowski}
\address{Department of Mathematics\\KTH Royal Institute of Technology\\
Stockholm 
\\
Sweden}
\email{\href{mailto:mateuszp@kth.se}{mateuszp@kth.se}}
\title[Arctic curves of periodic dimer models]{Arctic curves of periodic dimer models and generalized discriminants}
\date{}
\keywords{dimer models; arctic curves; Aztec diamond; hexagonal tilings}
\subjclass{14K25, 14H70, 52C20}
\thanks{This research was
supported by the Methusalem grant METH/21/03 – long term structural funding of
the Flemish Government, and the starting grant from the Ragnar S\"oderbergs Foundation. }
\begin{document}


\begin{abstract}
 We compute the algebraic equation for arctic curves of the Aztec diamond with a doubly (quasi-)periodic weight structure and obtain similar results for certain models of the hexagon. In particular, we determine the algebraic degree of such curves as a function of the number of frozen and smooth (or gaseous) regions. The key to our result is the construction of a discriminant for meromorphic differentials on a higher genus Riemann surface. This construction works analogously for meromorphic sections of arbitrary holomorphic line bundles. In the genus $g = 0$ case this notion reduces to the usual discriminant of a polynomial.
\end{abstract}
\maketitle
\tableofcontents

\section{Introduction}
\subsection{Background}
Dimer models have gained considerable attention in recent years and numerous techniques have been developed to study their statistics in the large size limit. A seminal result due to Cohn, Kenyon and Propp \cite{CKP} states that the height function of fairly general uniform dimer models converges to a deterministic limit, which is called the \emph{limit shape}, that satisfies a variational principle given in terms of an entropy functional. The simplest example of this phenomenon is the arctic circle theorem due to Jockusch, Propp and Shor \cite{JPS} for the Aztec diamond with uniform weights, see Fig.~\ref{Fig:ArcticCircle}.

Much effort has been devoted since then to obtain similar results for nonuniform weights and to compute the local fluctuations of the random height function as the size of the dimer model tends to infinity. Here, a key role is played by the fact that dimer models define a determinantal point process with the correlation kernel given in terms of the inverse Kasteleyn matrix, see \cite{K97}. Thus, questions about typical properties of dimer covers in the large size limit can be stated in terms of asymptotic properties of the inverse Kasteleyn matrix. A similar approach can be formulated for models of the Aztec diamond and the hexagon using a bijection between tilings and families of nonintersecting paths. Here, the Eynard--Mehta Theorem \cite{EM98} gives us the determinantal structure, see also \cite[Sect.~4]{DK21}. A comparison between the nonintersecting paths method and the inverse Kasteleyn matrix method can be found in \cite[Sect.~4]{CD23}.
\begin{figure}[ht]
    \centering
\begin{minipage}{.5\textwidth}
  \centering
  \includegraphics[width=0.92\linewidth]{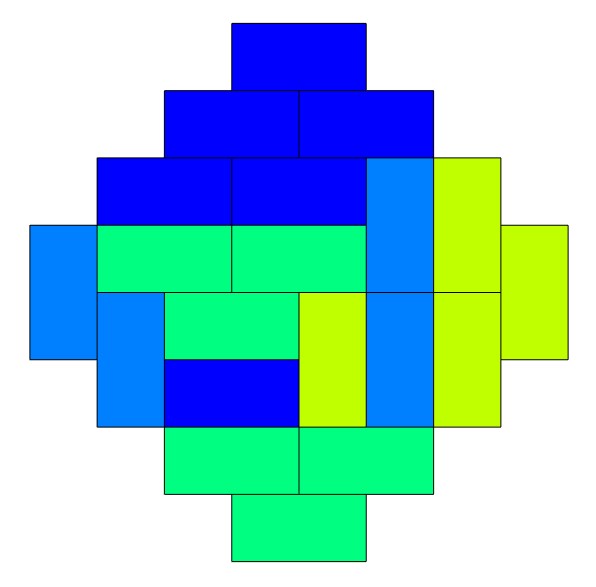}
\end{minipage}%
\begin{minipage}{.5\textwidth}
  \centering
  \includegraphics[width=0.95\linewidth]{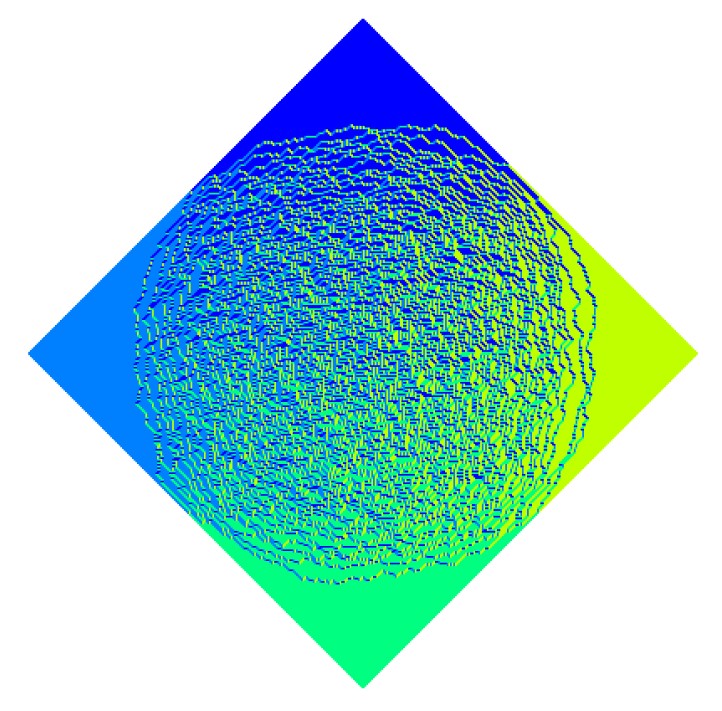}
  
\end{minipage}
\captionof{figure}{A random tiling of the Aztec diamond of size $4$ (left) and of size $200$ (right). Note the frozen regions in the corners and the emergence of a disc-shaped rough region in the middle. In this case the arctic curve becomes a circle as proven by Jockusch, Propp and Shor \cite{JPS} (generated using code kindly provided by Christophe Charlier).}
  \label{Fig:ArcticCircle}
\end{figure}

As shown in the seminal work of Kenyon, Okounkov and Sheffield \cite{KOS06} universal behavior for the height function fluctuations emerge under the assumption that the weights have a doubly periodic structure. In particular, the local dimer statistics converge to a translation invariant Gibbs ensemble of which there are three types: frozen, rough and smooth. In the frozen (or solid) case the location of individual dimers is deterministic and follows some `brick-like' structure; in the rough case the dimer correlations decay polynomially  and the height function displays logarithmic variations; in the smooth case the dimer correlations decay exponentially and the variations of the height function are of order $O(1)$.\footnote{{The papers \cite{ADPZ}, \cite{BB24} make a distinction between frozen and \emph{quasi}-frozen regions. In this terminology, frozen regions consist only of one type of domino/lozenge and correspond to the corners of the Newton polygon, while quasi-frozen regions consist of a deterministic pattern containing \emph{multiple} types of dominos/lozenges and corresponds to the remaining points on the boundary of the Newton polygon. As this distinction is not relevant for our purposes, we will refer collectively to these regions as \emph{frozen regions}.}} 

One of the key insights of \cite{KOS06} is that the space of translation invariant Gibbs measures of a dimer model on the torus is naturally parametrized in terms of the so-called amoeba of an associated Harnack curve, or equivalently its Newton polygon (see \cite[Ch.~6]{GKZ94}). This parametrization leads to a  rich interplay  between algebraic geometry and dimer models with periodic weights, see also \cite{KO06}. In fact, the algebro-geometric description of dimer models remains of central importance also in more recent work, see e.g.~\cite{ADPZ}, \cite{BB23+}, \cite{BD19}, \cite{BB24}, \cite{BBS24}.

While in \cite{KOS06} dimer configurations on the torus were studied, the results therein led to natural conjectures regarding the appearance of the three types of translation invariant Gibbs measure for finite dimer models in the plane where boundary conditions play a role. The typical phase diagram that one observes contains all three phases separated by so-called \emph{arctic curves}, cf.~Fig.~\ref{LargeTiling}.
\begin{figure}[ht]
    \centering
\begin{minipage}{.5\textwidth}
  \centering
  \includegraphics[width=0.95\linewidth]{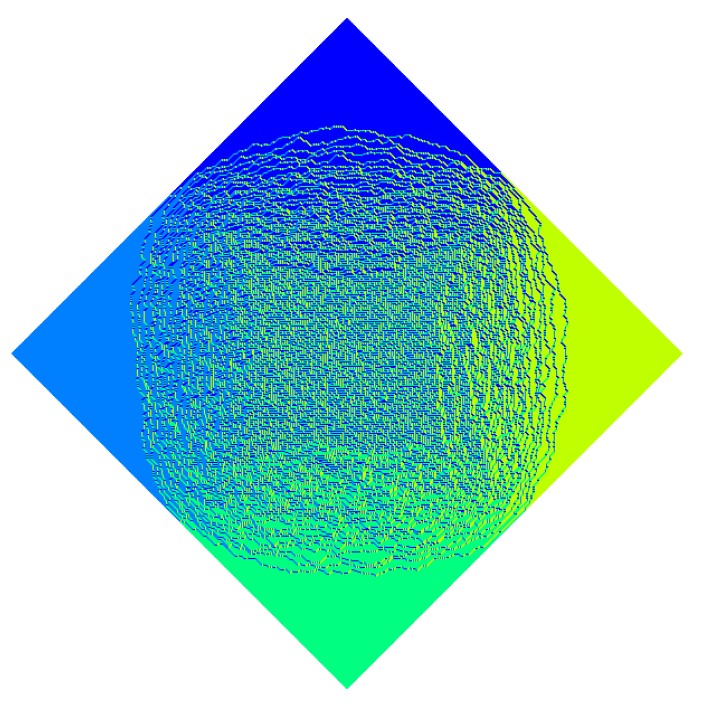}
\end{minipage}%
\begin{minipage}{.5\textwidth}
  \centering
  \includegraphics[width=0.95\linewidth]{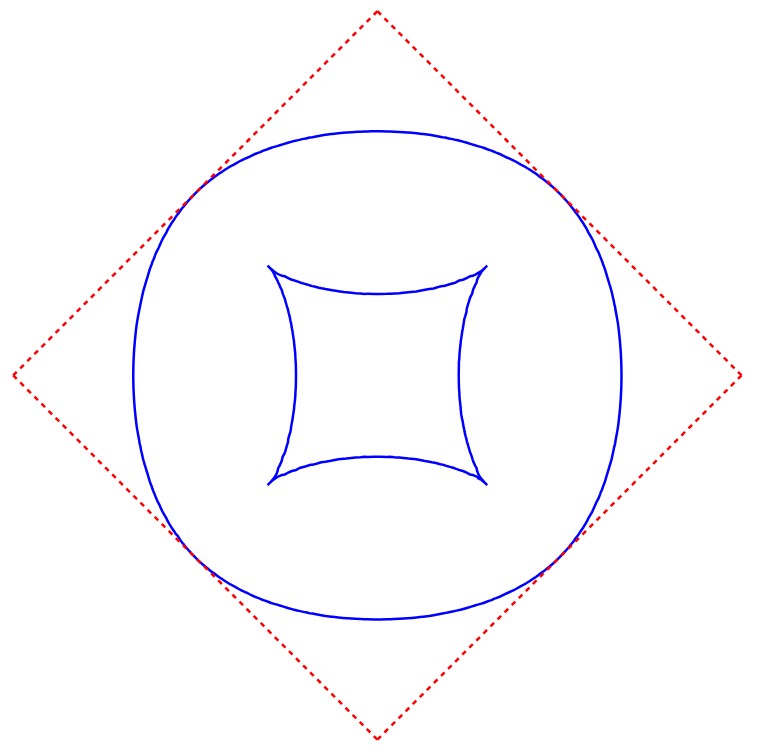}
  
\end{minipage}
\captionof{figure}{\textbf{Left}: Large tiling of the $2 \times 2$-periodic model of the Aztec diamond studied in \cite{CJ16}, \cite{CY14}, \cite{DK21} with one smooth and four frozen regions (size $300$); \textbf{Right}: The arctic curve of degree eight for the corresponding model (left image generated using code kindly provided by Christophe Charlier).}
  \label{LargeTiling}
\end{figure}
As these curves describe the phase transition between different types of Gibbs measures they are of particular interest.  Hence, considerable effort has been devoted to the study of dimer correlations in their vicinity, see e.g.~\cite{BCJ18}, \cite{BCJ22}, \cite{CJ16}, \cite{Jo05}, \cite{JM23}. It turns out that these curves additionally display universal geometric/topological properties that can be in many cases explicitly determined. For a detailed description of the geometry of arctic curves for $k \times \ell$-periodic models of the Aztec diamond see \cite{BB23+}.

\medskip

In the present paper we will be interested in the \emph{algebraic} properties of arctic curves, in particular their degree, which has been computed only for specific cases in a handful of examples, see \cite{BD23}, \cite{CJ16}, \cite{DiFS14}, \cite{DK21} (for some general considerations see \cite{ADPZ}, \cite{KO07}). To the best of the author's knowledge, this is the first time the degree of arctic curves in the case of a dimer model with \emph{generic doubly periodic weights} -- in particular without any restriction on the number of smooth phases -- has been determined. More precisely, for the $k\times \ell$-periodic model of the Aztec diamond studied in \cite{BB23+} the formula for the degree of the underlying arctic can be expressed in terms of the number of smooth and frozen regions as follows:
\begin{align}\label{DegFormula}
    \textbf{degree of arctic curve} = 6\cdot \#\textbf{smooth regions}+2 \cdot \# \textbf{frozen regions} -6.
\end{align}
We prove this result by obtaining an explicit expression in terms of a generalized discriminant for the polynomial whose zero set is the arctic curve, see Fig.~\ref{Fig:2x2}. The strikingly simple formula \eqref{DegFormula} also holds for quasi-periodic models of the Aztec diamond which are defined in terms of Fock weights (see \cite{BB24}, \cite{BdT24}, \cite{Fock}). In fact, the characterization of arctic curves in \cite{BB24} allows us to determine their degree for a quite general class of quasi-periodic models of the hexagon as well. For other dimer models, and the challenges that remain, see Remark \ref{Rmk:Other_Models}.
\begin{figure}[ht]
    \centering
  
  \includegraphics[width=0.7\linewidth]{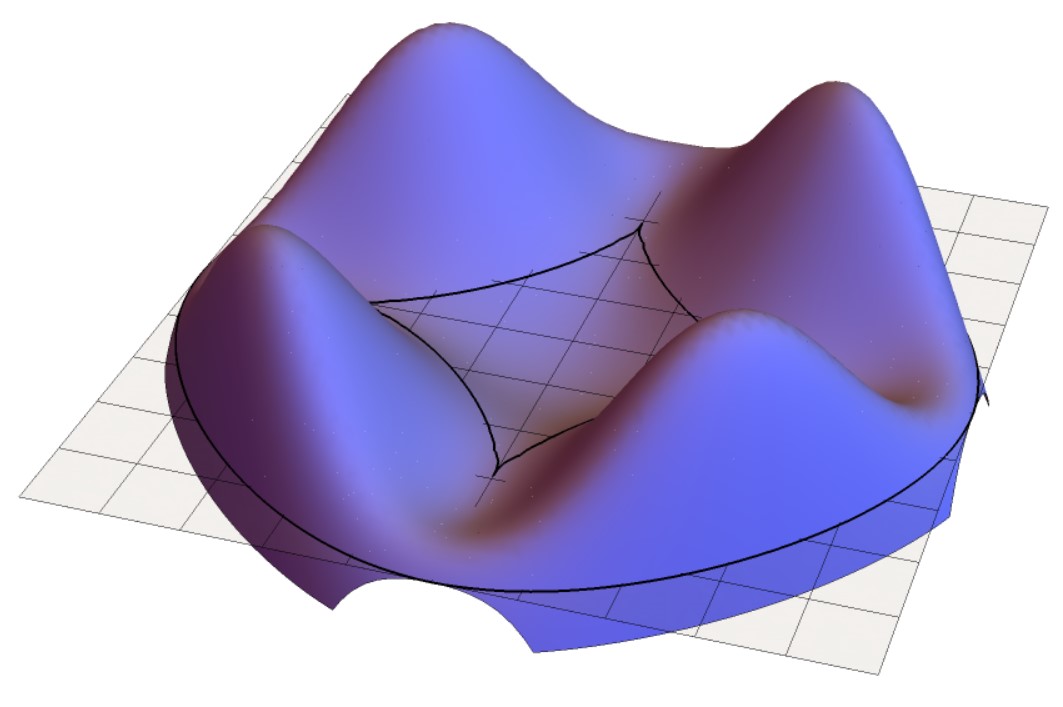}
  
\captionof{figure}{Graph of the $\be$-discriminant $\Delta_{\be}$ of the $2\times2$-periodic model from Fig.~\ref{LargeTiling}. Its zero set highlighted in black coincides with the arctic curve of the model.}
  \label{Fig:2x2}
\end{figure}

\medskip 

The key to our result is the construction of what we will refer to as \emph{$\be$-discriminant}. Here $\be = x_0\eta_0 +  \dots + x_n\eta_n$ is a linear form of meromorphic differentials $\eta_j$ defined on some Riemann surface.\footnote{In the case of the Aztec diamond this surface coincides with the spectral curve, while in the case of the hexagon it is a double cover of the spectral curve.} The particular choice depends on the dimer model we want to study. The $\be$-discriminant can be viewed as an analogue of the $\mathbf A$-discriminant which appeared in the work of  Gelfand, Kapranov and 
Zelevinsky on $\mathbf A$-hypergeometric functions and hyperdeterminants \cite{GKZ90}, \cite{GKZ94}. The $\mathbf A$-discriminant, where $\mathbf A$ denotes a finite set of monomials in $d$ variables, is essentially a generalization of the classical discriminant of a polynomial in one variable to the multi-variable case, i.e.~the domain $\C$ of the polynomial is replaced by $\C^d$ (see \cite[Ch.~9]{GKZ94}). In a similar vein we replace $\C \subset  \C P^1$ with a general compact Riemann surface $\mathcal R$ of arbitrary genus and the underlying polynomials with meromorphic sections of some line bundle. This leads naturally to the definition of $\be$-discriminants (see Def.~\ref{DefDiscr}) and the appearance of Riemann theta functions in their construction.

The importance of these generalized discriminants lies in the fact that for many (quasi-)periodic dimer models, including the Aztec diamond and the hexagon, one can choose $\be$ such that the corresponding $\be$-discriminant gives us the algebraic equation for the arctic curve, see Fig.~\ref{Fig:2x2}. This is our main motivation for defining and studying $\be$-discriminants. In particular, the degree of the arctic curve can be read off easily from an explicit formula for the corresponding $\be$-discriminant as shown in Theorem \ref{Thm:Aztec2}. 

\medskip

While our primary goal is computing arctic curves, the actual concept of an $\be$-discriminant does not require any reference to dimer models. As such, we hope that $\be$-discriminants will also find applications in other areas of mathematics. Readers interested only in generalized discriminants can go straight to the Sections \ref{Sect:Definition} and \ref{Sect:Construction}, which contain the definition, respectively construction, of $\be$-discriminants without referencing dimer models. 

It turns out that only the characterization of arctic curves via a coalescence of zeros condition for meromorphic differentials (see Lem.~\ref{LemmaZeros}) is used in our main result on the Aztec diamond found in Theorem \ref{Thm:Aztec2}. Notions like Harnack curves (or more generally M-curves), the amoeba map and translation invariant Gibbs measures, which are of fundamental importance in the area of dimer models, do not appear in our construction. As such, we choose to present topics related to dimer models in a rather informal way, referring mostly to the existing literature. However, we give a brief description of the $k\times \ell$-periodic Aztec diamond in Section \ref{SectModel}; a more detailed account can be found in \cite[Sect.~2]{BB23+}. For the quasi-periodic models with Fock weights see the recent \cite{BB24}, \cite{BBS24}, \cite{BdT24} (planar dimer models were considered in \cite{BCdT1}, \cite{BCdTg}). For a detailed description of frozen, rough and smooth regions see the work of Kenyon, Okounkov and Sheffield \cite{KOS06}. Important general results on  arctic curves, with emphasis on their algebraic properties, can be found in \cite{ADPZ} and \cite{KO07}. For the concepts of algebraic geometry that are used in the study of dimer models see \cite{GKZ94}, \cite{KOS06}. The general theory of Riemann surfaces and related topics can be found in \cite{GH94}, \cite{Miranda}, \cite{Tata1}.

\subsection{Overview of the paper}
We will now briefly summarize the rest of the paper. 

In Section \ref{SectModel} we give a short description of the $k \times \ell$-periodic Aztec diamond studied in \cite{BB23+}. This model has been analysed in great detail and therefore will serve as our main example. However, we also mention other (quasi-)periodic models of the Aztec diamond and the hexagon in lesser detail, referring instead to existing literature. We then motivate the notion of an $\be$-discriminant that plays a central role in the proof of our results on arctic curves.

In Section \ref{Sect:Definition} we define rigorously  $\be$-discriminants in Definition \ref{DefDiscr}. We do this in a more general setting than required for our application to dimer models.

In Section \ref{Sect:Construction} we start by showing that $\be$-discriminants in the genus $0$ case can be written in terms of the classical discriminant of a polynomial. This part is not new but included for the sake of completeness so that the reader can become familiar with the notion of $\be$-discriminants. We then proceed with the construction of $\be$-discriminants in the higher genus case which involves the use of Riemann theta functions. This construction lies at the heart of the present work. The key result is contained in Theorem \ref{mainTheorem}. 

In Section \ref{Sect:ArcticCurves} we show that $\be$-discriminants define in many cases the algebraic equation for arctic curves. To this end we state in Lemma \ref{LemIrr} a general result on the irreducibility of arctic curves which will imply that their degree is equal to the degree of the associated $\be$-discriminant. We then proceed to go through the main examples for which our result applies. In all cases the degree of the arctic curve turns out to be a simple affine linear function in the number of frozen and smooth regions. 

In Section \ref{Sect:MixRes} we finish the paper by discussing mixed resultants studied in \cite[Ch.~3.3]{GKZ94}, using the methodology developed in the previous sections. This part is not directly related to dimer models but might be of separate interest. In particular, we obtain formulas for these objects in terms of Riemann theta functions. We then show that, as in the classical case, the $\be$-discriminant can be written in terms of a resultant. This gives us an alternative construction of $\be$-discriminants. 
\section{(Quasi-)periodic dimer models}\label{SectModel}
\subsection{\texorpdfstring{$k \times \ell$}{}-periodic Aztec diamond}

We describe now briefly doubly periodic models of the Aztec diamond as studied in \cite{BB23+}. In Section \ref{Sect:ArcticCurves} we will also state our result on arctic curves in the more general setup of \emph{quasi}-periodic weights. The definition of quasi-periodic models of both the Aztec diamond and the hexagon can be found in the recent paper \cite{BB24}.

For some fixed size $N > 0$ of the Aztec diamond  consider the following finite set of points in the plane:
\begin{align*}
    \mathbf B_N &=  \big\lbrace (\tfrac{1}{2}-N+i+j, \, -\tfrac{1}{2}-i+j) \ | \ i = 0, \dots, N-1, \ j = 0, \dots, N \big\rbrace,
    \\
    \mathbf W_N &=  \big\lbrace (\tfrac{1}{2}-N+i+j, \, \tfrac{1}{2}-i+j) \ | \ i = 0, \dots, N, \ j = 0, \dots, N-1 \big\rbrace.
\end{align*}
We refer to $\mathbf B_N$ and $\mathbf W_N$ as the set of black, respectively white, vertices. Define $\mathbf V_N = \mathbf B_N \cup \mathbf W_N$ to be the set of all vertices and consider the graph $\mathbf G_N^{\text{Aztec}} = (\mathbf V_N, \mathbf E_N)$, where $\mathbf E_N$ is the edge-set containing nearest neighbour edges. We refer to $\mathbf G_N^{\text{Aztec}}$ as the \emph{Aztec diamond graph of size $N$}. An example of $\GA$ of size $N = 4$ is shown in Fig.~\ref{ADGraph}.

\begin{figure}[ht]
	\begin{center}
	\begin{tikzpicture}[scale=0.8]
\foreach \j in {0,...,3}{
    \foreach \k in {0,...,4}{
        \node
        (\j, \k) at (0.5-4.0+\j+\k,-0.5-\j+\k) {$\bullet$};
    }
}

\foreach \j in {0,...,4}{
    \foreach \k in {0,...,3}{
        \node
        (\j, \k) at (0.5-4.0+\j+\k,0.5-\j+\k) {$\circ$};
    }
}

\foreach \j in {0,...,3}{
    \foreach \k in {0,...,3}{
        \draw (0.5-4.0+\j+\k,-0.5-\j+\k) -- (0.5-4.0+\j+\k+0.93,-0.5-\j+\k);
        \draw (0.5-4.0+\j+\k,-0.5-\j+\k) -- (0.5-4.0+\j+\k,-0.5-\j+\k+0.93);
    }
}

\foreach \j in {0,...,3}{
    \foreach \k in {1,...,4}{
        \draw (0.5-4.0+\j+\k,-0.5-\j+\k) -- (0.5-4.0+\j+\k-0.93,-0.5-\j+\k);
        \draw (0.5-4.0+\j+\k,-0.5-\j+\k) -- (0.5-4.0+\j+\k,-0.5-\j+\k-0.93);
    }
}

\end{tikzpicture}
	\end{center}
	\caption{\label{ADGraph}Aztec diamond graph of size 4.}
\end{figure}
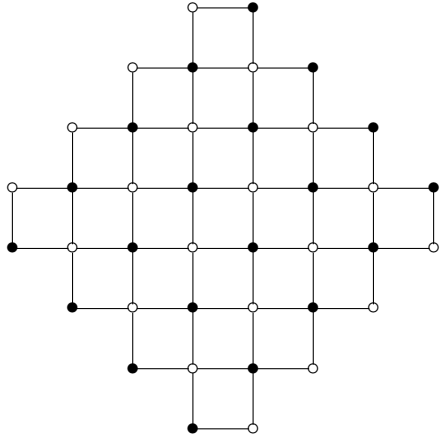
Tilings of the Aztec diamond are in one-to-one correspondence with dimer covers (or perfect matchings) of the graph $\GA$ via the identification between edges and dominos displayed in Fig.~\ref{DimerDominos}.
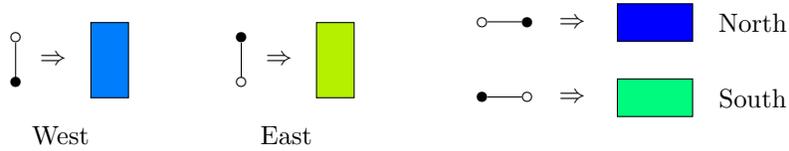
\begin{figure}[ht]
    \centering
    \begin{tikzpicture}
        \node at (0, 0.2)  (a){$\bullet$};
        \node at (0, 0.8) (b) {$\circ$};
        \draw (0,0.2)--(0,0.73);
        \node at (0.5, 0.5)  (a){$\Rightarrow$};
        \draw [fill={rgb,255:red,0; green,125; blue,250}] (1,0) rectangle (1.5,1);
        \node at (0.6, -0.5)  {West};
 
 \node at (3, 0.2)  (a){$\circ$};
        \node at (3, 0.8) (b) {$\bullet$};
        \draw (3,0.27)--(3,0.73);
        \node at (3.5, 0.5)  (a){$\Rightarrow$};
        \draw [fill={rgb,255:red,180; green,240; blue,0}] (4,0) rectangle (4.5,1);
        \node at (3.6, -0.5)  {East};
\node at (6.2, 1)  (a){$\circ$};
        \node at (6.8, 1) (b) {$\bullet$};
        \draw (6.27,1.01)--(6.73,1.01);
        \node at (7.4, 1)  (a){$\Rightarrow$};
        \draw [fill=blue] (8,1.25) rectangle (9,0.75);
\node at (9.8, 1)  {North};
\node at (6.2, 0)  (a){$\bullet$};
        \node at (6.8, 0) (b) {$\circ$};
        \draw (6.27,0.01)--(6.73,0.01);
        \node at (7.4, 0)  (a){$\Rightarrow$};
        \draw [fill={rgb,255:red,0; green,250; blue,125}] (8,-0.25) rectangle (9,0.25);
 \node at (9.8, 0)  {South};
    \end{tikzpicture}
    \caption{\label{DimerDominos}Map between dimers and dominos. The use of similar colors is necessary for the smooth region to be distinguishable from the rough region.}
    
\end{figure}

A $k \times \ell$-periodic model of the Aztec diamond is defined in terms of positive edge-weights $\alpha_{j,i}, \, \beta_{j,i}, \, \gamma_{j,i} > 0$, with $j = 1, \dots, k$ and $i = 1, \dots, \ell$. It is convenient to extend these weights in an $k \times \ell$-periodic fashion by setting $\alpha_{j+mk, i + n\ell} = \alpha_{j,i}$ ect., for $m,n \in \Z$. We also assume that
\begin{equation} \label{polezero} 
	\beta_i^v < 1 < \frac{\alpha_i^v}{\gamma_i^v} \quad \text{for } i =1, \dots, \ell,
\end{equation}
where $\alpha^v_i = \prod_{j=1}^k \alpha_{j,i}$, $\beta^v_i = \prod_{j=1}^k \beta_{j,i}$
and $\gamma^v_i = \prod_{j=1}^k \gamma_{j,i}$. Condition \eqref{polezero} is imposed in \cite[Assumption 4.1(b)]{BB23+} to guarantee the existence of the Wiener--Hopf factorizations. However, it can be dropped in the more general case of quasi-periodic weights, see Section \ref{Sect:ArcAzt}.

The weights are then assigned to the edges of $\GA$ (after a rotation by $45^{\circ}$) in a doubly periodic fashion as indicated in Fig.~\ref{DimerGraph}.
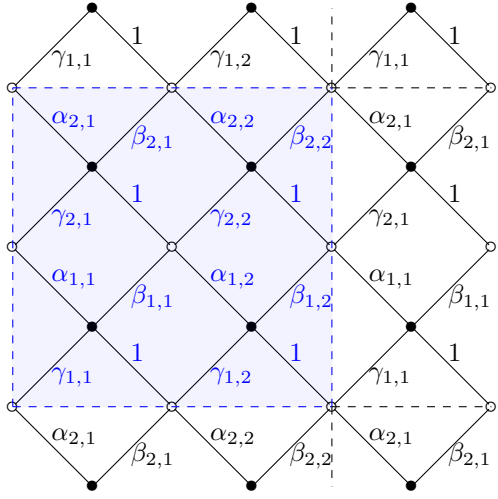
\begin{figure}[ht]
\vspace{-25pt}
\begin{tikzpicture}
    \put(-77,-57){$\gamma_{1,1}$}
    \put(-77,-117){\textcolor{blue}{$\gamma_{2,1}$}}
    \put(-77,-177){\textcolor{blue}{$\gamma_{1,1}$}}

    \put(-77,-80){\textcolor{blue}{$\alpha_{2,1}$}}
    \put(-77,-140){\textcolor{blue}{$\alpha_{1,1}$}}
    \put(-77,-200){$\alpha_{2,1}$}

    \put(-47,-50){$1$}
    \put(-47,-110){\textcolor{blue}{$1$}}
    \put(-47,-170){\textcolor{blue}{$1$}}

    \put(-47,-88){\textcolor{blue}{$\beta_{2,1}$}}
    \put(-47,-148){\textcolor{blue}{$\beta_{1,1}$}}
    \put(-47,-208){$\beta_{2,1}$}


    \put(-17,-57){$\gamma_{1,2}$}
    \put(-17,-117){\textcolor{blue}{$\gamma_{2,2}$}}
    \put(-17,-177){\textcolor{blue}{$\gamma_{1,2}$}}

    \put(-17,-80){\textcolor{blue}{$\alpha_{2,2}$}}
    \put(-17,-140){\textcolor{blue}{$\alpha_{1,2}$)}}
    \put(-17,-200){$\alpha_{2,2}$}

    \put(13,-50){$1$}
    \put(13,-110){\textcolor{blue}{$1$}}
    \put(13,-170){\textcolor{blue}{$1$}}

    \put(13,-88){\textcolor{blue}{$\beta_{2,2}$}}
    \put(13,-148){\textcolor{blue}{$\beta_{1,2}$}}
    \put(13,-208){$\beta_{2,2}$}


    \put(43,-57){$\gamma_{1,1}$}
    \put(43,-117){$\gamma_{2,1}$}
    \put(43,-177){$\gamma_{1,1}$}

    \put(43,-80){$\alpha_{2,1}$}
    \put(43,-140){$\alpha_{1,1}$}
    \put(43,-200){$\alpha_{2,1}$}

    \put(73,-50){$1$}
    \put(73,-110){$1$}
    \put(73,-170){$1$}

    \put(73,-88){$\beta_{2,1}$}
    \put(73,-148){$\beta_{1,1}$}
    \put(73,-208){$\beta_{2,1}$}

\end{tikzpicture}
\vspace{-35pt}
\hspace*{-8pt}
   	\begin{center}
	\rotatebox{45}{
 \begin{tikzpicture}[scale=1.5]
\foreach \j in {1,...,3}{
    \foreach \k in {0,...,3}{
        \node
        (\j, \k) at (0.5-4.0+\j+\k,-0.5-\j+\k) {$\bullet$};
    }
}

\foreach \j in {1,...,4}{
    \foreach \k in {0,...,2}{
        \node
        (\j, \k) at (0.5-4.0+\j+\k,0.5-\j+\k) {$\circ$};
    }
}

\foreach \j in {1,...,3}{
    \foreach \k in {0,...,2}{
       \draw (1.5-4.0+\j+\k,-0.47-\j+\k)--(1.5-4.0+\j+\k,0.5-\j+\k);
        \draw (1.5-4.03+\j+\k,-0.5-\j+\k)--(1.5-5+\j+\k,-0.5-\j+\k);
    }
}

\foreach \j in {0,...,2}{
    \foreach \k in {0,...,2}{
       \draw (1.5-4.0+\j+\k,-0.53-\j+\k)--(1.5-4.0+\j+\k,-1.5-\j+\k);
        \draw (1.5-3.97+\j+\k,-0.5-\j+\k)--(1.5-3+\j+\k,-0.5-\j+\k);
    }
}
\draw [blue, dashed] (-2.5,-0.5) -- (-0.5,1.5);
\draw [blue, dashed] (-0.5,-2.5) -- (1.5,-0.5);

\draw [blue, dashed] (-2.5,-0.5) -- (-0.5,-2.5);
\draw [dashed] 
(-0.5,-2.5) -- (0.5, -3.5);
\draw [dashed]
(-0.5,-2.5) -- (-1, -3);

\draw [blue, dashed] (-0.5,1.5) -- (1.5,-0.5);
\draw [dashed] 
(1.5,-0.5)--(2.5, -1.5);
\draw [dashed]
(1.5,-0.5) -- (2, 0);

\draw [fill=blue, opacity=0.05] (-2.5,-0.5) -- (-0.5,1.5) -- (1.5,-0.5) -- (-0.5,-2.5) -- cycle;
\end{tikzpicture}}
	\end{center}
 \vspace{-70pt}
\caption{Aztec diamond graph of size $N = 3$ with doubly periodic weights having periodicity $k = \ell = 2$. The fundamental domain is depicted in blue, and the weight structure repeats throughout the graph in a doubly periodic manner. \label{DimerGraph}}
\end{figure}
Let $\mathcal T_N$ be the set of all dimer covers of $\GA$. We denote elements of $\mathcal T_N$ by $T = \lbrace e_1, e_2, \dots \rbrace \subset \mathbf E_N$, where $e_n$ are the edges in the dimer cover $T$. Define on $\mathcal T_N$ the following probability measure:
\begin{align*}\text{Prob}_N(T) = \frac{\prod_{e \in T} w_e}{\sum_{T'\in \mathcal T_N}\prod_{e' \in T'} w_{e'}}.
\end{align*}
Here $w_e \in \lbrace \alpha_{j,i}, \, \beta_{j,i}, \, \gamma_{j,i} \, \colon \, j = 1, \dots, k; \, i = 1, \dots, \ell \rbrace$ is the weight of the edge $e \in \mathbf E_N$ (cf.~Fig.~\ref{DimerGraph}).

The large $N$ asymptotics of $k \times \ell$-periodic models of the Aztec diamond have been studied extensively in \cite{BB23+}. For such problems it customary to scale the (rotated) graph $\GA$ by $1/N$ such that in the $N \to \infty$ limit $\GA$ would fill out the unit square $[-1,1]^2$ (cf.~Fig.~\ref{Fig:kxl}). Asymptotic results concerning dimer models, e.g.~conformal structure of the rough region, limit shapes of the height function, geometry of arctic curves ect., can be conveniently expressed in terms of the macroscopic coordinates $(x_1, x_2) \in [-1,1]^2$.

A key result of \cite{BB23+} is that the geometry of the Aztec diamond can be described via the location of zeros of a meromorphic differential $dF=\eta_0 + x_1 \eta_1 + x_2 \eta_2$ depending affine linearly on the global coordinates $(x_1, x_2) \in [-1,1]^2$. The meromorphic differentials $\eta_0, \eta_1, \eta_2$ together with the underlying Riemann surface $\mathcal R$ can be constructed explicitly in terms of the model parameters $\alpha_{j,i}, \beta_{j,i}, \gamma_{j,i}$ as we summarize in the following. First consider the transfer matrices  $\phi^b$, $\phi^g$ given by
\begin{align*}  \phi^b_i(z; \vec{\alpha}_i, \vec{\gamma}_i)
	= \begin{pmatrix} \gamma_{1,i} & 0 & \cdots & 0 & \alpha_{k,i} z^{-1} \\
		\alpha_{1,i} & \gamma_{2,i} & \cdots & 0 & 0  \\
		\vdots & \vdots & \ddots & \vdots & \vdots \\
		0 & 0 & \cdots & \gamma_{k-1,i} & 0 \\
		0 & 0 & \cdots & \alpha_{k-1,i} & \gamma_{k,i} \end{pmatrix} ,\quad \vec{\alpha}_i = \begin{pmatrix}
		    \alpha_{1,i}
      \\
      \vdots
      \\
      \alpha_{k,i}
		\end{pmatrix}, \ \vec{\gamma}_i = \begin{pmatrix}
		    \gamma_{1,i}
      \\
      \vdots
      \\
      \gamma_{k,i}
		\end{pmatrix} \end{align*}
and
\begin{align*}  \phi^g(z; \vec{\beta}_i) = \frac{1}{1- \big(\prod_{j=1}^k \beta_{j,i}\big) z^{-1}} \begin{pmatrix}
		1 & \big(\prod_{j=2}^k \beta_{j,i}\big) z^{-1} & \cdots & \beta_{k,i} z^{-1} \\
		\beta_{1,i} & 1 & \cdots & \beta_{k,i} \beta_{1,i} z^{-1} \\
		\vdots & \vdots & \ddots & \vdots \\
		\prod_{j=1}^{k-1} \beta_{j,i} & \prod_{j=2}^{k-1} \beta_{j,i} & \cdots & 1 \end{pmatrix}, 
\end{align*}
with $\vec{\beta}_i = \begin{pmatrix}
		    \beta_{1,i} & \dots & \beta_{k,i}
      \end{pmatrix}^T$. Here the superscripts $b$ and $g$, stand for \emph{Bernoulli} and \emph{geometric} respectively. These matrices play a fundamental role in double-integral representations of the underlying dimer point process via the Eynard--Mehta Theorem. The reader is referred to \cite[Sect.~2]{BB23+} or \cite[Sect.~4]{DK21} for more details.

      Next consider the product $\Phi(z)= \prod_{i = 1}^\ell \big(\phi^b(z; \vec{\beta}_i) \phi^g(z; \vec{\alpha}_i, \vec{\gamma}_i)\big)$. According to \cite[Prop.~1.2]{BB23+} (see also \cite[Thm.~5.1]{KOS06}) the equation 
      \begin{align}\label{PDet}
     \prod_{i=1}^\ell (1-\beta_i^v z^{-1})\det(\Phi(z)-w I) = 0
      \end{align}
defines a Harnack curve, which is intimately related to tilings of the underlying Aztec diamond. We denote by $\cR$ the compact Riemann surface defined by \eqref{PDet}, which will be by the Harnack property an M-curve with involution $(z,w) \to (\overline{z}, \overline{w})$. Hence, assuming $\cR$ is of genus $g$, there will be $g+1$ real ovals $X_0, \dots, X_g$ containing all \emph{real} solutions $(z,w)$ of \eqref{PDet}. We choose the ovals $X_1, \dots, X_g$ to coincide with the $\mathbf{a}$-cycles of $\cR$, see Fig.~\ref{Fig:Riemann_Surface}.

\begin{figure}
    \centering
    \includegraphics[width=0.6\linewidth]{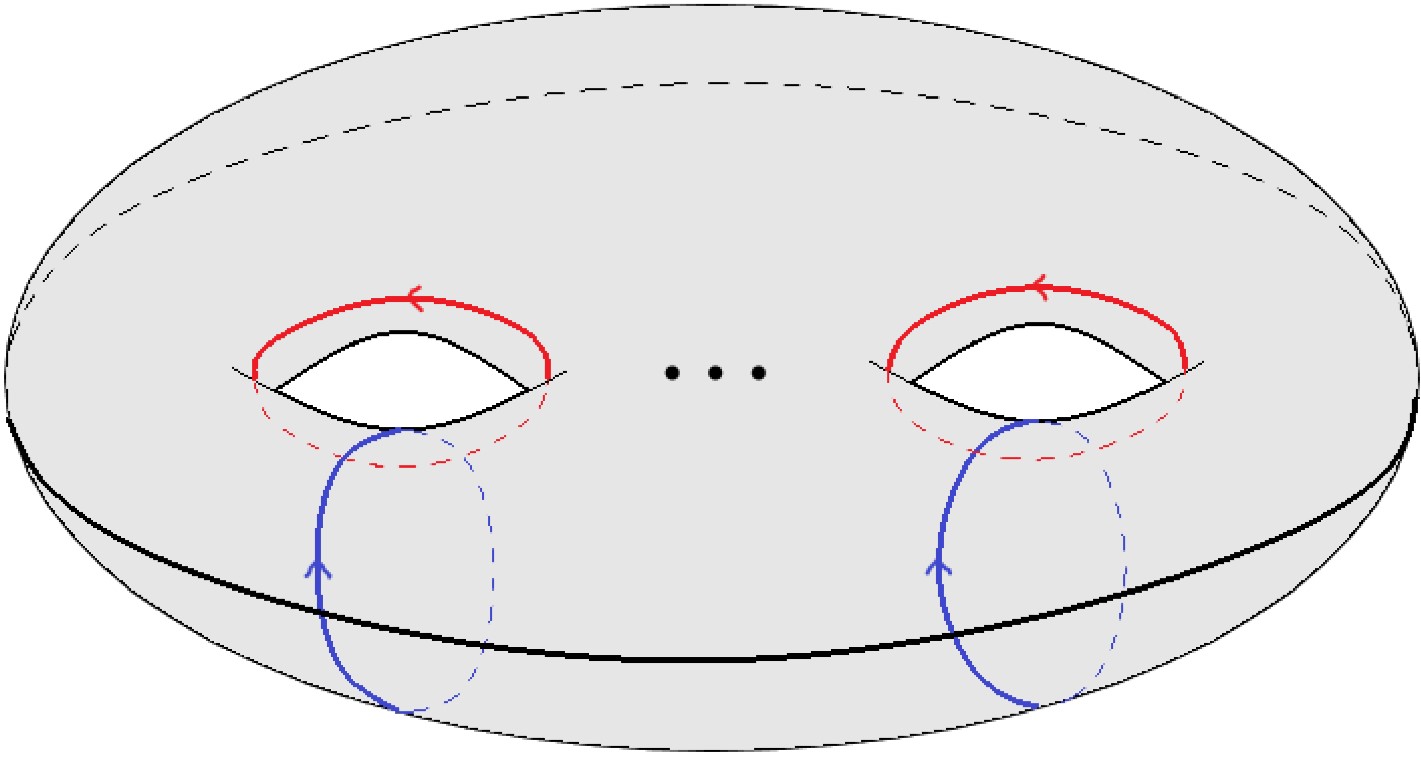}
\begin{tikzpicture}
\put(-180,73){\textcolor{red}{$\mathbf{a}_1$}};
\put(-87,74){\textcolor{red}{$\mathbf{a}_g$}};

\put(-185,30){\textcolor{blue}{$\mathbf{b}_1$}};
\put(-92,30){\textcolor{blue}{$\mathbf{b}_g$}};

\put(-117,20){$X_0$};
    \end{tikzpicture}
    \caption{The Riemann surface $\mathcal R$ with the $\mathbf{a}$-cycles in red corresponding to the real ovals $X_1, \dots, X_g$. The real oval $X_0$ is drawn in black.}
    \label{Fig:Riemann_Surface}
\end{figure}

It turns out that one of the real ovals, often referred to as the \emph{unbounded} oval, comes equipped with a special set of points. Let
\begin{align*}
    q_{0,i} = ((-1)^k \alpha_i^v/\gamma_i^v, 0), \quad q_{\infty,i} = (\beta_i^v, \infty), \quad i = 1, \dots, \ell
\end{align*}
and 
\begin{align*}
    p_{0,j} = (0, (-1)^\ell \alpha_j^h/\beta_j^h), \quad p_{\infty,j} = (\infty, \gamma_j^h), \quad j = 1, \dots, k,
\end{align*}
where $\alpha^v_i = \prod_{j=1}^k \alpha_{j,i}$, $\alpha^h_j = \prod_{i=1}^\ell \alpha_{j,i}$ ect. Then the points $q_{0,i}$, $q_{\infty, i}$, $p_{0,j}$, $p_{\infty,j}$, also called \emph{angles}, all lie on one of the real ovals, say $X_0$, and satisfy a particular order when `going around' $X_0$, see \cite[Sect.~3.2]{BB23+} and Fig.~\ref{Fig:UnboundedOval}.
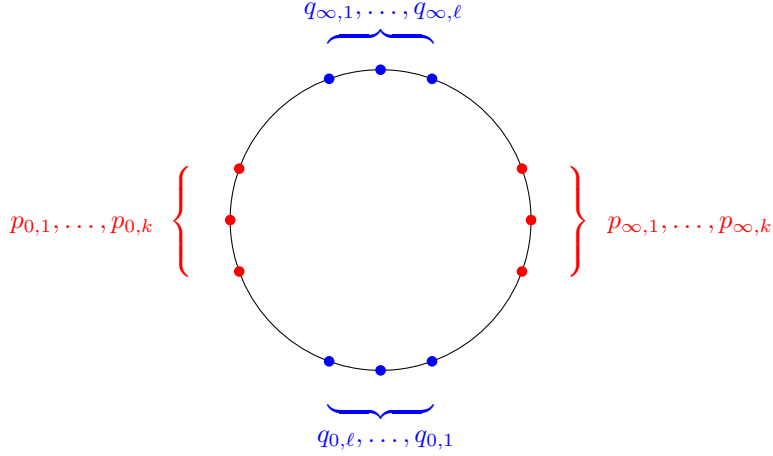
\begin{figure}
\vspace{23pt}
    \begin{tikzpicture}
  \coordinate (center) at (0,2);
  \def\radius{2cm}
  \draw (center) circle[radius=\radius];

  \fill[red] (center) ++(0:\radius) circle[radius=2pt];
  \fill[red] (center) ++(-20:\radius) circle[radius=2pt];
  \fill[red] (center) ++(20:\radius) circle[radius=2pt];
  \fill[red] (center) ++(180:\radius) circle[radius=2pt];
  \fill[red] (center) ++(200:\radius) circle[radius=2pt];
  \fill[red] (center) ++(160:\radius) circle[radius=2pt];


  \fill[blue] (center) ++(90:\radius) circle[radius=2pt];
  \fill[blue] (center) ++(70:\radius) circle[radius=2pt];
  \fill[blue] (center) ++(110:\radius) circle[radius=2pt];
  \fill[blue] (center) ++(-90:\radius) circle[radius=2pt];
  \fill[blue] (center) ++(-70:\radius) circle[radius=2pt];
  \fill[blue] (center) ++(-110:\radius) circle[radius=2pt];

\put(-20,-12){\textcolor{blue}{$\underbrace{\hspace{40pt}}_{}$}};
  \put(-24,-27){\textcolor{blue}{$q_{0, \ell}, \dots, q_{0, 1}$}};

  \put(-20,123){\textcolor{blue}{$\overbrace{\hspace{40pt}}_{}$}};
  \put(-29,135){\textcolor{blue}{$q_{\infty, 1}, \dots, q_{\infty, \ell}$}};

  \put(-80,54){\textcolor{red}{$\begin{cases}
      \\
      \\
      \\
  \end{cases}$}};

  \put(69,54){\textcolor{red}{$\begin{rcases*}
\\
\\
\\
\end{rcases*}$}};

\put(-140,54){\textcolor{red}{$p_{0,1}, \dots, p_{0,k}$}};

\put(86,54){\textcolor{red}{$p_{\infty,1}, \dots, p_{\infty,k}$}};
\end{tikzpicture}
\vspace{29pt}
    \caption{Schematic view of the placement of the angles on the unbounded oval $X_0$.}
    
    \label{Fig:UnboundedOval}
\end{figure}
In the nondegenerate case assumed in \cite{BB23+}, all $2(k+\ell)$ angles are taken to be pairwise disjoint.

Important for us is a certain special meromorphic differential $dF = \eta_0 + x_1 \eta_1 + x_2 \eta_2$ which depends affine linearly on the coordinates $(x_1, x_2) \in [-1,1]^2$ and characterizes the arctic curve. For $\eta_{1,2}$ we have to choose 
\begin{align*}
    \eta_1 = \frac{\ell}{2}\frac{dz}{z}, \qquad \eta_2 = -\frac{k}{2}\frac{dw}{w}.
\end{align*} 
Note that both differentials are real on the real ovals. The remaining meromorphic differential $\eta_0$ is defined as
\begin{align*}
    \eta_0 = -\eta_1-\eta_2 - \frac{df}{f},
\end{align*}
where $\frac{df}{f}$ denotes the unique meromorphic differential with zero periods along the $\mathbf{a}$-cycles and simple poles at $q_{0,i}$ of residue $k$ and at $p_{j,0}$ of residue $-\ell$. It can be explicitly constructed in terms of Fay's prime form on $\cR$, see \cite[Eq.~(1.5)]{BB23+}, \cite{Fay}.

Of central importance is the following property of $dF$ established in \cite{BB23+}:
\begin{lemma}{\emph{(simplified version, cf}.~\cite[Def.~4.7]{BB23+}\emph{)}}\label{LemmaZeros}
    A point $(x_1, x_2) \in [-1,1]^2$ lies on the arctic curve if and only if the meromorphic differential $dF$ has a zero of higher multiplicity. 
\end{lemma}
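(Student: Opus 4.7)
The statement is a reformulation of \cite[Def.~4.7]{BB23+}, where the arctic curve is \emph{defined} via the coalescence-of-zeros condition on $dF$. My plan is therefore to unpack this definition and verify that it matches the conventional notion of the arctic curve as the boundary between the three types of limiting Gibbs measure of \cite{KOS06}. The underlying reason $dF$ controls the limit shape is the steepest-descent analysis of the Eynard--Mehta double contour integral representation of the correlation kernel associated with the transfer matrices $\phi^b$, $\phi^g$; see \cite[Sect.~2]{BB23+}. After the rescaling $(i,j) = (Nx_1, Nx_2)$, the integrand acquires the schematic form $\exp\bigl(N[F(z;x_1,x_2)-F(w;x_1,x_2)]\bigr)$ with $dF = \eta_0 + x_1\eta_1 + x_2\eta_2$, so the critical points of the phase are precisely the zeros of $dF$ on $\cR$.

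First I would count and locate the zeros of $dF$. Using the Riemann--Roch theorem and the prescribed divisor of $dF$ (simple poles at the angles $q_{0,i}, q_{\infty,i}, p_{0,j}, p_{\infty,j}$ inherited from $df/f$, $dz/z$ and $dw/w$), one sees that the total number of zeros is fixed, and they are distributed between the real ovals $X_0, \dots, X_g$ and the complex locus $\cR \setminus \bigcup_i X_i$; since $\cR$ is an M-curve, complex zeros come in conjugate pairs under $(z,w) \mapsto (\overline z, \overline w)$. Matching this count with the three Gibbs measures of \cite{KOS06} gives the standard dictionary: in a \emph{frozen} region all zeros lie on real ovals with a counting profile forcing deterministic local statistics; in the \emph{rough} region a pair of conjugate complex zeros provides the oscillatory contributions responsible for polynomial correlation decay; in a \emph{smooth} region two additional zeros sit on one of the bounded ovals $X_i$ with $i \ge 1$, corresponding to a Gibbs measure at an interior point of the amoeba. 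As $(x_1, x_2)$ varies, the zero configuration changes at precisely those points where either a pair of complex zeros collides on a real oval (rough--to--frozen or rough--to--smooth transition), or two real zeros on the same oval merge and split off into the complex locus. In each scenario, $dF$ acquires a zero of multiplicity at least two at the transition; conversely a higher-multiplicity zero forces $(x_1, x_2)$ to sit on the boundary separating two such configurations, hence on the arctic curve.

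The main obstacle is the matching between zero configurations and Gibbs phases, since this is exactly what requires the detailed steepest-descent deformation of contours carried out in \cite[Sect.~4--6]{BB23+} and cannot be rederived compactly here. Consequently, my proof amounts to a careful translation of the relevant parts of \cite{BB23+} into the present notation, together with the observation that, for the purposes of Sections \ref{Sect:Definition}--\ref{Sect:ArcticCurves}, only the qualitative coalescence-of-zeros condition --- and not the full phase classification --- is needed.
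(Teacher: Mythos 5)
Your proposal matches the paper's treatment: the paper states this lemma as a cited result established in \cite{BB23+} (specifically Definition 4.7 there) and offers no independent proof, only the same informal steepest-descent explanation you give --- zeros of $dF$ as saddle points of the phase function parametrizing the local Gibbs measure, with a double zero signalling a transition between rough and frozen/smooth. You correctly recognize that the substance lies in the contour-deformation analysis of \cite{BB23+} and cannot be compactly rederived, which is exactly the stance the paper takes.
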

\begin{remark}\label{Rmk:Other_Models}
The main obstacle in obtaining algebraic equations for arctic curves in a more general class of dimer models (e.g.~the ones considered in \cite{ADPZ}) is precisely the lack of Lemma \ref{LemmaZeros} in these cases. While we know that the above characterization holds  for the general (quasi-)periodic Aztec diamond and certain (quasi-)periodic models of the hexagon, it is presently not clear how to obtain it in general. However, if an analogue of Lemma \ref{LemmaZeros} can be found for some larger class of models, then the equation for the corresponding arctic curves can likely be obtained just as in Section \ref{Sect:ArcAzt}.
\end{remark}
The reader should note that Lemma \ref{LemmaZeros} is a very simplified version of the characterization of the arctic curve in terms of the zeros of $dF$, however completely sufficient for our purposes. Before we continue let us say a few words about the role of $dF$ in the paper \cite{BB23+}. The zeros of $dF$ are the saddle points of a multi-valued phase function $F$ on the Riemann surface $\mathcal R$ that appears in the steepest descent analysis of the correlation kernel for the underlying dimer point process. As such, the location of zeros of $dF$ as a function of $(x_1, x_2) \in [-1,1]^2$ parametrizes the local translation invariant Gibbs measure at that point. A double zero of $dF$ then corresponds exactly to a transition between the rough and frozen/smooth regions (i.e.~a transition in the correlation decay rate of the Gibbs measure), explaining the characterization of arctic curves found in Lemma \ref{LemmaZeros}. The type of transition is again determined by the location of the double zeros; the reader is referred to \cite[Def.~4.7]{BB23+} for more details. Similar descriptions of dimer models in terms of the location of saddle points/zeros of meromorphic differentials have appeared in \cite{B21}, \cite{BD23}, \cite{CDKL20}, \cite{DK21}, and more recently in \cite{BB24} for quasi-periodic models.

We emphasize that the phase function $F$ (and hence the meromorphic differential $dF$) is explicitly given in terms of the parameters of the $k \times \ell$-periodic model treated in \cite{BB23+}. However, important to us are only the following two facts that hold in the nondegenerate case:
\begin{itemize}
    \item the Riemann surface $\mathcal R$ has genus $g = (k-1)(\ell-1)$;
    \item the meromorphic differential $dF$ has $2(k+\ell)$  simple poles.
\end{itemize}
Conversely, a $k \times \ell$-periodic model is nondegenerate if and only if the above conditions hold (for generic weights the model will be nondegenerate). As we will see in Section \ref{Sect:Construction}, this is all the information required to determine the degree of the arctic curve.

In the following sections we will denote the number of poles of $\be$ ($= dF$ in the dimer case) by $\p$. Similarly we will denote the number of zeros of $\be$ by $\z = \# p + 2g -2$. Geometrically $\p$ is the number of frozen regions and the genus $g$ is the number of smooth regions. As a consequence of Lemma \ref{LemmaZeros} and Theorem \ref{mainTheorem} we obtain:
\begin{theorem}\label{Thm:Aztec}
Consider a nondegenerate $k \times \ell$-periodic model of the Aztec diamond and let $\mathcal A\mathcal C$ be its arctic curve. Then $\AC$ is the real part of an algebraic curve of degree $6k\ell - 2(k+\ell) = 6g + 2 \cdot \# p -6$. In particular, formula \ref{DegFormula} holds. 
\end{theorem}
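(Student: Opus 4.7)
The plan is to assemble three ingredients stated earlier in the excerpt: Lemma \ref{LemmaZeros}, which characterizes the arctic curve as the locus of $(x_1,x_2) \in [-1,1]^2$ where the meromorphic differential $\be = \eta_0 + x_1\eta_1 + x_2\eta_2$ has a zero of multiplicity at least two; the construction of the $\be$-discriminant $\Delta_{\be}$ (Definition \ref{DefDiscr}), whose vanishing detects precisely this coalescence of zeros; and Theorem \ref{mainTheorem} together with Lemma \ref{LemIrr}, which respectively give a closed formula for $\Delta_{\be}$ via Riemann theta functions and establish its irreducibility. Once these pieces are combined, the arctic curve is cut out, as a subset of the real slice $[-1,1]^2$ of projective coordinate space, by the single irreducible equation $\Delta_{\be}(1,x_1,x_2) = 0$.

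First I would specialise the definition of $\Delta_{\be}$ to the linear family $\be = x_0\eta_0 + x_1\eta_1 + x_2\eta_2$ determined by the three differentials described in Section \ref{SectModel}. Lemma \ref{LemmaZeros} then equates the arctic curve $\AC$ with the affine slice $\{x_0=1\}\cap \{\Delta_{\be}=0\}$. Next I would read the degree of $\Delta_{\be}$ off Theorem \ref{mainTheorem}: in the genus-$g$ setting, the formula will express $\Delta_{\be}$ as a product of contributions — one factor per pole of $\be$ coming from the local expansion, together with a global theta-function piece on the Jacobian of $\mathcal R$ — whose total homogeneous degree in $(x_0,x_1,x_2)$ is
\begin{align*}
    d \;=\; 6g + 2\,\p - 6.
\end{align*}
Substituting the nondegeneracy data $g=(k-1)(\ell-1)$ and $\p=2(k+\ell)$ yields
\begin{align*}
    d \;=\; 6(k-1)(\ell-1) + 4(k+\ell) - 6 \;=\; 6k\ell - 2(k+\ell),
\end{align*}
which is the stated degree. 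Because the geometric interpretation gives $\#\text{smooth regions}=g$ and $\#\text{frozen regions}=\p$, the same identity is exactly formula \eqref{DegFormula}.

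To upgrade this from ``cutting polynomial'' to ``degree of the arctic curve as an algebraic curve'', I would apply Lemma \ref{LemIrr} to rule out spurious factors: it guarantees that $\Delta_{\be}$ is irreducible in the generic (nondegenerate) case, so no degree is lost on specialising $x_0=1$, and the arctic curve is the real part of a single irreducible algebraic curve of degree $d$.

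The main obstacle I expect is the degree count from Theorem \ref{mainTheorem}: one has to verify carefully that the theta-function and prime-form factors in the higher-genus construction contribute exactly $6g-6$ combined, while each of the $\p$ simple poles contributes a fixed amount ($2$) to the overall homogeneous degree, independently of the genus. A subsidiary technical point is ensuring that the irreducibility from Lemma \ref{LemIrr} survives the real affine restriction $x_0=1$, which should follow from the position of the angles on the unbounded oval $X_0$ described in Fig.~\ref{Fig:UnboundedOval}; once both of these are in place the remainder of the argument is pure bookkeeping.
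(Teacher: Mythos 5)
Your proposal follows the same overall route as the paper's proof (which proceeds through Theorem \ref{Thm:Aztec2}): combine Lemma \ref{LemmaZeros} with Theorem \ref{mainTheorem} to identify the arctic curve with the real zero set of $\Delta_{\be}(1,x_1,x_2)$, and use Lemma \ref{LemIrr} to control the degree. But there is a genuine gap in how you invoke Lemma \ref{LemIrr}. You write that it ``guarantees that $\Delta_{\be}$ is irreducible,'' whereas the lemma only shows that $\Delta_{\be}$ is a \emph{power} of an irreducible homogeneous polynomial. If the exponent were $m > 1$, the arctic curve would be cut out by a polynomial of degree $(6g + 2\,\p - 6)/m$, and the claimed degree formula would be wrong. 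The paper closes this gap with an extra observation you are missing: away from the cusp points of the arctic curve, the coalescence of zeros of $\be$ is \emph{simple}, so $\Delta_{\be}(1,x_1,x_2)$ vanishes to first order along $\AC$; that forces the exponent to be $1$, hence $\Delta_{\be}$ is itself irreducible.

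Two smaller points. First, the passage from the homogeneous degree of $\Delta_{\be}$ to the degree of the affine polynomial $A(x_1,x_2) = \Delta_{\be}(1,x_1,x_2)$ is handled in the paper by noting that if the degree dropped, a power of $x_0$ would divide $\Delta_{\be}$, contradicting irreducibility; your appeal to the position of the angles on the oval $X_0$ is not the relevant mechanism here (the angle placement is what goes into verifying Hypothesis \ref{HypothesisX}, not into preserving the degree under the slice $x_0=1$). Second, the degree count from Theorem \ref{mainTheorem} is an immediate substitution $2g - 2 + 2\cdot\#z$ with $\#z = \#p + 2g-2$; no case-by-case accounting of theta-function versus pole contributions is actually needed.
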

\begin{proof}
    See Theorem \ref{Thm:Aztec2}. 
\end{proof}
\subsection{Other dimer models}
As we mentioned above, Theorem \ref{Thm:Aztec} relies on the recent work of Berggren and Borodin \cite{BB23+} characterizing arctic curves of $k \times \ell$-periodic models of Aztec diamonds in terms of a coalescence of zeros condition for a meromorphic differential depending affine linearly on the coordinates $x_1, x_2$ (see Lem.~\ref{LemmaZeros}). It is however expected, and oftentimes proven, that such a characterization also holds for more general dimer models with (quasi-)periodic weights. Recently, the general case of quasi-periodic weighted models of the Aztec diamond and some quasi-periodic models of the hexagon have been studied in \cite{BB24} and \cite{BBS24}. Crucially, conditions analogous to Lemma \ref{LemmaZeros} have been established, and we provide the corresponding results on arctic curves, together with a short description of the models, in Section \ref{Sect:ArcticCurves}.

\subsection{Notion of a discriminant on a compact Riemann surface}
The preceding discussion naturally leads to the notion of a discriminant on a compact Riemann surface. Consider a domain $\Omega \subset \R^2$ which coincides with the scaled large size limit of a dimer model, e.g.~in the case of the Aztec diamond we have $\Omega = [-1,1]^2$, see Fig.~\ref{Fig:kxl}. 
\begin{figure}[ht]
    \centering
    
  \includegraphics[width=0.5\linewidth]{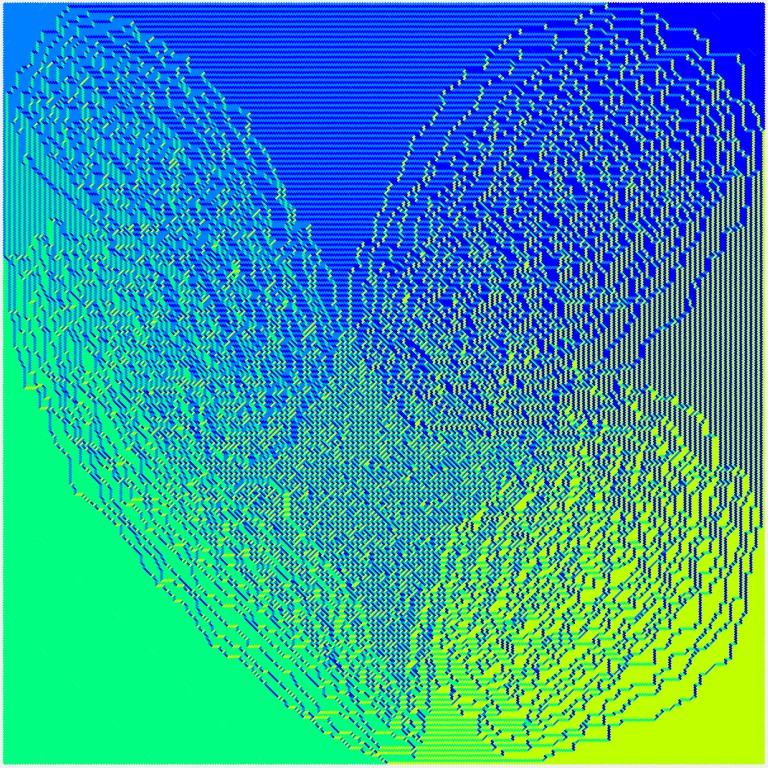}

\captionof{figure}{A random tiling of the rotated Aztec diamond with nondegenerate $2\times2$-periodic weights. There is one smooth region in the middle and eight frozen regions, hence according to \eqref{DegFormula} the arctic curve is of degree $6 \times 1 + 2 \times 8 - 6= 16$. Note that the tiling fills out approximately a square region $\Omega$ (image generated using code kindly provided by Christophe Charlier).}
  \label{Fig:kxl}
\end{figure}
Analogously to \cite{BB23+} we associate to each $(x_1, x_2) \in \Omega$ a meromorphic differential of the form
\begin{align*}
    \be(x_1, x_2) = \eta_0 + x_1\eta_1 + x_2 \eta_2, \qquad (x_1, x_2) \in \Omega,
\end{align*}
where $\eta_j$, $j =0,1,2$ are fixed meromorphic differentials on some compact Riemann surface $\mathcal R$ of genus $g$ (here the choice of $\eta_j$ and $\cR$ depends on the dimer model). We denote the arctic curve that appears in the scaling limit by $\AC \subset \Omega$. Furthermore, we assume the analog of Lemma \ref{LemmaZeros} which we now turn into a hypothesis.
\begin{hypothesis}
    A point $(x_1, x_2) \in \Omega$ lies on the arctic curve $\AC$ if and only if the meromorphic differential $\be$ has a zero of higher multiplicity.
\end{hypothesis}
Thus, it becomes clear that obtaining an algebraic equation for the arctic curve is equivalent to finding a `discriminant-type expression' for $\be(x_1, x_2)$, that is a polynomial in $x_1, x_2$ which vanishes if and only if $\be(x_1, x_2)$ has zeros of higher multiplicity. We will refer to this object as the $\be$-discriminant.  

As we show in Section \ref{SectGenus0}, if the genus of $\mathcal R$ is zero (i.e.~no smooth regions are present), then the $\be$-discriminant can be found easily in terms of the usual discriminant of a polynomial. As a reminder, for a given polynomial $\mathcal P(z) = \sum_{k=0}^n c_k z^k$ with $c_n \not = 0$, its discriminant $\Delta_n(\mathcal P)$ is given by 
\begin{align}\label{Def:Delta_n}
    \Delta_n(\mathcal P) = (-1)^{\frac{n(n-1)}{2}} c_n^{2n-2} \prod_{i \not = j} (z_i - z_j), \qquad z_1, \dots, z_n \text{ are zeros of } P. 
\end{align}
It turns out that for a fixed degree $n$, the discriminant $\Delta_n$ is a homogeneous polynomial of degree $2n-2$ in the coefficients $c_0, \dots c_n$. The main difficulty in generalizing this notion to the higher genus case is the factor $(z_i-z_j)$ in \eqref{Def:Delta_n}, which does not make sense if $z_i, z_j$ are points on an arbitrary compact Riemann surface.

\medskip

We end this section with the following remark explaining why a common approach to essentially circumvent the $\be$-discriminant (by using the classical discriminant instead) becomes increasingly challenging for more complicated dimer models.
\begin{remark}
     As the Riemann surface $\mathcal R$ is given as a Harnack curve via \eqref{PDet}, it comes naturally equipped with a sheet structure. In particular, the resulting projection $\pi \colon \cR \to \mathbb C$ can be used in simple models to determine the algebraic equation for the arctic curve with the help of the \emph{classical} discriminant of a polynomial, see e.g.~\cite[Sect.~A.3]{BD23}, \cite[Sect.~6.3]{DK21}. This is achieved by projecting the zeros of $dF$ via $\pi$ to $\mathbb C$ and taking the discriminant of the polynomial which vanishes at the projected zeros. This discriminant will vanish whenever two zeros coalesce on the original Riemann surface, but also when two distinct zeros are mapped via $\pi$ to the same point in $\C$. Hence, in general the discriminant will have a nontrivial factorization, and only one of the factors defines the arctic curve.

     This approach, if successful, provides an explicit formula for the algebraic equation of the arctic curve in terms of the model parameters. The main drawback is that already for simple models the explicit formulas for the discriminants become very large and one has to rely on computer software to factorize them \cite[Sect.~A.3]{BD23} (explicit equations for arctic curves of models with multiple smooth phases can be found in \cite[Appendix]{DiFS14}). Hence, it is not clear how to extend this method to the general $k \times \ell$-periodic case. In fact, one of the main motivations to introduce $\be$-discriminants is precisely to avoid the aformentioned factorization by working directly on the Riemann surface $\mathcal R$ instead of $\C$.
 \end{remark}

\section{Definition of \texorpdfstring{$\be$}{}-discriminants}\label{Sect:Definition}
In the following it will be natural to embed the domain $\Omega$ of the large-size limit of the dimer model into the complex projective space via
\begin{align*}
    \Omega \simeq \lbrace [1 : x_1 : x_2] \in \C P^2 \colon (x_1, x_2) \in \Omega \rbrace \subset \C P^2.
\end{align*}

For the remainder of the paper we will work in a more general setting by replacing $\C P^2$ with $\C P^n$, $n \geq 1$. Assume we are given a tuple $\vec{\eta} = (\eta_0, \dots, \eta_n)$ of meromorphic sections of some line bundle $\mathcal L \to \cR$ and consider the linear form
\begin{align*}
    \be(\vec{x}) = x_0 \eta_0 + \dots + x_n \eta_n = \langle \vec{x}, \vec{\eta} \rangle.
\end{align*}
The pole divisor $\mathsf{P}$ of $\be$ is then defined as follows:
\begin{definition}
    For a given tuple $\vec{\eta} = (\eta_0, \dots, \eta_n)$ the pole divisor $\sfP$ of $\be$ is given by
    \begin{align*}
        \mathsf{P} = \sum_{P \text{ is a pole of some } \eta_j} P.
    \end{align*}
\end{definition}
Note that even if an element $P \in \mathcal R$ is a pole of say $\eta_j, \eta_k$, it will still appear only once in the above sum. In case of higher order poles, the $P$ would appear with multiplicity $m$, where $m = \max_{j = 0, \dots, n}\lbrace \text{pole order of } \eta_j \text{ at } P\rbrace$.\footnote{By tensoring the line bundle $\mathcal L$ with the line bundle $[\mathsf{P}]$ one can essentially remove all poles and assume w.l.o.g that the $\eta_j$ are holomorphic sections, see Remark \ref{Rmk:Bundles}. However, we do not see any advantage in doing so.} 

Intuitively, $\mathsf{P}$ is just the pole divisor of $\be(\vec{x})$ for generic $\vec{x}$. We also denote by $\z$ the generic number of zeros of $\be$. Note that we have $\z = \p + \deg \mathcal L$ and in the special case of the canonical line bundle  $\z = \p + 2g-2$. The zero mapping $\mathsf{Z}$ associated to $\be$ is then defined via
\begin{align}\label{DefZ}
    \mathsf{Z} \colon \C P^n \to \Sym^{\z} \cR, \quad \mathbf x = [x_0 : \dots : x_n] \to \Div\big(x_0 \eta_0 + \dots + x_n \eta_n\big) + \sfP. 
\end{align}
Under Hypothesis \ref{HypoEta} below, $\mathsf{Z}$ is well-defined. Note that here we choose a representative $\vec{x} = (x_0, \dots, x_n) \in \C^{n+1}\setminus \lbrace 0 \rbrace$ for the element $\bx = [x_0 \colon \dots \colon x_n]$, but the choice does not matter. It is important to remember that elements of the divisor $\mathsf{Z}(\bx)$ are not necessary actual zeros of $\be(\vec{x})$, but could also coincide with the location of a pole in $\sfP$, whenever the order of the pole drops due to cancellation. We shall nonetheless refer to the points of the divisor $\mathsf{Z}(\bx)$ as the \emph{zeros of $\be(\vec{x})$}. Moreover, note that we do allow that all $\eta_j$ have a common zero. We call such zeros \emph{stationary zeros} of $\be(\vec{x})$, as they do not depend on $\bx \in \C P^n$. In fact, we only impose the following restrictions on the choice of $\be$:
\begin{hypothesis}\label{HypoEta}
    Assume that the meromorphic sections $\eta_0, \dots, \eta_n$ are linearly independent over $\C$ and do not have a common zero of higher multiplicity. 
\end{hypothesis}
The linear independence implies that $\mathsf{Z}$ is well-defined, as otherwise $\be(\vec{x})$ could vanish identically for certain values of $\bx \in \C P^n$. The property that the $\eta_j$ do not have common zeros of higher multiplicity is postulated, as otherwise the $\be$-discriminant would vanish identically.
\begin{remark}
We argue in Section \ref{Sect:Generealizations} that stationary zeros can be removed (or added) from the definition of $\mathsf{Z}$ without altering the construction of the underlying discriminant. Stationary zeros need to be removed from $\mathsf{Z}$ when computing arctic curves for quasi-periodic models of the hexagon with ramified covers, see Section \ref{SectHex}.
\end{remark}

\begin{remark} The image $\mathsf{Z}(\C P^n)$, in case no stationary zeros are present, is sometimes denoted by $|\be |$ and called \emph{linear system}, cf.~ \cite[Ch.~V.4]{Miranda}. This notion is used in the study of holomorphic mappings from Riemann surfaces into projective spaces, see e.g.~\cite[Prop.~4.15]{Miranda}. 
\end{remark}

The following definition generalizes the concept of an arctic curve.
\begin{definition}\label{DefAHS}
   For given meromorphic sections $\eta_1, \dots, \eta_n$ satisfying Hypothesis \ref{HypoEta}, we define the \emph{arctic hypersurface} as follows:
   \begin{align}\label{AHS}
       \mathcal A \mathcal H \mathcal S = \lbrace \bx \in \C P^n \colon \mathsf{Z}(\bx) \ \text{has a higher order zero} \rbrace.
   \end{align}
\end{definition}
Take now any metric $\dist \colon \cR \times \cR \to \R_{0 \leq }$ that is compatible with the complex structure on $\cR$, that is for a local chart $(\zeta, V)$ of $\cR$ we have that
\begin{align}\label{DefDist}
    \dist(R,S) \propto |\zeta(R)-\zeta(S)|, \qquad R,S \in V.
\end{align}
In other words, $\dist(R,S)/|\zeta(R)-\zeta(S)|$ defines a nonzero continuous function on $V \times V$. Note that this a local requirement in the sense that only the behavior of $\dist(R, S)$ for $R, S$ being close together is relevant. It also does not depend on the choice of the chart $(\zeta, V)$. One explicit example of such a metric in the genus $g \geq 1$ case is given by mapping $R, S \in \mathcal R$ via the Abel map to the Jacobian variety, and considering there the (complex) Euclidean distance (here one has to use that the Abel map is of maximal rank $1$ \cite[Lem.~7.8]{Schlag}).

Having chosen a metric $\dist( \cdot \, , \cdot)$, we can now state the main definition of the present paper (cf.~\cite[Ch.~9]{GKZ94}).

\begin{definition}[$\be$-discriminant]\label{DefDiscr}
    Let $\be = x_0\eta_0 + \dots + x_n\eta_n$, where the meromorphic sections $\eta_0, \dots, \eta_n$ satisfy Hypothesis \ref{HypoEta}, and let $\mathsf{Z}$ be given as in \eqref{DefZ}. Then we say that a homogeneous polynomial $H$ of degree $d$ in the variables $x_0, \dots, x_n$ is an \emph{$\be$-discriminant} if and only if \begin{align}\label{MainDef}
        |(x_0, \dots, x_n)|^{-d}|H(x_0, \dots, x_n)| \propto \prod_{i,j = 1, i \not = j}^{\# z} \dist(Z_i, Z_j),
    \end{align}
    for all $\bx = [x_0 \colon \dots \colon x_n] \in \C P^n$. In other words, the ratio of the two sides defines a continuous positive function on $\C P^n$. 
\end{definition}
Throughout this work we will denote $\be$-discriminants by $\Delta_{\be}$. It is clear that the (projective) zero set of $\Delta_{\be}$ coincides with the arctic hypersurface $\mathcal A \mathcal H \mathcal S$ from Definition \ref{DefAHS}.

Again we emphasize that the above definition does not depend on the choice of the metric $\dist( \cdot \, , \cdot)$. In fact, \eqref{MainDef} is in essence a local condition and can be alternatively stated using only local charts (though such a formulation would be more cumbersome). In the present paper we choose to work for simplicity with $\dist( \cdot \, , \cdot)$ to streamline certain arguments. 

Note that if it exists, an $\be$-discriminant is unique up to constant multiples. In fact, if $\Delta_{\be}$, $\widetilde \Delta_{\be}$ are both $\be$-discriminants (of possibly different degrees), then their ratio would be a meromorphic function on $\C^{n+1}$ with no zeros/poles except possibly at $x_0 = \dots = x_n = 0$. However, this is not possible for $n \geq 1$ as zeros/singularities cannot be isolated points of holomorphic functions in higher dimensions. In particular, it follows that the degree $d$ depends only on $\be$.

\medskip 

Before we continue with the \emph{construction}, let us briefly comment on the \emph{existence} of $\be$-discriminants. For this note that condition \eqref{MainDef} can be formulated in terms of pairwise disjoint local charts $(\zeta_k, V_k)$, $k = 1, \dots, M$ on $\mathcal R$, containing all the zeros $Z_j$, and substituting for $\dist(Z_i, Z_j)$ the expression $\zeta_k(Z_i) - \zeta_k(Z_j)$ provided $Z_i, Z_j$ are in the domain of the same chart $(\zeta_k, V_k)$ (if they are not, we can replace $\dist(Z_i, Z_j)$ simply by $1$). The resulting right-hand side would then be locally an analytic function of the variables $x_0, \dots, x_n$. This shows that locally the vanishing of an $\be$-discriminant (i.e.~coalescence of zeros) can be expressed as the vanishing of an analytic function. Hence Chow's Theorem \cite{Chow}, which states that closed \emph{analytic} varieties of $\C P^n$ are automatically \emph{algebraic}, implies that there exists a homogeneous polynomial in $x_0, \dots, x_n$ which vanishes if and only if two or more zeros of $\be(\vec{x})$ coalesce. One then has to only take care of possibly higher multiplicities of zeros to guarantee \eqref{MainDef}. As we will construct $\be$-discriminants in the subsequent section, we will not give more details regarding this type of existence argument. 

\section{Construction of \texorpdfstring{$\be$}{}-discriminants}\label{Sect:Construction}
The goal of this section is to generalize the construction of discriminants to the setting of arbitrary meromorphic sections of line bundles over $\cR$. We start with the familiar notion of a discriminant of a polynomial (the genus $0$ case) and continue from there. 
\subsection{Genus \texorpdfstring{$0$}{} case}\label{SectGenus0}
Let us begin with the case $\cR = \C P^1$. Assume we are given meromorphic sections $\eta_0, \dots \eta_n$ of some line bundle $\mathcal L \to \C P^1$ satisfying Hypothesis \ref{HypoEta}. Then in any chart of the form $(z, \C)$ excluding the point at infinity, the $\eta_j$ are represented by rational functions in $z$. Thus, $\be(\vec{x})$ can be written as $\big[\sum_{j =0}^n x_j q_j(z)\big]/r(z)$ for some polynomials $r(z), q_0(z), \dots, q_n(z)$ which do not have a common factor. Without loss of generality we can assume that our chart is chosen such that $\infty$ is not a common zero of all $\eta_j$. In other words, $\infty$ is not a stationary zero of $\be(\vec{x})$. Then it follows that the pole divisor $\sfP$ is given by the zeros of $r$ in $\C$ together with a nonnegative multiple of infinity. More precisely, we have
\begin{align*}
    \sfP = \text{Zeros}_\C(r) + [\deg(\vec{q}) - \deg(r) - \deg(\mathcal L)] \cdot \infty,
\end{align*}
where $\deg(\vec{q}) = \max_{j = 0, \dots, n} \lbrace \deg(q_j) \rbrace$. Note that $\deg(\vec{q}) - \deg(r) - \deg(\mathcal L) \geq 0$, as otherwise $\be(\vec{x})$ would have a stationary zero at infinity.

The divisor of $\be$ is given by
\begin{align*}
    \text{Divisor}\big(x_0 \eta_0 + \dots + x_n \eta_n\big) = \text{Zeros}_\C\big(x_0 q_0 + \dots + x_n q_n\big) - \text{Zeros}_\C(r) - m\cdot \infty, 
\end{align*}
where $m = m(\vec{x}) = \deg\big(x_0 q_0 + \dots x_n q_n \big) - \deg(r) - \deg(\mathcal L)$. Thus we obtain
\begin{align*}
    \mathsf{Z}(\bx) = \text{Zeros}_\C\big(x_0 q_0 + \dots + x_n q_n\big) + \big[\deg(\vec{q}) - \deg\big(x_0 q_0 + \dots + x_n q_n \big)\big]\cdot \infty, 
\end{align*}
In particular, $\# z = \deg(\vec{q})$. We see that (possibly multiple) zeros escape to infinity exactly when the degree of $x_0 q_0 + \dots + x_n q_n$ becomes lower than the maximal degree $\deg(\vec{q})$ of the $q_j$. It is also clear that the choice of holomorphic line bundle $\mathcal L$ plays no significant role for the zero map $\mathsf{Z}(\bx)$, and this remains true in the higher genus case as well, cf.~Remark \ref{Rmk:Bundles}.  

To define the $\be$-discriminant, let us denote by $\Delta_{N} \colon \C^{N+1} \to \C$ the usual discriminant of polynomials of degree $N$. Then $\Delta_{N}$ is a homogeneous polynomial of degree $2N-2$ and for a polynomial $\mathcal P(z) = \sum_{k=0}^{N} c_k z^k$ with $c_{N} \not = 0$ it satisfies
\begin{align*}
    \Delta_{N}(\mathcal P) = \Delta_{N}(c_0, \dots, c_{N}) = (-1)^{\frac{N(N+1)}{2}}c_{N}^{2N-2} \prod_{i, j = 1, \, i \not = j}^{N} (z_i-z_j), 
\end{align*}
where $z_1, \dots, z_{N}$ are the zeros of $\mathcal P(z)$. 

Let us rewrite $\sum_{j = 0}^n x_j q_j(z) = \sum_{k = 0}^{\z} c_k(\vec{x}) z^k = \mathcal Q(z; \vec{x})$ with $c_k(\vec{x}) = \sum_{j = 0}^n x_j q_{j, k}$, where $q_{j,k}$ is the $k$-th coefficient of $q_j(z)$ (which is zero if $\deg q_j < k$). We claim the following.
\begin{proposition}
    Let $\be = x_0\eta_0 + \dots + x_n\eta_n$, with $\eta_0, \dots, \eta_n$ satisfying Hypothesis \ref{HypoEta}. If $\mathcal R$ is of genus $0$, then the $\be$-discriminant is given by the homogeneous polynomial $\Delta_{\z}(c_0(\vec{x}), \dots, c_{\z}(\vec{x}))$:
    \begin{align}\label{Deltaz}
        \Delta_{\be}(\vec{x}) = \Delta_{\z}(c_0(\vec{x}), \dots, c_{\z}(\vec{x})).
    \end{align}
\end{proposition}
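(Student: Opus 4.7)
The plan is to verify directly the two defining conditions of Definition \ref{DefDiscr} for $\Delta_\be(\vec{x}) := \Delta_{\z}(c_0(\vec{x}),\dots,c_{\z}(\vec{x}))$: that it is a homogeneous polynomial of degree $2\z-2$ in $\vec{x}$, and that it satisfies the proportionality \eqref{MainDef}. Homogeneity is immediate, since each $c_k(\vec{x})=\sum_j x_j q_{j,k}$ is linear in $\vec{x}$ and $\Delta_{\z}$ is classically homogeneous of degree $2\z-2$ in its $\z+1$ coefficient arguments. The substance of the argument is the proportionality, which I would establish locally at each $\vec{x}^{(0)}\in\C P^n$ via a case analysis on whether any zeros of $\be(\vec{x}^{(0)})$ sit at infinity.

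In the generic case $c_{\z}(\vec{x}^{(0)})\neq 0$, all $\z$ zeros $z_i(\vec{x})$ of $\mathcal Q(z;\vec{x})$ remain in a bounded region of $\C$ on a neighborhood of $\vec{x}^{(0)}$. The classical formula \eqref{Def:Delta_n},
\[ \Delta_{\z}(c_0(\vec{x}),\dots,c_{\z}(\vec{x})) = (-1)^{\z(\z-1)/2} c_{\z}(\vec{x})^{2\z-2} \prod_{i\neq j}(z_i(\vec{x})-z_j(\vec{x})), \]
combined with $\dist(Z_i,Z_j)\asymp|z_i-z_j|$ from \eqref{DefDist} in the chart $(z,\C)$, reduces the verification of \eqref{MainDef} to checking that $|c_{\z}(\vec{x})|/|\vec{x}|$ is bounded away from zero and infinity near $\vec{x}^{(0)}$. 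This is immediate because $c_{\z}$ is a non-vanishing linear form there.

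The delicate case is when $c_{\z}(\vec{x}^{(0)})=0$, so that one or more zeros of $\be(\vec{x}^{(0)})$ escape to $\infty\in\C P^1$; the root-product formula then degenerates to an indeterminate form and cannot be used directly. I would resolve this by exploiting the $SL_2(\C)$-covariance of the classical discriminant on $\C P^1$. Concretely, the inversion $z\mapsto 1/z$ sends $\mathcal Q(z;\vec{x})$ to $\widetilde{\mathcal Q}(w;\vec{x})=w^{\z}\mathcal Q(1/w;\vec{x})$, whose coefficient sequence is the reverse of $(c_0,\dots,c_{\z})$, and a short direct calculation yields the coefficient-reversal identity $\Delta_{\z}(c_0,\dots,c_{\z}) = \Delta_{\z}(c_{\z},\dots,c_0)$. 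Hence, if $c_0(\vec{x}^{(0)})\neq 0$, the previous paragraph applies verbatim to $\widetilde{\mathcal Q}$ in the chart $(w,\C)$ at infinity, where the zeros near $\infty$ reappear as bounded zeros $w_i=1/z_i$; and if both $c_0(\vec{x}^{(0)})=c_{\z}(\vec{x}^{(0)})=0$, a preliminary Möbius transformation on $\C P^1$, which exists since not all $c_k(\vec{x}^{(0)})$ can vanish by Hypothesis \ref{HypoEta}, moves a non-vanishing coefficient into leading position and reduces to one of the two previous subcases.

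The main obstacle is exactly this degeneration at infinity: naively the formula $c_{\z}^{2\z-2}\prod(z_i-z_j)$ appears to vanish whenever a zero escapes, yet the polynomial $\Delta_{\z}(c_0(\vec{x}),\dots,c_{\z}(\vec{x}))$ must stay nonzero as long as the $\z$ zeros on $\C P^1$ remain distinct. The $SL_2(\C)$-invariance supplies the correct bookkeeping by allowing a chart change that brings every such zero into a bounded region. Once the local proportionality has been checked in each case, a compactness argument on $\C P^n$ upgrades the pointwise estimates to the global continuous positive ratio demanded by \eqref{MainDef}, completing the proof.
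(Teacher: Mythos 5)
Your proof is correct and follows essentially the same route as the paper's: verify the proportionality \eqref{MainDef} locally, directly when $c_{\#z}\neq 0$, and otherwise reduce to that case by a Möbius change of chart exploiting the translation and coefficient-reversal (reciprocal polynomial) invariance of the classical discriminant. The paper organizes the degenerate case slightly differently — first translating so that the constant term is nonzero and then applying the reciprocal polynomial in one step, rather than your two-subcase split — but the underlying mechanism is identical.
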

In particular, the $\be$-discriminant $\Delta_{\be}$ is a homogeneous polynomial of degree $2 \cdot \z -2$ where $\z = \deg(\sfP) + \deg(\mathcal L)$.
\begin{proof}
Let us first define the metric $\dist(\cdot \, , \, \cdot)$. One can use for example the identification $\C P^1 \sim S^2$, and define the metric via the geodesic distance on $S^2$. The exact choice is not relevant as long as \eqref{DefDist} is satisfied. 

Consider a fixed $\bx \in \C P^n$. We will show that $\Delta_{\be}$ defined in \eqref{Deltaz} satisfies the discriminant property \eqref{MainDef} in a neighbourhood of $\bx$. First note that the polynomial discriminant satisfies $\Delta_N(\mathcal P(z - w)) = \Delta_N(\mathcal P(z))$ and, provided $\mathcal P(0) \not = 0$, $\Delta_{N}(\mathcal P^*(z)) = \Delta_N(\mathcal P(z))$ where $\mathcal P^*(z) = z^N \mathcal P(z^{-1})$ is the reciprocal polynomial. Thus in case $\infty$ is a zero of $\be(\vec{x})$ (that is $\mathcal Q(z; \vec{x})$ satisfies $\deg(\mathcal Q(\, \cdot \,; \vec{x})) < \z = \deg(\vec{q})$), we can first translate by $w \not = z_j$ in case $\mathcal Q(0; \vec{x}) = 0$ and then consider instead the reciprocal polynomial:
\begin{align*}
    \Delta_{\z}(\mathcal Q(z; \vec{x})) = \Delta_{\z}(\mathcal Q^*_w(z; \vec{x})) = (-1)^{\frac{\z(\z+1)}{2}} \widetilde{c}_{\z}^{\, 2\cdot \z-2} \prod_{i,j = 1, i \not = j}^{\z}(\widetilde z_i- \widetilde z_j) 
\end{align*}
for some $\widetilde{c}_{\z}^{\, 2\cdot \z-2} = \widetilde{c}_{\z}^{\, 2\cdot \z-2}(\vec{x}) \not = 0$ and $\mathcal Q_w(z; \vec{x}) = \mathcal Q(z-w; \vec{x})$. Here, $\widetilde z_j$ with $j = 1, \dots, \#z$ denote the zeros of $\mathcal Q_w^*(z; \vec{x})$ which is a polynomial of degree $\z = \deg(\vec{q})$. Now in a sufficiently small neighbourhood of $\vec{x}$ (and hence of $\bx$), the leading coefficient $\widetilde{c}_{\z}^{\, 2\cdot \z-2}$ will not vanish, meaning no zero will escape to infinity. However, staying away from infinity we know that $\dist(Z_i, Z_j) \propto |\widetilde z_i - \widetilde z_j|$. As additionally $|\vec{x}| \propto |\widetilde{c}_{\z}|$, it follows that $|\vec{x}|^{-2\cdot \z+2} \Delta_{\z}(\mathcal Q(z; \vec{x}))$ satisfies \eqref{MainDef} finishing the proof. 
\end{proof}
\subsection{Higher genus case}
In the following we will briefly summarize the main steps in the construction of $\be$-discriminants in the higher genus case. 

The first step is to substitute in place of $(z_i-z_j)$ with $z_i, z_j \in \C$ in \eqref{Def:Delta_n} the prime function $E_{\ve}(Z_i, Z_j)$ with $Z_i, Z_j \in \mathcal R$ being zeros of $\be(\vec{x})$. This leads us to the definition of the symmetric product $\Pi_1$ in \eqref{Pi1} where we additionally apply the Abel map to the zeros of $\be(\vec{x})$.

Immediately two problems arise with this approach. On the one hand the product $\Pi_1$ is multi-valued, and on the other it has additional spurious zeros. Both issues are related to the use of Riemann theta functions, though the spurious zeros only appear in the genus $g \geq 2$ case.

To arrive at a single-valued expression we introduce another symmetric product $\Pi_2$ in \eqref{Pi2} and study the ratio $\ff = \Pi_1/\Pi_2$.  The function $\ff$ vanishes when two zeros coalesce and turns out to be a meromorphic function on $\C P^n$ by Proposition \ref{Prop}. However, it still has the spurious zeros from before together with additional poles coming from the zeros of $\Pi_2$. 

Fortunately, these additional zeros and poles can be easily taken care of via the linear functions $L^Q$, $Q \in \mathcal R$ defined in \eqref{defL}. This essentially follows from Lemma \ref{LemG}. This lemma allows us to cancel the unwanted zeros and poles and prove our main result on $\be$-discriminants in Theorem \ref{mainTheorem}. In particular, we obtain an explicit formula for a homogeneous polynomial $\Delta_{\be}$ which satisfies \eqref{MainDef}. Through this formula we can easily read off the degree of $\Delta_{\be}$ as a function of the genus $g$ of $\mathcal R$ and the number of poles (or zeros) of $\be(\vec{x})$.

\medskip 

Let us now describe each of the above steps in detail. First, choose for $\cR$ a canonical basis of cycles $\lbrace \mathbf{a}_j, \mathbf{b}_j\rbrace_{j=1}^g$ and a corresponding basis of holomorphic differentials $\lbrace \omega_i \rbrace_{i=1}^g$ such that
\begin{align*}
    \int_{\mathbf{a}_j} \omega_i = \delta_{ij}.
\end{align*}
In applications to dimer models one would choose for the $\mathbf{a}$-cycles the $g$ real ovals $X_1, \dots, X_g$ not containing the poles of $\be$. Let us also define the \emph{period matrix} $B = (B_{i,j})_{i,j=1}^g$ to be 
\begin{align*}
    B_{ij} = \int_{\mathbf{b}_j} \omega_i.
\end{align*}
Abbreviating $\vec{\omega} = (\omega_1, \dots, \omega_g)^T$ we introduce the \emph{multi-valued Abel map} $u$ with base point $P_0 \in \cR$ via
\begin{align*}
    u \colon \cR \to \C^g, \quad Q \mapsto \int_{P_0}^Q \vec{\omega}. 
\end{align*}
Note that we make the rather unusual choice of working with a multi-valued Abel map instead of a single-valued one with range $\C^g / \Lambda$, where $\Lambda = \Z^g + B \Z^g$. The reason for this will become apparent in a moment.

Next we introduce the Riemann theta function (see \cite[Ch.~II]{Tata1})
\begin{align*}
    \theta(\vs \, | B) = \sum_{\vm \in \Z^g} \exp\Big\lbrace 2\pi\I\Big( \frac{1}{2}\langle \vm, B\vm\rangle +\langle \vm, \vs\rangle \Big)\Big\rbrace, \qquad \vs \in \C^g, \, B \in \C^{g \times g},
\end{align*}
which is a higher genus generalization of the Jacobi theta function. It satisfies the following quasi-periodicity property which is central to the present work (see \cite[\href{https://dlmf.nist.gov/21.3}{Eq.~(21.3.3)}]{DLMF}):
\begin{align}\label{QuasiPer}
   \theta(\vs+\vm_1 + B \vm_2 \, | B) &= \exp\Big\lbrace -2\pi\I\Big( \frac{1}{2}\langle \vm_2, B \vm_2\rangle +\langle \vm_2, \vs\rangle \Big)\Big\rbrace \theta(\vs \, | B)
   \\\nonumber
   & \hspace{5.6cm} \vm_1, \vm_2 \in \Z^g.
\end{align}
We will drop the dependence on the period matrix $B$ which can be viewed as fixed. 

 Now consider functions $\Ee$, for some $\ve \in \C^g$, of the form 
\begin{align*}
    \Ee(R, S) = \theta\Big(\int_R^S \vo + \ve\Big), \qquad R, \, S \in \cR.
\end{align*}
For details see  \cite[Ch.~II.\S 3]{Tata1}. We will call these \emph{prime functions}, to distinguish them from the closely related prime forms, see e.g.~\cite[Ch.~II]{Fay}, \cite[p.~3.207]{Tata2}. Note that $\Ee$ defines a multi-valued function on $\cR \times \cR$ due to the quasi-periodicity property \eqref{QuasiPer}, meaning that the choice of integration contour from $R$ to $S$ matters. 

Closely related to the prime function we also introduce the following \emph{single-valued} function on $\C^g \times \C^g$:
\begin{align}\label{PrimeFunction2}
    \vte(\vr, \vs) = \theta\big(\vs - \vr + \ve\big), \qquad \vr, \vs \in \C^g.
\end{align}
Note however that $\vte$ remains multi-valued on $\C^g / \Lambda$. 

Next, we choose $\ve \in \C^g$ such that $\Ee \not \equiv 0$ and $\theta(\ve) = \Ee(R,R) = 0$. That such an $\ve$ exists, and is in fact generic among theta divisors, is shown in \cite[Ch.~II.~Lem.~3.3]{Tata1}. It turns out that for any choice of $\ve$ with $\Ee \not \equiv 0$, there exists points $R_1, \dots, R_{g-1}$ and $S_1, \dots, S_{g-1}$ such that $\Ee(R, S)$ for $S \not = S_j$ has, as a function of $R$, exactly $g$ zeros, namely at $S$ and $R_1, \dots R_{g-1}$ (or higher order zeros if $R_i = R_j$ ect.). Similarly, if we fix $R \not = R_j$ then $\Ee(R,S)$, viewed as a function of $S$, has exactly $g$  zeros at $R$ and $S_1, \dots, S_{g-1}$ (or higher order zeros if $S_i = S_j$ ect.), see \cite[Ch.~II.~Lem.~3.4]{Tata1}. We thus see that $\Ee(Z_i,Z_j)$ has some similarities with the expression $(z_i - z_j)$ in \eqref{Def:Delta_n}, as generically it has simple zeros at the diagonal. However, two crucial differences remain:
\begin{itemize}
    \item $\Ee$ is multi-valued;
    \item $\Ee$ has additional \emph{spurious} zeros if the first argument is in the (multi)set $\mathbf{R} = \lbrace R_1, \dots, R_{g-1} \rbrace$, or the second argument is in the (multi)set $\mathbf{S} = \lbrace S_1, \dots, S_{g-1} \rbrace$. 
\end{itemize}
In the following construction we will address both these difficulties. We start with the issue of multi-valuedness. 

For any $\mathsf{Z}(\bx)$, let us choose representatives $\vz_j = u(Z_j)\in \C^g$, $j = 1, \dots, \z$, with the only requirement that 
\begin{align}\label{SigmaRule}
    \vz_1 + \dots + \vz_{\z} = \Sigma,
\end{align}
where $\Sigma \in \C^g$ is a fixed representative satisfying $\Sigma \equiv u(\mathsf{Z}(\bx)) \ \text{mod} \ \Lambda$ (which is independent of $\bx \in \C P^n$ by Abel's Theorem). 
At the moment there is no further requirement that the choice of representatives/the order of the elements $\vz_j$ is in any sense continuous with respect to $\bx \in \C P^n$.

Now consider the function $\Pi_1 \colon \C^{g \times \z} \to \C$ given by
\begin{align}\label{Pi1}
    \Pi_1(\vz_1, \dots, \vz_{\z}) = \prod_{i\not = j} \vte(\vz_i, \vz_j) \in \C.
\end{align}
Observe that as $\Pi_1$ is symmetric with respect to its arguments it is well-defined as a function of the (multi)set $\lbrace \vz_1, \dots, \vz_{\z} \rbrace \subset \C^g$. However, it is not a function of the divisor $\mathsf{Z}(\bx)$, as we had to choose the representatives $\vz_j \in \C^g$ beforehand. This remains true even if we take into account \eqref{SigmaRule}. In fact, consider the transformation
\begin{align*}
     ( \vz_1, \dots, \vz_{\z} ) \mapsto ( \vz_1, \dots, \vz_i + \vec{B}_m, \dots, \vz_j - \vec{B}_m, \dots, \vz_{\z} ),
\end{align*}
where $\vec{B}_m =(B_{1,m}, \dots, B_{g,m})^T$, $m \in \lbrace 1, \dots, g\rbrace$ denote the columns of the period matrix $B$. This transformation clearly respects condition \eqref{SigmaRule}, however from \eqref{QuasiPer}, \eqref{PrimeFunction2} we have
\begin{align*}
    \Pi_1(\vz_1, &\dots, \vz_i - \vec{B}_m, \dots, \vz_j + \vec{B}_m, \dots, \vz_{\z}) 
    \\
    &=\exp \Big\lbrace - 4\pi \I\big(B_{m,m} + (\vz_i-\vz_j)_m\big)\# z\Big\rbrace \ \Pi_1(\vz_1, \dots, \vz_{\z}),
\end{align*}
where $B_{j,k}$ denotes the entries of $B$.

To correct for this discrepancy choose an arbitrary element $Q \in \mathcal R \setminus \mathbf{S}$ together with a representative $\vq = u(Q) \in \C^g$.  Define the function $\Pi_2 \colon \C^{g \times \z} \to \C$ via
\begin{align}\label{Pi2}
    \Pi_2(\vz_1, \dots, \vz_{\z} \, ; \, \vq \, ) = \prod_{i=1}^{\z} \vte(\vz_i, \vq \, ) \in \C.
\end{align}
We now find that
\begin{align}\label{Pi2Mono}
    \Pi_2(\vz_1, &\dots, \vz_i - \vec{B}_m, \dots, \vz_j + \vec{B}_m, \dots, \vz_{\z} \, ; \, \vq \, ) 
    \\\nonumber
    &=\exp \Big\lbrace - 2\pi \I\big(B_{m,m} + (\vz_i-\vz_j)_m\big)\Big\rbrace \ \Pi_2(\vz_1, \dots, \vz_{\z}\, ; \, \vq \, ).
\end{align}
Hence, we see that as long as the denominator does not vanish, the function defined via 
\begin{align*}
    \ff(\bx \, ; \, \Sigma, \, \vec{e}, \, \vec{q} \,) = \frac{\Pi_1(\vz_1, \dots, \vz_{\z})}{[\Pi_2(\vz_1, \dots, \vz_{\z}\, ; \, \vq \, )]^{2\, \cdot\,\z}}, \qquad \text{with} \ \vz_1 + \dots + \vz_{\z} = \Sigma
\end{align*}
is well-defined. Moreover, one can check that the relation  
\begin{align}\label{SigmaSigma'}
    \ff(\bx \, ; \Sigma', \, \vec{e}, \, \vec{q} \,) = \exp\Big\lbrace 2\pi\I\Big( B_{m,m} + 2 \Sigma_m - 2(\ve + \vq \, )_m \cdot \# z\Big)\Big\rbrace \, \ff(\bx \, ; \Sigma, \, \vec{e}, \, \vec{q}\,)
\end{align}
holds whenever $\Sigma' = \Sigma + \vec{B}_m$. These observations allow us to prove the following proposition.
\begin{proposition}\label{Prop}
    Assume that the meromorphic sections  $\eta_0, \dots, \eta_n$ satisfy Hypothesis \ref{HypoEta}. Let $\ve \in \C^g$ be chosen such that for any common zero $C_0 \in \cR$ of $\eta_j$, $j = 0, \dots, n$, we have $C_0 \not \in \mathbf{R} \cup \mathbf{S}$. Now choose any $Q \in \cR \setminus \mathbf{S}$ which is not a common zero of the $\eta_j$, and let $\vec{q} = u(Q)$. Consider the mapping
    \begin{align*}
        \ff( \ \cdot \ ; \Sigma, \, \vec{e}, \, \vec{q} \,) \colon \C P^n \to  \C \cup \lbrace \infty \rbrace, \quad \bx \mapsto \frac{\Pi_1(\vz_1, \dots, \vz_{\z})}{[\Pi_2(\vz_1, \dots, \vz_{\z}; \, \vec{q}\,) ]^{2\, \cdot\, \z}},
    \end{align*}
    where the representatives $\vec{z}_j = u(Z_j)$ with $\mathsf{Z}(\bx) = \sum_{j=1}^{\z} Z_j$ are chosen such that $\vz_1 + \dots + \vz_{\z} = \Sigma \in \C^g$. Then $\ff( \ \cdot \ ; \Sigma, \, \vec{e}, \, \vec{q} \,)$ defines a meromorphic function on $\C P^n$.
\end{proposition}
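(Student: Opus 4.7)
The plan is to split the proof into two independent tasks: well-definedness of $\ff(\bx; \Sigma, \ve, \vq)$ as a function of $\bx \in \C P^n$ (independence from the choice of lifts $\vz_j$ subject only to $\vz_1 + \dots + \vz_{\z} = \Sigma$), and meromorphicity of the resulting map from $\C P^n$ to $\C \cup \{\infty\}$.

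For well-definedness, I would describe the ambiguity in the lifts explicitly. Any two admissible lift tuples $(\vz_1,\dots,\vz_{\z})$ and $(\vz_1',\dots,\vz_{\z}')$ representing the same divisor $\mathsf{Z}(\bx)$ and satisfying the sum constraint differ by a permutation composed with a translation $\vz_j' = \vz_j + \vec{\lambda}_j$, where $\vec{\lambda}_j \in \Lambda = \Z^g + B\Z^g$ and $\sum_j \vec{\lambda}_j = 0$. Permutations leave $\Pi_1$ and $\Pi_2$ invariant by construction. Writing $\vec{\lambda}_j = \vn_j + B \vm_j$ with $\vn_j, \vm_j \in \Z^g$, the $\Z^g$-periodicity of $\theta$ disposes of the $\vn_j$ contributions. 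The remaining tuple $(\vm_1, \dots, \vm_{\z})$ satisfies $\sum_j \vm_j = 0$ in $\Z^g$, so it decomposes as a finite sum of elementary tuples having the $k$-th standard basis vector of $\Z^g$ in some slot $i$, its negative in some slot $j$, and zeros elsewhere — the kernel of the sum map $(\Z^g)^{\z} \to \Z^g$ is generated by such differences. Each such elementary tuple corresponds precisely to the ``trade'' transformation $\vz_i \to \vz_i + \vec{B}_k, \vz_j \to \vz_j - \vec{B}_k$ analyzed in the paper, under which both $\Pi_1$ and $\Pi_2^{2\,\z}$ pick up the identical factor $\exp\{-4\pi\I(B_{k,k} + (\vz_i - \vz_j)_k)\,\z\}$, so their ratio $\ff$ is unchanged.

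For meromorphicity, I work locally. Fix $\bx_0 \in \C P^n$ and a small polydisc neighborhood $U$. The map $\bx \mapsto \mathsf{Z}(\bx) \in \Sym^{\z} \cR$ is holomorphic, since the zeros of a family of meromorphic sections depending linearly on parameters vary holomorphically as an unordered tuple. If the zeros $Z_j(\bx_0)$ are pairwise distinct, one labels them holomorphically on $U$, lifts each along a fixed contour to obtain holomorphic $\vz_j(\bx)$, and absorbs the locally constant (by discreteness of $\Lambda$) difference $\sum_j \vz_j(\bx) - \Sigma \in \Lambda$ into a single lift; this makes $\Pi_1$ and $\Pi_2^{2\,\z}$ holomorphic on $U$. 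Where zeros coalesce, individual labellings branch, but since $\Pi_1$ and $\Pi_2^{2\,\z}$ are symmetric in the $\vz_j$, they descend to single-valued functions on $U$ that are holomorphic off the branch locus and extend across it by Riemann's removable singularity theorem. Hence $\ff$ is locally the ratio of two holomorphic functions, i.e., meromorphic; the hypotheses on $\ve$ and $Q$ (no common zero of the $\eta_j$ lies in $\mathbf{R} \cup \mathbf{S}$, and $Q \notin \mathbf{S}$ is not a common zero) are precisely what prevents $\Pi_1$ or $\Pi_2$ from vanishing identically.

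The main obstacle I anticipate is the meromorphicity step: while the well-definedness part is essentially a combinatorial decomposition combined with the monodromy computation already in the paper, meromorphicity requires the correct interpretation of the symmetric products as holomorphic descendents to $\Sym^{\z}\cR$ even through the discriminant locus of coalescing zeros. The crucial observation that removes this difficulty is that the only multi-valuedness introduced by coalescence is a permutation of the lifts $\vz_j$, under which both the numerator and denominator of $\ff$ are invariant; hence the apparent branching is illusory and Riemann removability applies.
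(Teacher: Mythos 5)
Your well-definedness argument is correct and actually spells out a combinatorial decomposition (generators of the kernel of the sum map $(\Z^g)^{\z}\to\Z^g$) that the paper leaves implicit in the discussion around \eqref{Pi2Mono}; this is a nice and complete account of the invariance of $\ff$ under the admissible changes of lifts. Note that the paper's Proposition is really about meromorphicity; the well-definedness is established before the statement, so the bulk of the work lies in the second half.

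It is in the meromorphicity step that a genuine gap appears. Your key claim is that ``the only multi-valuedness introduced by coalescence is a permutation of the lifts $\vz_j$, under which both $\Pi_1$ and $\Pi_2^{2\z}$ are invariant.'' This is only true if all lifts of a cluster of coalescing zeros live in the \emph{same} fundamental domain in $\C^g$. But the constraint $\vz_1+\dots+\vz_{\z}=\Sigma$ can force lifts of a repeated zero into different translates $U_k$ and $U_k+\lambda$ of a small domain, and then the local monodromy of the zeros does not merely permute the lift tuple: it also shifts some lifts by $\lambda\in\Lambda$. In that case $\Pi_1$ and $\Pi_2^{2\z}$ are \emph{not} single-valued near the branch locus — each acquires an exponential factor — so you cannot directly invoke Riemann removability on them. (Their ratio $\ff$ is still single-valued because the factors cancel, but then you have lost the presentation of $\ff$ as a quotient of two holomorphic functions, which is exactly what meromorphicity requires.) The degenerate case is precisely when every coalescence cluster has multiplicity $\geq 2$, which is the condition the paper flags in its proof. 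The paper's remedy is to drop the normalization $\Sigma$ near such a point, choose one common lift per distinct zero so that $\sum r_k\vz_k=\Sigma'$ with lifts in a single domain (this makes the symmetric products genuinely single-valued and, after the local analysis of Lemma~\ref{LemG} and Hartogs, holomorphic), and only then pass from $\Sigma'$ to $\Sigma$ via the explicit nonvanishing factor in relation \eqref{SigmaSigma'}. Your proposal does not make this adjustment, and without it the intermediate step ``$\Pi_1$, $\Pi_2^{2\z}$ descend to single-valued holomorphic functions on $U$'' is false at exactly the most interesting points (e.g.\ cusps of the arctic curve). Adding the $\Sigma'$-shift argument, or equivalently invoking the invariance already established in your well-definedness part to separately cancel the lattice-shift monodromy factor from numerator and denominator before extending, would close the gap.
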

\begin{proof}
First, let us remark that $\vec{e}$ can always be chosen such that $C_0 \not \in \mathbf{R} \cup \mathbf{S}$ for all stationary zeros $C_0$ by \cite[Ch.~II.~Lem.~3.3]{Tata1}.
    
    Recall that a function on a complex manifold is meromorphic, if and only if it can be locally written as a ratio $h_1/h_2$ of two holomorphic maps with $h_2 \not \equiv 0$ (see \cite[Def.~2.1.8]{Huybrechts}). Take an arbitrary point $\bx \in \C P^n$ and let us write $\mathsf{Z}(\bx) = \sum_{k=1}^M r_k Z_k$, where $M$ is the number of distinct zeros, i.e. $Z_j \not = Z_k$ for $j \not = k$. We then have $\z = \sum_{k=1}^M r_k$. Now choose arbitrary representatives $\vz_k = u(Z_k) \in \C^g$. Note that here we do not assume \eqref{SigmaRule}, i.e. we only have 
    \begin{align*}
        \sum_{k=1}^M r_k \vz_k = \Sigma',
    \end{align*}
    with $\Sigma' \equiv \Sigma \mod \Lambda$. In fact, if $r_k > 1$ for $k = 1, \dots, M$, we might not have the freedom to choose $\vz_1, \dots, \vz_M \in \C^g$ such that \eqref{SigmaRule} is satisfied. However, due to condition \eqref{SigmaSigma'} it is enough to show that $\ff(\, \cdot \, ; \Sigma' , \vec{e}, \, \vec{q} \,)$ is locally the ratio of two holomorphic functions, so we can w.l.o.g.~assume that $\sum_{k=1}^M r_k \vz_k = \Sigma$.  
    
    Note that as by definition all $Z_k$ are pairwise distinct, we can find $M$ small disjoint neighbourhoods $U_k \subset \C^g$, which are also disjoint as subsets of $\C^g / \Lambda$, such that $\vz_k \in U_k$ and $\vec{v}_1, \vec{v}_2 \in U_k$ with $\vec{v}_1 \equiv \vec{v}_2 \mod \Lambda$ already implies $\vec{v}_1 = \vec{v}_2$ in $\C^g$.
    
    Now, assume w.l.o.g.~that $x_0 \not = 0$ and choose a neighbourhood $N$ of $\bx = [x_0 : \dots : x_n]$ of the form $N = \lbrace [x_0 : x_1 + t_1 : \dots : x_n + t_n] : \vec{t} = (t_1, \dots, t_n) \in T \rbrace$, where $T \subset \C^n$ is a small open neighbourhood of $\vec{0}$ that will be determined later. Note that $\vec{t} \mapsto [x_0 : x_1 + t_1 : \dots :  x_n + t_n]$ defines a local chart on $\C P^n$. 
    
    Provided $T$ is small enough, we obtain $M$ mappings
    \begin{align*}
        \psi_k : T \to \Sym^{r_k} U_k, \quad \vec{t} \mapsto \vz_k^{\, (1)} + \dots + \vz_k^{\, (r_k)},
    \end{align*}
    such that $u(\mathsf{Z}(\bx)) =  \sum_{k=1}^M (\vz_k^{\, (1)} + \dots + \vz_k^{\, (r_k)})$ and the vectors $\vz_1^{\, (1)}, \dots, \vz_M^{\, (r_M)}$ are the images of the perturbed zeros of $\be$ under the Abel map. In other words, under sufficiently small perturbations the zero $Z_k \in \cR$ of multiplicity $r_k$ can split into $r_k$  points (not necessarily pairwise distinct), which, under the Abel map, have unique representatives in the sets $U_k \subset \C^g$. In particular, we have
    \begin{align*}
        \sum_{k=1}^M (\vz_k^{\,(1)} + \dots + \vz_k^{\, (r_k)}) =  \Sigma
    \end{align*}
    due to continuity of the maps $\psi_k$.
    
    Importantly, with these restrictions the choice of the $\z$ vectors $\vz_1^{\, (1)}, \dots, \vz_M^{\, (r_M)} \in \C^g$ (counting multiplicity) is unique up to permutations. So we know that the mappings
    \begin{align*}
        h_1 \colon T &\to \C,  \quad \vec{t} \mapsto \Pi_1(\vec{z}_1^{\, (1)}, \dots, \vec{z}_M^{\, (r_M)}) \in \C, 
        \\
        h_2 \colon T &\to \C,  \quad \vec{t} \mapsto \Pi_2(\vec{z}_1^{\, (1)}, \dots, \vec{z}_M^{\, (r_M)}) \in \C,
    \end{align*}
    are well-defined continuous functions. We remark that $h_2 \not \equiv 0$ due to our choice of $\ve$ and $\vec{q}$. 
    
    Now if for some $\vec{t}_0 \in T$ all $\vz_1^{\, (1)}, \dots, \vz_M^{\, (r_M)}$ are distinct, it immediately follows that $h_1, h_2$ are locally holomorphic functions, as each $\vz_k^{\, (j)} = \vz_k^{\, (j)}(t_1, \dots, t_n)$ can locally near $\vec{t}_0$ be written as a holomorphic function of $t_1, \dots, t_n$ by the implicit function theorem. The general case follows from a simple local analysis using charts (see Lem.~\ref{LemG} for details). This implies for $t_2, \dots, t_n$ fixed that the functions $h_1$ and $h_2$ are holomorphic in the variable $t_1$ (though the individual zeros are in general not). The same conclusion holds for any other $t_j$. Thus, Hartog's theorem on separate holomorphicity implies that $h_1, h_2$ are holomorphic, finishing the proof. 
\end{proof}
From the properties of $\Ee$ it follows that
\begin{align}\label{Pipropto}
    \Pi_1 \propto \prod_{i \not = j} \dist(Z_i, Z_j) \prod_{i, k} \dist(Z_i, R_k)^{\z-1}\prod_{j, k} \dist(Z_j, S_k)^{\z-1}
\end{align}
and
    \begin{align}\label{Pi2Propto}
        \Pi_2 \propto \prod_{i, k} \dist(Z_i, R_k) \prod_{i} \dist(Z_i, Q), 
    \end{align}
    with $i, j \in \lbrace 1, \dots, \#z \rbrace$ and $k \in \lbrace 1, \dots, g-1 \rbrace$.
    In particular, we know exactly the zero/pole structure of the meromorphic function $\ff$. We will use this knowledge to extract from $\ff$ the $\be$-discriminant $\Delta_{\be}$ in Theorem \ref{mainTheorem}.

    \medskip
    
Let us now turn to the issue of spurious zeros (and poles) of $\ff$. The following lemma will play a key role in this undertaking.
\begin{lemma}\label{LemG}
Let $D \subset \C$, $U \subset \C^{n+1} \setminus \lbrace 0 \rbrace$ be  open sets and consider the function $L \colon D \times U \to \C$ given by
    \begin{align*}
        L(\zeta; x_0, \dots , x_n) = x_0 k_0(\zeta) + \dots +  x_n k_n(\zeta) = \langle \vec{x}, \vec{k}(\zeta) \rangle, 
    \end{align*}
 where the $k_j \colon D \to \C$ are holomorphic and $k_j(\widehat \zeta) \not = 0$ for some fixed $\widehat \zeta \in D$ and at least one $j \in \lbrace 0, \dots, n \rbrace$. Let us further assume that for any choice of parameters $\vec{x} = (x_0, \dots, x_n) \in U$, the function $L( \ \cdot \ ;\vec{x}) \colon D \to \C$ has exactly $d$ zeros $\zeta_i = \zeta_i(\vec{x}) \in D$ with $i =1, \dots, d$ (the ordering does not matter). Then the mapping
 \begin{align}\label{DefG}
     G_{\widehat \zeta} \colon U \to \C, \quad G_{\widehat \zeta}(\vec{x}) = \frac{\prod_{i = 1}^d(\zeta_i(\vec{x}) - \widehat \zeta)}{L(\widehat \zeta ; \vec{x})}
 \end{align}
 defines a nonvanishing holomorphic function on $U$. 
\end{lemma}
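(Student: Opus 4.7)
The plan is to reduce the lemma to a local statement around each point $\vec{x}^{(0)}\in U$ and to handle the potential multiplicities of the zeros $\zeta_i(\vec{x}^{(0)})$ via the Weierstrass preparation theorem. Fix $\vec{x}^{(0)}\in U$ and let $\zeta^*_1,\dots,\zeta^*_M\in D$ be the distinct zeros of $L(\,\cdot\,;\vec{x}^{(0)})$ with multiplicities $r_1,\dots,r_M$, so that $r_1+\dots+r_M=d$. Since $L(\zeta;\vec{x})$ is jointly holomorphic in $(\zeta,\vec{x})$, for each $\ell$ the Weierstrass preparation theorem furnishes an open neighborhood $V_\ell\times W_\ell\subset D\times U$ of $(\zeta^*_\ell,\vec{x}^{(0)})$ and a factorization
\begin{equation*}
L(\zeta;\vec{x}) \;=\; u_\ell(\zeta;\vec{x})\,p_\ell(\zeta;\vec{x}),\qquad (\zeta,\vec{x})\in V_\ell\times W_\ell,
\end{equation*}
where $u_\ell$ is holomorphic and nowhere vanishing, and $p_\ell(\zeta;\vec{x})=\zeta^{r_\ell}+a^{\ell}_{r_\ell-1}(\vec{x})\zeta^{r_\ell-1}+\dots+a^{\ell}_0(\vec{x})$ is a monic Weierstrass polynomial in $\zeta$ with holomorphic coefficients. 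Shrinking the $V_\ell$ so that they are pairwise disjoint and do not meet the (finite) set of other $\zeta^*_{\ell'}$, and shrinking $W:=\bigcap_\ell W_\ell$ so that no zero of $L(\,\cdot\,;\vec{x})$ escapes from $\bigcup_\ell V_\ell$, we use the hypothesis that $L(\,\cdot\,;\vec{x})$ has \emph{exactly} $d$ zeros in $D$ to conclude that for every $\vec{x}\in W$ the zeros of $L(\,\cdot\,;\vec{x})$ distribute themselves among the $V_\ell$ with exactly $r_\ell$ in each, and in $V_\ell$ they coincide with the zeros of $p_\ell(\,\cdot\,;\vec{x})$.

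From Vieta's formulas for the monic polynomial $p_\ell(\zeta;\vec{x})$ this gives
\begin{equation*}
\prod_{i=1}^{d}\bigl(\zeta_i(\vec{x})-\widehat\zeta\,\bigr) \;=\; \prod_{\ell=1}^{M}\prod_{\zeta_i\in V_\ell}\bigl(\zeta_i-\widehat\zeta\,\bigr) \;=\; (-1)^{d}\prod_{\ell=1}^{M}p_\ell(\widehat\zeta;\vec{x}),
\end{equation*}
and the right-hand side is manifestly holomorphic in $\vec{x}\in W$. Since $\vec{x}^{(0)}$ was arbitrary, the numerator of $G_{\widehat\zeta}$ defines a global holomorphic function on $U$; and the denominator $L(\widehat\zeta;\vec{x})=\langle\vec{x},\vec{k}(\widehat\zeta)\rangle$ is entire linear in $\vec{x}$, hence holomorphic.

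It remains to verify that $G_{\widehat\zeta}$ is nowhere vanishing, i.e.~that the zeros of the denominator $L(\widehat\zeta;\vec{x})$ are exactly cancelled by zeros of the numerator with matching multiplicities. There are two cases. If $\widehat\zeta\neq \zeta^*_\ell$ for every $\ell$, then the factorization shows $L(\widehat\zeta;\vec{x}^{(0)})\neq 0$ and also $\prod_\ell p_\ell(\widehat\zeta;\vec{x}^{(0)})=\prod_\ell(\widehat\zeta-\zeta^*_\ell)^{r_\ell}\neq 0$, so both numerator and denominator are nonzero on a neighborhood of $\vec{x}^{(0)}$. If instead $\widehat\zeta=\zeta^*_{\ell_0}$ for some (necessarily unique) $\ell_0$, then the factorization $L(\zeta;\vec{x})=u_{\ell_0}(\zeta;\vec{x})p_{\ell_0}(\zeta;\vec{x})$ on $V_{\ell_0}\times W$ yields $L(\widehat\zeta;\vec{x})=u_{\ell_0}(\widehat\zeta;\vec{x})\,p_{\ell_0}(\widehat\zeta;\vec{x})$, and the factor $p_{\ell_0}(\widehat\zeta;\vec{x})$ cancels against the same factor in the numerator, leaving
\begin{equation*}
G_{\widehat\zeta}(\vec{x}) \;=\; \frac{(-1)^{d}\prod_{\ell\neq\ell_0}p_\ell(\widehat\zeta;\vec{x})}{u_{\ell_0}(\widehat\zeta;\vec{x})},
\end{equation*}
which is both holomorphic and nonvanishing on a neighborhood of $\vec{x}^{(0)}$, since $u_{\ell_0}$ is nowhere zero and each $p_\ell(\widehat\zeta;\vec{x}^{(0)})=(\widehat\zeta-\zeta^*_\ell)^{r_\ell}\neq 0$ for $\ell\neq\ell_0$.

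The only genuine technical point is ensuring that the symmetric product $P(\vec{x})=\prod_i(\zeta_i(\vec{x})-\widehat\zeta\,)$ remains holomorphic when zeros coalesce, since the individual $\zeta_i(\vec{x})$ only depend continuously (not holomorphically) on $\vec{x}$ at such points; this is precisely what Weierstrass preparation resolves by packaging all the coalescing zeros into the holomorphic coefficients of $p_\ell$. Everything else is bookkeeping.
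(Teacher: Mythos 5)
Your proof is correct, and it takes a genuinely different route from the paper. The paper first reduces to single-variable holomorphicity via Hartog's theorem on separate holomorphicity, then handles coalescing zeros by hand with a local Puiseux-type expansion $\widetilde\zeta_b \simeq \zeta_0 + (x_0-x_{0,c})_b^{1/(m-p)}(-\ell(\zeta_0)/H(\zeta_0))^{1/(m-p)}$; for the nonvanishing claim it first applies a real linear change of variables to arrange $k_j(\widehat\zeta)\neq 0$ for all $j$, and then matches the vanishing orders of numerator and denominator as functions of $x_0$. You instead invoke the (multivariable) Weierstrass preparation theorem directly: this packages the coalescing zeros into the holomorphic coefficients of a monic Weierstrass polynomial $p_\ell(\zeta;\vec{x})$, so holomorphicity of $\prod_i(\zeta_i(\vec{x})-\widehat\zeta)=(-1)^d\prod_\ell p_\ell(\widehat\zeta;\vec{x})$ is immediate, and the nonvanishing becomes an algebraic cancellation of the common factor $p_{\ell_0}(\widehat\zeta;\vec{x})$ against the unit $u_{\ell_0}$. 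The tradeoff: your argument is shorter, cleaner, and avoids both Hartog's theorem and the auxiliary linear change of variables, at the cost of invoking a somewhat heavier tool (Weierstrass preparation) than the paper's more elementary implicit-function-theorem plus removable-singularity reasoning. One cosmetic point: you wrote $p_\ell$ as monic in $\zeta$ rather than in $\zeta-\zeta^*_\ell$; this doesn't affect the argument, since the Weierstrass polynomial extends to all $\zeta\in\C$ and the identity $p_\ell(\widehat\zeta;\vec{x})=\prod_{\zeta_i\in V_\ell}(\widehat\zeta-\zeta_i(\vec{x}))$ holds regardless, but it is worth being explicit that $p_\ell$ is monic of degree $r_\ell$ in $\zeta-\zeta^*_\ell$ to match the standard statement of the theorem.
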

\begin{proof} 

 We start by showing that $G_{\widehat \zeta}$ is a meromorphic function, i.e.~the numerator and denominator on the right-hand side of \eqref{DefG} are holomorphic. Clearly, $L(\widehat \zeta; \vec{x})$ is a holomorphic function of $\vec{x}$ as it is linear. Moreover, $L(\widehat \zeta; \vec{x}) \not \equiv 0$ as $k_j(\widehat \zeta) \not = 0$ for some $j$ by assumption.
 
 Now let us consider the numerator. Due to Hartog's theorem on separate holomorphicity it is enough to show holomorphicity with respect to each variable separately, hence let us focus on the variable $x_0$. As long as $\zeta_i \not = \zeta_j$ for $i \not = j$, the individual $\zeta_i = \zeta_i(x_0)$ are locally holomorphic functions of $x_0$ by the implicit function theorem. Hence, it follows that in this case product $\prod_{i = 1}^d(\zeta_i(x_0) - \widehat \zeta)$ is locally holomorphic in $x_0$. 
 
 In the case of a zero $\zeta_0$ of multiplicity $m$ for some critical $x_{0,c}$, let us rewrite $L(\zeta; \vec{x})$ as a linear function of $x_0$ with $x_1, \dots, x_n$ fixed:
    \begin{align}\label{tk}
       L(\zeta; \vec{x}) =  x_0 k_0(\zeta) + K(\zeta) = (x_0-x_{0,c}) \underbrace{(\zeta-\zeta_0)^p\ell(\zeta)}_{= \ k_0(\zeta)} + \underbrace{(\zeta-\zeta_0)^m H(\zeta)}_{= \ x_{0,c}k_0(\zeta)+K(\zeta)}
    \end{align}
 with $\ell, H$ holomorphic, $\ell(\zeta_0), H(\zeta_0) \not = 0$, and $K(\zeta) = x_1 k_1(\zeta) + \dots + x_n k_n(\zeta)$. We see that $\min \lbrace p, m \rbrace$ zeros at $\zeta_0$ are independent of $x_0$, while in case $m > p$ the $m-p$ remaining zeros $\widetilde \zeta_b$ in the vicinity of $\zeta_0$ can be approximated via $\widetilde \zeta_b \simeq \zeta_0 + (x_0- x_{0,c})_b^{\frac{1}{m-p}} (-\ell(\zeta_0)/H(\zeta_0))^{\frac{1}{m-p}}$, where the extra subscript $b = 1, \dots, m-p$ indicates the different branches of the $(m-p)$th root. Again, the product $\prod_{i = 1}^d(\zeta_i(t) - \zeta)$ remains holomorphic in $x_0$ (use e.g.~Riemann's theorem on removable singularities) with $\prod_{b = 1}^{m-p} (\widetilde \zeta_b(x_0) - \zeta_0) \propto (x_0 - x_{0,c})$ as $x_0 \to x_{0,c}$.

 We thus see that $G_{\widehat \zeta}$ is meromorphic. It remains to show that it is a holomorphic nonvanishing function. By assumption the vector $\vec{k}(\widehat \zeta) = (k_0(\widehat \zeta), \dots, k_n(\widehat \zeta))^T$ is nonzero, hence there exists a real invertible matrix $C$ such that $C\vec{k}(\widehat \zeta)$ is a vector with \emph{all} entries nonzero. As $L(\zeta; \vec{x}) = \langle (C^{-1})^{T}\vec{x}, C\vec{k}(\zeta)\rangle$, and $\vec{x} \mapsto \vec{y} = (C^{-1})^{T}\vec{x}$ is a linear, hence \emph{holomorphic}, change of variables, showing holomorphicity of $G_{\widehat \zeta}$ with respect to $\vec{x}$ and $\vec{y}$ is equivalent. It follows that we can w.l.o.g.~assume that $k_j(\widehat \zeta) \not = 0$ for all $j = 0, \dots, n$. This will simplify the analysis that follows.

 Note that by construction the denominator of $G_{\widehat \zeta}$ vanishes if and only if  $L(\widehat \zeta ; \vec{t})$ vanishes. Thus we have to show that the order of the vanishing is equal. By the previous arguments we can assume $k_0(\widehat \zeta) \not = 0$, which implies that $L(\widehat \zeta ; \vec{x})$ viewed as a function of $x_0$ has at most one simple zero. We assume that it exists and denote it by $\widehat x_0$ (as otherwise there is nothing to show). We write similarly as before
 \begin{align*}
     L(\zeta; \vec{x}) =  x_0 k_0(\zeta) + K(\zeta) = (x_0-\widehat x_0) k_0(\zeta) + \underbrace{(\zeta-\widehat \zeta)^{m} H(\zeta)}_{= \ \widehat x_0k_1(\zeta)+K(\zeta)}.
 \end{align*}
Here, our assumption $k_0(\widehat \zeta) \not = 0$ implies that $p = 0$ in \ref{tk}, i.e.~$k_0(\zeta) = \ell(\zeta)$. If we denote by $\widehat \zeta_j$ the $m$ zeros close to $\widehat \zeta$, then $\prod_{j = 1}^{m} (\widehat \zeta_j(x_0) - \widehat \zeta) \propto (x_0 - \widehat x_0)$, implying that we exactly cancel the simple zero at $\widehat x_0$ of $L(\widehat \zeta; \vec{x})$ viewed as a function of $x_0$. This finishes the proof.   
\end{proof}
The previous Lemma \ref{LemG} motivates the following definition. For any point $P \in \cR$ which is not a stationary zero of $\be$, let us choose a local chart $(\zeta, V)$ with $P \in V$. Assume that in this chart $x_0 \eta_0 + \dots + x_n \eta_n$ can be expressed as $x_0k_0(\zeta) + \dots  + x_n k_n(\zeta)$ with some holomorphic functions $k_j$ on $\zeta(V)$ and $k_j(\zeta_P) \not = 0$ for at least one $j$. We then define the linear form
\begin{align}\label{defL}
    L^P(x_0, \dots, x_n) = x_0 k_0(\zeta_P) + \dots  + x_n k_n(\zeta_P), \qquad \zeta_P = \zeta(P).
\end{align}
\begin{remark}\label{RemarkL}
In case $P$ is a pole of $\be$ of order $m$, we have to adjust the definition of $L^P$ by replacing the meromorphic $k_j(\zeta)$ with $(\zeta-\zeta_P)^m k_j(\zeta)$. More generally, we are only concerned with $[k_0(\zeta_P) \colon \dots \colon k_n(\zeta_P)] \in \C P^n$, which always exists by analytic continuation even if $P$ is a pole of $\be$. Similarly, different charts would result in the same $L^P$ up to some constant nonzero multiple. As we will be only interested in the zero set of $L^P$, we will not make this chart dependence explicit.
\end{remark}
Recall that $\mathbf{R} = \lbrace R_1, \dots, R_{g-1} \rbrace$, $\mathbf{S}= \lbrace S_1, \dots, S_{g-1} \rbrace$ denote the sets of points in $\mathcal R$ for which $\Ee(R_j, \, \cdot \,) \equiv 0$ and $\Ee(\, \cdot \,, S_j) \equiv 0$. Moreover, in Proposition \ref{Prop} we have assumed that the common zeros of $\eta_j$ are disjoint from $\mathbf{R} \cup \mathbf{S} \cup \lbrace Q \rbrace$. With this in mind we are now ready to prove our main result.
\begin{theorem}\label{mainTheorem}
Assume the meromorphic sections $\eta_0, \dots \eta_n$ satisfy Hypothesis \ref{HypoEta} and let $\delta_{\be}$ be defined as in Proposition \ref{Prop}. Then the function $\Delta_{\be} \colon \C^{n+1} \setminus \lbrace 0 \rbrace \to \C$ given by 
\begin{align}\nonumber
    \Delta_{\boldsymbol{\eta}}(x_0, \dots, x_n) &= \ff([x_0 : \dots : x_n]; \Sigma, \, \vec{e}, \, \vec{q} \,) 
    \\\label{DiscrDef}
    &\times \frac{\Big(\prod_{j = 1}^{g-1} L^{R_j}(x_0, \dots, x_n)\Big)^{\# z + 1} \Big( L^Q(x_0, \dots, x_n) \Big)^{2 \, \cdot \, \# z}}{\Big(\prod_{j = 1}^{g-1} L^{S_j}(x_0, \dots, x_n)\Big)^{\# z-1}}
\end{align}
defines a homogeneous polynomial of degree $2g - 2 + 2\cdot\# z$. The polynomial $\Delta_{\boldsymbol{\eta}}$ is the, up to multiples unique, $\be$-discriminant in the sense of Definition \ref{DefDiscr}, that is, it satisfies \eqref{MainDef}. In particular, the zero set of $\Delta_{\boldsymbol{\eta}}$ viewed as a subset of $\C P^n$ coincides with the arctic hypersurface $\mathcal A \mathcal  H \mathcal S$ generated by  $\eta_0, \dots, \eta_n$ (see Def.~\ref{DefAHS}). 
\end{theorem}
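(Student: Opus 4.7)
The plan is to combine the meromorphicity of $\ff$ on $\C P^n$ from Proposition \ref{Prop} with the algebraic-to-analytic dictionary provided by Lemma \ref{LemG}, identifying the full divisor of $\ff$ on $\C P^n$ and showing that the polynomial correction factors in \eqref{DiscrDef} cancel everything except the arctic hypersurface. First, from \eqref{Pipropto} and \eqref{Pi2Propto} combined with the local equivalence $|\Ee(R,S)| \propto \dist(R,S)$ provided by \eqref{DefDist}, one obtains
\begin{align*}
    |\ff(\bx)| \propto \frac{\prod_{i \neq j} \dist(Z_i,Z_j) \cdot \prod_{j,k} \dist(Z_j,S_k)^{\z-1}}{\prod_{i,k} \dist(Z_i,R_k)^{\z+1} \cdot \prod_i \dist(Z_i,Q)^{2\z}},
\end{align*}
with proportionality factor a continuous nonvanishing function of $\bx$ wherever defined. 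This displays explicitly that the only poles of $\ff$ on $\C P^n$ lie along the hyperplanes where some $Z_i$ meets $R_k$ or $Q$, while the only spurious (non-coalescence) zeros occur where some $Z_j$ meets $S_k$.

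The essential bridge is Lemma \ref{LemG}: for any $P \in \cR$ avoiding the stationary zeros of $\be$, the product $\prod_i (\zeta(Z_i) - \zeta(P))$ over the local zeros of $\be(\bx)$ in a chart around $P$ equals $L^P(\bx)$ up to a nonvanishing holomorphic factor. Applied with $P = R_j$, $P = S_j$ and $P = Q$, this upgrades the absolute-value relation above to an equality of divisors on $\C P^n$: each pole of $\ff$ along $\{L^{R_j} = 0\}$ has order exactly $\z+1$, each pole along $\{L^Q = 0\}$ has order exactly $2\z$, and each spurious zero along $\{L^{S_j} = 0\}$ has order exactly $\z-1$. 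The correction factors in \eqref{DiscrDef} are chosen precisely to cancel these. Lifted to $\C^{n+1}\setminus\{0\}$, $\Delta_{\be}$ is therefore a rational function with no poles, and by Hartog's extension theorem (using $n \geq 1$) it extends to a holomorphic function on all of $\C^{n+1}$. Since $\ff$ is degree $0$ under $\vec{x}\mapsto\lambda\vec{x}$ and each $L^P$ is linear, $\Delta_{\be}$ is homogeneous of total degree
\begin{align*}
    (g-1)(\z+1) + 2\z - (g-1)(\z-1) = 2g - 2 + 2\z,
\end{align*}
so it is a homogeneous polynomial of the stated degree. That it satisfies \eqref{MainDef} is immediate from the absolute-value relation above, and uniqueness up to multiplicative constants was already established directly after Definition \ref{DefDiscr}. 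Consequently the zero set of $\Delta_{\be}$ in $\C P^n$ consists exactly of the coalescence locus, i.e.\ the arctic hypersurface $\mathcal A \mathcal H \mathcal S$.

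The main technical obstacle is precisely the step of turning the modulus proportionality $|\ff| \propto \cdots$ into a statement about the holomorphic divisor of $\ff$ on $\C P^n$. The absolute-value proportionality is merely indicative; what one really needs is that the spurious factors genuinely appear with the multiplicities suggested by \eqref{Pipropto}--\eqref{Pi2Propto} as holomorphic data, not just as real-analytic moduli. Lemma \ref{LemG} is the tool that supplies this: it converts the locally defined products $\prod_i(\zeta(Z_i)-\zeta(P))$, which can be read off directly from $\Pi_1$ and $\Pi_2$ in a chart around $P$, into the genuinely polynomial linear forms $L^P$ on $\C P^n$. Once this translation is in place, the pole and spurious-zero cancellation and the degree bookkeeping are routine. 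Some additional care is needed when points among the $R_k, S_k, Q$ coincide with each other or with stationary zeros, but the choices of $\ve$ and $\vq$ permitted by Proposition \ref{Prop} exclude these degenerate configurations.
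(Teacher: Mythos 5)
Your proposal is correct and follows essentially the same route as the paper's own proof: both identify the spurious zero/pole structure of $\ff$ from \eqref{Pipropto}--\eqref{Pi2Propto} and use Lemma~\ref{LemG} as the key device to replace the local products $\prod_i(\zeta(Z_i)-\zeta(P))$ by the linear forms $L^P$, so that the correction factors in \eqref{DiscrDef} cancel exactly the unwanted divisor components and leave the coalescence locus. The only cosmetic difference is that you phrase the intermediate step as a modulus proportionality and then "upgrade" it, whereas the paper works directly with a local holomorphic factorization $\ff = h\cdot(\text{coalescence factor})\cdot(\text{local }R,S,Q\text{ factors})$ with $h$ nonvanishing holomorphic; but you correctly identify this upgrade as the crux and supply the right tool for it.
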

\begin{proof}
    Fix a vector $(\widetilde x_0, \dots, \widetilde x_n) \in \C^{n+1} \setminus \lbrace 0 \rbrace$ and let $\widetilde \bx = [\widetilde x_0 : \dots : \widetilde x_n] \in \C P^n$. For simplicity assume that $\mathsf{Z}(\widetilde \bx) = m \widetilde Z_0 + \sum_{i = 1}^{\z - m} \widetilde Z_i$ for some $m \geq 1$, with $\widetilde Z_j \not = \widetilde Z_k$ for $j \not = k$ and $\widetilde Z_j \not \in \mathbf{R} \cup \mathbf{S} \cup \lbrace Q \rbrace$ for $j \geq 1$ (other cases work similarly and require multiple charts). Choose a sufficiently small local chart $(\zeta, V)$ on $\mathcal R$ such that $\widetilde Z_0\in V$ but $\widetilde Z_j \not \in V$ for $j \geq 1$. Let us then choose a sufficiently small open set $X \subset \C^{n+1}$ containing $(\widetilde x_0, \dots, \widetilde x_n)$ such that for all $\vec{x} = (x_0, \dots, x_n) \in X$ the element $\bx = [x_0 : \dots : x_n]$ is well-defined in $\C P ^n$ and there are exactly $m$ zeros of $\mathsf{Z}(\bx)$ denoted by $Z_{0,1}, \dots, Z_{0,m}$ in  $V$ (which can coincide), while all other zeros remain outside $V$, remain pairwise distinct and remain disjoint from $\mathbf{R} \cup \mathbf{S}\cup \lbrace Q \rbrace$. Finally let us assume that w.l.o.g.~$\lbrace R_1, \dots, R_k \rbrace = \mathbf{R} \cap V$, $\lbrace S_1, \dots, S_\ell\rbrace = \mathbf{S} \cap V$ and that $Q \in V$ (recall $\vec{q} = u(Q)$). Then it follows from the properties of $\vartheta_{\ve}$, in particular its zero structure, and the definitions of $\Pi_1$, $\Pi_2$ (cf.~\eqref{Pipropto}, \eqref{Pi2Propto}) that
    \begin{align*}
        \ff(\bx; \Sigma, \, \vec{e}, \, \vec{q} \,) &= h(x_0, \dots, x_n) \times \prod_{i,j \in \lbrace{1, \dots, m\rbrace}, i \not = j} (\zeta(Z_{0,i})-\zeta(Z_{0,j}))
        \\
        &\times \prod_{j = 1}^k \Bigg[ \prod_{i = 1}^m (\zeta(Z_{0,i})-\zeta(R_j)) \Bigg]^{-\z -1} \times \prod_{j = 1}^\ell \Bigg[ \prod_{i = 1}^m (\zeta(Z_{0,i})-\zeta(S_j)) \Bigg]^{\z -1}
        \\
        &\times \Bigg[\prod_{i = 1}^m (\zeta(Z_{0,i})-\zeta(Q))\Bigg]^{-2 \, \cdot \, \z},
    \end{align*}
    where $h \colon X \to \C$ is a nonvanishing holomorphic function. We know by Lemma \ref{LemG} that
    \begin{align*}
        &\frac{\prod_{i = 1}^m (\zeta(Z_{0,i})-\zeta(R_j))}{L^{R_j}(x_0,  \dots, x_n)}, \qquad \frac{\prod_{i = 1}^m (\zeta(Z_{0,i})-\zeta(S_j))}{L^{S_j}(x_0, \dots, x_n)},
        \qquad \frac{\prod_{i = 1}^m (\zeta(Z_{0,i})-\zeta(Q))}{L^{Q}(x_0, \dots, x_n)}
    \end{align*}
    define nonvanishing holomorphic functions on $X$. Moreover, by assumption we have that $L^{R_\alpha}(\vec{x})$ and $L^{S_\beta}(\vec{x})$ will not vanish for $\vec{x} \in X$ and $\alpha > k$, $\beta > \ell$. It follows that
    \begin{align*}
        \Delta_{\boldsymbol{\eta}}(x_0, \dots, x_n) = \widetilde h(x_0, \dots, x_n) \prod_{i,j \in \lbrace{1, \dots, m\rbrace}, i \not = j} (\zeta(Z_{0,i})-\zeta(Z_{0,j}))
    \end{align*}
    where $\widetilde h \colon X \to \C$ is a nonvanishing holomorphic function. This shows that $\Delta_{\boldsymbol{\eta}}$ is locally, hence on all of $\C^{n+1} \setminus \lbrace 0 \rbrace$, a holomorphic function. It is the $\be$-discriminant as in the setting above we have for  $\vec{x} \in X$ that
    \begin{align*}
        \prod_{i,j \in \lbrace{1, \dots, m\rbrace}, i \not = j} (\zeta(Z_{0,i})-\zeta(Z_{0,j})) \propto \prod_{i,j = 1, i \not = j}^{\# z} \dist(Z_i, Z_j)
    \end{align*}
    where $Z_i, Z_j$ on the right-hand side denote all the zeros in $\mathsf{Z}(\bx)$ (note that only the zeros $Z_{0,1}, \dots, Z_{0,m}$ can coalesce for $\vec{x} \in X$ due to our assumptions on $X$).
    It now follows from the definition \eqref{DiscrDef} that $\Delta_{\boldsymbol{\eta}}$ is a homogeneous \emph{polynomial} and one can compute directly that 
    \begin{align*}
        \Delta_{\boldsymbol{\eta}}(\lambda x_0, \dots, \lambda x_n) = \lambda^{2g-2+2 \cdot \z} \Delta_{\boldsymbol{\eta}}(x_0, \dots, x_n), \qquad \lambda \in \C \setminus \lbrace 0 \rbrace.
    \end{align*}
    This finishes the proof.
\end{proof}
\begin{remark}
    It is clear that as stated $\Delta_{\boldsymbol{\eta}}$ is only defined up to a nonzero multiplicative constant, which is not significant for our purposes. 
\end{remark}
\begin{remark}
    Note that while formula \eqref{DiscrDef} gives us the degree of $\Delta_{\be}$, it does not provide us with its coefficients. In fact, computing the coefficients of discriminants/resultants is a nontrivial problem even in the classical case \cite[Ch.~12]{GKZ94}. We will not attempt to obtain, either exact or numerical, expressions for the coefficients of $\Delta_{\be}$ in the present paper.
\end{remark}
\subsection{Generalizations}\label{Sect:Generealizations} Note that in the construction of the $\be$-discriminant only two properties of the zero map $\mathsf{Z}(\bx)$ were essential:
\begin{enumerate}[(i)]
    \item Global property: $\mathsf{Z}(\bx)$ is mapped under the Abel map to a constant independent of $\bx \in \C P^n$.
    \item Local property: $\mathsf{Z}(\bx)$ is continuous in $\bx \in \C P^n$ and the individual $Z_j$, if distinct, are locally analytic in $\bx \in \C P^n$.
\end{enumerate}
Instead of $\be$, one could view some zero map $\mathsf{Z} \colon \C P^n \to \Sym^{\#z} \mathcal R$ satisfying the above two properties as the fundamental object and define a corresponding $\mathsf{Z}$-discriminant. A simple example would be to consider the map
\begin{align*}
    \widetilde{\mathsf{Z}} \colon \bx \mapsto \mathsf{Z}(\bx) + \mathsf D,
\end{align*}
where $\mathsf{Z}$ is defined for some $\be(\vec{x})$ as in \eqref{DefZ} and $\mathsf D$ is some positive divisor. In other words, we can always add stationary zeros provided the stationary zeros of the resulting $\widetilde{\mathsf{Z}}$ will not be of higher multiplicity; the above two conditions will remain valid and the degree of the $\widetilde{\mathsf{Z}}$-discriminant will be $2g-2+2 \cdot \# \widetilde z$ with $\# \widetilde z = \z + \deg(\mathsf D)$.  

Alternatively, if $\mathsf S$ is a subset of the stationary zeros of $\be(\vec{x})$, we could instead define
\begin{align}\label{Zhat}
    \widehat{\mathsf{Z}}\colon \bx \mapsto  \mathsf{Z}(\bx) - \mathsf S.
\end{align}
Again, the construction of the corresponding $\widehat{\mathsf{Z}}$-discriminant would mirror the construction of the $\be$-discriminant. Interestingly, this case appears naturally in dimer models of (quasi-)periodic tilings of the hexagon in the ramified case, see Section \ref{SectHex}.
\begin{remark}\label{Rmk:Bundles}
    Alternatively, adding and subtracting stationary zeros (or poles) can be achieved by tensoring the holomorphic line bundle $\mathcal L \to \mathcal R$ with an auxiliary holomorphic line bundle $[\mathsf C] \to \mathcal R$ given in terms of a divisor $\mathsf C$ on $\mathcal R$. Here, $[\mathsf{C}]$ denotes the line bundle which has a meromorphic section $s$ with $\emph{Divisor}(s) = \mathsf{C}$,  see \cite[Ch.~1]{GH94}. If $s$ is such a meromorphic section  (which always exists by \cite[p.~135]{GH94}), it follows that $\widetilde \be = s \be$ is a meromorphic section of $\mathcal L \otimes [\mathsf C]$ and
    \begin{align*}
        \emph{Divisor}(\widetilde \be(\vec{x})) = \mathsf C + \emph{Divisor}(\be(\vec{x})).
    \end{align*}
     In particular, one can always assume w.l.o.g that $\be$ is a \emph{holomorphic} section of some line bundle, or alternatively that $\be$ is a meromorphic \emph{function}.
\end{remark}
\section{Results on arctic curves}\label{Sect:ArcticCurves}
\subsection{Irreducibility of \texorpdfstring{$\be$}{}-discriminants related to dimer models}
Before we state our results on arctic curve we first need to address the issue of \emph{irreducibility} of $\be$-discriminants. Note that our Hypothesis \ref{HypoEta} is too general to guarantee irreducibility. In fact, consider the simple case of $\mathcal R = \C P^1$, and $\eta_0 = 1$, $\eta_1 = z^2$ meromorphic functions in the standard chart $(z, \C)$. Then the discriminant of $\be(\vec{x}) = x_0 \eta_0 + x_1 \eta_1$ is given according to Section \ref{SectGenus0} by $\Delta_{\be}(x_0, x_1) = \Delta_{2}(x_0, 0, x_1) = - 4x_0 x_1$. Here, $\Delta_2$ is just the classical discriminant for quadratic polynomials. Clearly, in that case $\Delta_{\be}$ factorizes.

The reason irreducibility is important is because arctic curves $\AC$, unlike the \emph{complex} arctic curves $\mathcal A \mathcal H \mathcal S$ in \eqref{AHS} given by the zeros of $\be$-discriminants in a projective space, are curves in the \emph{real} plane. Hence, the degree of an arctic curve is the minimal degree of a polynomial whose real zero set coincides with the arctic curve. A priori, this implies that the degree of the $\be$-discriminant $\Delta_{\be}(x_0, x_1, x_2)$ will only be an upper bound on the degree of the corresponding arctic curve given by
\begin{align*}
\AC = \lbrace (x_1, x_2) \in \R^2 \colon \Delta_{\be}(1, x_1, x_2) = 0 \rbrace.
\end{align*}
Irreducibility of $\Delta_{\be}$ turns out to be crucial in showing equality between its degree and the degree of the corresponding arctic curve.

Let us from now assume that $\be(\vec{x}) = x_0 \eta_0 + x_1 \eta_1 + x_2 \eta_2$ where the $\eta_j$ are some meromorphic differentials on a compact Riemann surface $\mathcal R$. We will express $\eta_j$ in local charts by $f_j(\zeta) d\zeta$. The following hypothesis will play a key role in showing that $\be$-discriminants associated to dimer models are irreducible.\footnote{As we are only interested in its zero set, it will be more natural in the following hypothesis to view $\be$ as a function of $\bx \in \C P^2$ instead of $\vec{x} \in \C^{3}\setminus \lbrace 0 \rbrace$.}
\begin{hypothesis}\label{HypothesisX}
    Assume there exists a closed subset $X \subset \mathcal R$ such that
    \begin{enumerate}[(i)]
        \item for $\bx \in \R P^2$ the meromorphic differential $\be(\bx)$ has all its zeros of higher multiplicity in $X$,
        \\
        \item on $X$ we have 
        \begin{align*}
    \emph{rank}\begin{pmatrix}
        f_0(\zeta) & f_1(\zeta) & f_2(\zeta)
        \\
        f_0'(\zeta) & f_1'(\zeta) & f_2'(\zeta)
    \end{pmatrix} = 2,
\end{align*}
where $\eta_j = f_j(\zeta)d\zeta$ in some local chart (this rank condition is chart independent).
    \end{enumerate}
\end{hypothesis}
In the case of (quasi-)periodic dimer models the Riemann surface $\mathcal R$ is an M-curve and is therefore naturally equipped with real ovals $X_0, \dots, X_g$ which are homeomorphic to circles. In that case we would choose $X = \cup_{i=0}^g X_i$. Moreover, with the correct labeling we then even have the stronger condition 
\begin{align*}
\det\begin{pmatrix}
    f_1 & f_2
    \\
    f_1' & f_2'
\end{pmatrix} \not = 0
\end{align*} 
on the union $X$ of the real ovals, see \cite[Lem.~12]{BB24}. In fact, the vanishing of the above determinant would imply a coalescence of zeros for $\be(\vec{x}_{\ker})$ `at infinity', that is for 
\begin{align*}
    \vec{x}_{\ker} = \begin{pmatrix}
        0
        \\
        \widetilde x_1
        \\
        \widetilde x_2
    \end{pmatrix} \in \ker\begin{pmatrix}
    f_0 & f_1 & f_2
    \\
    f_0' & f_1' & f_2'
\end{pmatrix}
\end{align*}
with $(\widetilde x_1, \widetilde x_2) \in \R^2 \setminus \lbrace 0 \rbrace$. However, a simple counting argument shows that the zeros of $\be$ must be distinct for $\bx \in \R P^2 \setminus \Omega$ (cf.~\cite[p.~44]{BB23+}, \cite[Lem.~12]{BB24}). A similar argument holds for tilings of the hexagon studied in \cite[Sect.~7]{BB24}. This is not surprising as the arctic curve is confined to a finite region $\Omega$. 

We can know state a key technical result.
\begin{lemma}\label{LemIrr}
    Under Hypothesis \ref{HypothesisX} the $\be$-discriminant $\Delta_{\be}$ is a power of an irreducible homogeneous polynomial. 
\end{lemma}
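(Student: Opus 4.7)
The strategy is to realize the vanishing locus of $\Delta_{\be}$ in $\C P^2$ as the image of a single morphism from the irreducible Riemann surface $\cR$, and then conclude by a degree comparison with Theorem~\ref{mainTheorem}.

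The first step is to construct the \emph{Wronskian map} $\Psi\colon \cR \to \C P^2$. In a local chart where $\eta_j = f_j(\zeta)\,d\zeta$, I would set
\begin{align*}
\Psi(P) = \bigl[\,M_0(P) : -M_1(P) : M_2(P)\,\bigr],
\end{align*}
where $M_i$ is the $2\times 2$ minor of $\begin{pmatrix} f_0 & f_1 & f_2 \\ f_0' & f_1' & f_2' \end{pmatrix}$ obtained by deleting the $i$-th column. A routine chart-change calculation shows that $[M_0 : -M_1 : M_2]$ is well-defined as a point of $\C P^2$, giving a rational map whose indeterminacy locus is exactly the set of points where the rank of the $2\times 3$ matrix drops below $2$. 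Hypothesis~\ref{HypothesisX}(ii) implies $\Psi$ is regular on an open neighborhood of $X$, and since $\cR$ is a smooth projective curve, $\Psi$ extends to a morphism on all of $\cR$; its image $C := \Psi(\cR) \subset \C P^2$ is an irreducible projective curve.

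Next I would identify $V(\Delta_{\be})$ with $C$. The inclusion $C \subset V(\Delta_{\be})$ is immediate: for $P$ in the regular locus of $\Psi$, $\Psi(P)$ is the unique point of $\C P^2$ such that $\be(\Psi(P))$ has a zero of multiplicity $\geq 2$ at $P$. For the reverse inclusion on real points, any $\bx \in V(\Delta_{\be}) \cap \R P^2$ admits by Hypothesis~\ref{HypothesisX}(i) a higher-multiplicity zero $P \in X$ of $\be(\bx)$, which by (ii) is uniquely recovered as $\Psi(P)$; thus $V(\Delta_{\be}) \cap \R P^2 \subset C$. A degree comparison then closes the picture: the $M_i$ are global sections of a line bundle on $\cR$ of degree $6g - 6 + 2\p$, which coincides with $\deg \Delta_{\be} = 2g - 2 + 2\z$ from Theorem~\ref{mainTheorem}. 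Letting $P_C$ be the irreducible homogeneous polynomial defining $C$ and $d := \deg \Psi$, absent base points one has $d \cdot \deg P_C = \deg \Delta_{\be}$. Factoring $\Delta_{\be} = c\,P_C^k\,Q$ with $P_C \nmid Q$, a degree count gives $k \deg P_C + \deg Q = d \deg P_C$, and the lemma reduces to showing $\deg Q = 0$.

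The main obstacle is this final reduction: ruling out any residual component $V(Q)$ of $V(\Delta_{\be})$ other than $C$. The real inclusion above shows that such a component $V(Q_j) \not\subset C$ would have $V(Q_j)\cap \R P^2 \subset V(Q_j)\cap C$, a finite set by B\'ezout, so $V(Q_j)$ would contain only finitely many real points. To exclude this, I would invoke the real structure intrinsic to the setting of Hypothesis~\ref{HypothesisX}: the Riemann surface $\cR$ carries an anti-holomorphic involution under which the $\eta_j$ are real, so $\Delta_{\be}$ has real coefficients and each irreducible factor of $\Delta_{\be}$ is defined over $\R$. Moreover, the $1$-real-dimensional image $\Psi(X)$ of the union of real ovals is Zariski-dense in $C$ and parametrizes the $1$-real-dimensional part of $V(\Delta_{\be})\cap \R P^2$. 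Any irreducible factor defined over $\R$ must then either be a multiple of $P_C$ or have an at-most-finite real locus; but such a factor would introduce a real-codimension-$2$ isolated intersection with $C$, which is incompatible with the continuous parametrization of the arctic locus by $X$. This forces $Q$ to be constant, hence $\Delta_{\be} = c\,P_C^d$, a power of an irreducible homogeneous polynomial.
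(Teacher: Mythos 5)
Your Wronskian map $\Psi$ is precisely the parametrization $S\mapsto\bx(S)$ used in the paper's proof, and the overall shape of the argument (realize $V(\Delta_{\be})$ as the image of a single morphism from the connected curve $\cR$, then compare degrees) matches the paper's strategy. However, there is a genuine gap in the step where you try to identify $V(\Delta_{\be})$ with $C=\Psi(\cR)$.

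You only establish $V(\Delta_{\be})\cap\R P^2\subset C$, using Hypothesis~\ref{HypothesisX}(i)--(ii) over the real locus. To rule out a residual \emph{complex} component $V(Q)$ you argue that such a factor would have a finite real locus, which you claim is ``incompatible with the continuous parametrization of the arctic locus by $X$.'' This is not a valid obstruction: a real homogeneous factor such as $x_0^2+x_1^2+x_2^2$ has no real projective zeros at all, and nothing in the hypotheses prevents $\Delta_{\be}$ a priori from acquiring such a factor from complex coalescences occurring away from $X$. Indeed, if the $2\times 3$ matrix in Hypothesis~\ref{HypothesisX}(ii) had rank $1$ at some $P\in\cR\setminus X$, then an entire projective line $\ell_P\subset\C P^2$ of parameters $\bx$ would give $\be(\bx)$ a higher-order zero at $P$, so $\ell_P\subset V(\Delta_{\be})$ while $\ell_P\not\subset C$; your argument does not exclude this scenario. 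The paper closes exactly this hole by first proving that the rank equals $2$ \emph{everywhere} on $\cR$, not only on $X$: if the rank dropped to $1$ at some $P\notin X$, the resulting $2$-dimensional kernel, intersected with $\R^3$, would contain a nonzero real vector $\vec{x}_{\text{real}}$, so $\be(\bx_{\text{real}})$ would have a higher-order zero at $P\notin X$, contradicting Hypothesis~\ref{HypothesisX}(i). Only after this global rank statement can one conclude that every $\bx_0\in V(\Delta_{\be})$ (over $\C$) equals $\Psi(S)$ for some $S$, and then the Nullstellensatz forces all irreducible factors of $\Delta_{\be}$ to coincide. To repair your proposal you would need to insert this rank-is-$2$-everywhere argument before the degree comparison.
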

\begin{proof}
    We first show that 
    \begin{align}\label{2x3matrix}
    \text{rank}\begin{pmatrix}
        f_0 & f_1 & f_2
        \\
        f_0' & f_1' & f_2'
    \end{pmatrix} = 2
    \end{align}
    on all of $\mathcal R$. Assume that the rank is lower for some $P \in \mathcal R$. Then by Hypothesis \ref{HypothesisX}(ii) we must have $P \not \in X$. Moreover, the rank of the matrix in \eqref{2x3matrix} at $P$ must then be equal to $1$, as otherwise $\be(\bx)$ would have for all $\bx \in \C P^2$ a higher order zero at $P$, contradicting Hypothesis \ref{HypothesisX}(i) (or even Hypothesis \ref{HypoEta}). 

    Thus, there is an up to multiples unique nonzero vector 
    \begin{align*}
        \vec{v} \in \text{span}_{\C}\big\lbrace \big(f_0(\zeta_P), f_1(\zeta_P), f_2(\zeta_P)\big)^T, \big(f_0'(\zeta_P), f_1'(\zeta_P), f_2'(\zeta_P)\big)^T\big\rbrace
    \end{align*}
    such that for all $\vec{x} \in \C^3 \setminus \lbrace 0 \rbrace$ satisfying $\langle \vec{x} , \vec{v} \rangle = 0$ (without complex conjugation) the corresponding $\be(\bx)$ will have a zero of higher order at $P$. Under the additional assumption that $\vec{x} \in \R^3$, the equation $\langle \vec{x} , \vec{v} \rangle = 0$ becomes
    \begin{align*}
        \begin{pmatrix}
            a_0 & a_1 & a_2
            \\
            b_0 & b_1 & b_2
        \end{pmatrix}\vec{x} = 0,
    \end{align*}
    where $v_j = a_j + i b_j$ with $a_j, b_j \in \R$. Clearly, this equation must have at least one nonzero solution $\vec{x}_{\text{real}} \in \R^3$. Then for the corresponding $\bx_{\text{real}} \in \R P^2$ we have that $\be(\bx_{\text{real}})$ has a higher order zero at $P \not \in X$, contradicting Hypothesis \ref{HypothesisX}(i).

    Thus we see that \eqref{2x3matrix} holds everywhere on $\mathcal R$. In particular, for each $S \in \cR$ there is a unique $\bx(S) \in \C P^2$ such that $\be(\bx(S))$ has a higher order zero at $S$. Assume now that $\Delta_{\be}$ has the following factorization into irreducible (necessarily homogeneous) polynomials:
\begin{align*}
    \Delta_{\be}(\vec{x}) = p_1(\vec{x}) p_2(\vec{x}) \dots p_m(\vec{x}).
\end{align*}
Locally on $\mathcal R$ one can always find a holomorphic mapping, e.g.\footnote{Restricting $S$ to lie on the real ovals, the mapping $S \to \vec{x}(S)$ becomes equivalent to the parametrization of the arctic curve in \cite[Rmk.~4.13]{BB23+} and \cite[Prop.~11]{BB24}.}
\begin{align*}
    S \to \vec{x}(S) = \begin{pmatrix}
        f_0(\zeta_S)
        \\
        f_1(\zeta_S)
        \\
        f_2(\zeta_S)
    \end{pmatrix} \times 
    \begin{pmatrix}
        f_0'(\zeta_S)
        \\
        f_1'(\zeta_S)
        \\
        f_2'(\zeta_S)
    \end{pmatrix} \in \C^3 \setminus \lbrace 0 \rbrace,
\end{align*}
lifting the previous mapping $S \to \bx(S) \in \C P^2$. It follows that $\Delta_{\be}(\vec{x}(S))$ vanishes locally on $\mathcal R$ (due to homogeneity the choice of the actual lift does not matter). In particular as $\mathcal R$ is connected, there is an index $k \in \lbrace 1, \dots, m \rbrace$, such that $p_k(\vec{x}(S))$ vanishes on all of $\cR$ (one might have to use multiple lifts; again their choice will not matter). Now for any zero $\vec{x}_0$ of $\Delta_{\be}$ there exists (at least one) $S \in \mathcal R$ such that $\be(\vec{x}_0)$ has a higher order zero at $S$. By the previous rank condition we must have $\vec{x}_0 = \vec{x}(S)$ (up to nonzero multiples). It follows that 
\begin{align*}
    \Delta_{\be}(\vec{x}_0) = 0 \quad \Longrightarrow \quad p_k(\vec{x}_0) = 0.
\end{align*}
Note the other implication holds trivially. From Hilbert's Nullstellensatz (see e.g.~\cite[Thm.~1.3A]{Hartshorne}) we can conclude that there exists an exponent $r \in \mathbb N$ such that $p_k^r$ is divisible by $\Delta_{\be}$.  But this implies, by the unique factorization property, that up to  nonzero multiplicative constants $p_k = p_j$ for all $j \not = k$. Therefore $\Delta_{\be} = p^m$ for some irreducible homogeneous polynomial $p$. This finishes the proof.
\end{proof}
\subsection{Arctic curves of the Aztec diamond}\label{Sect:ArcAzt}
From Theorem \ref{mainTheorem} and Lemma \ref{LemIrr}
we easily obtain the algebraic equation and the degree of arctic curves of nondegenerate $k \times \ell$-periodic models of the Aztec diamond. We recall from \cite{BB23+} that such models have $g = (k-1)(\ell-1)$ many smooth regions and $\# p = 2(k + \ell)$ many frozen regions.
\begin{theorem}\label{Thm:Aztec2}
    Consider a nondegenerate $k \times \ell$-periodic model of the Aztec diamond and let $\be$ be the associated meromorphic differential $\be = dF$ from Lemma \ref{LemmaZeros}. Then the polynomial $A(x_1, x_2) = \Delta_{\be}(1, x_1, x_2)$ is of degree $6g + 2\cdot \# p - 6 = 6 k \ell - 2(k+\ell)$ and is the polynomial of minimal degree satisfying 
    \begin{align*}
        \AC = \lbrace (x_1,x_2) \in \R^2 \colon A(x_1, x_2) = 0  \rbrace,
    \end{align*}
    where $\AC \subset [-1,1]^2$ is the arctic curve of the model. In particular, formula \eqref{DegFormula} holds.
\end{theorem}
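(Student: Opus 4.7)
The plan is to combine Theorem \ref{mainTheorem}, which pins down the degree of $\Delta_{\be}$, with Lemma \ref{LemIrr}, which constrains the factorization of $\Delta_{\be}$ to a power $p^m$ of a single irreducible polynomial, then sharpen the latter to $m = 1$ by a local vanishing-order computation, and finally deduce minimality of $A$ in $\R[x_1, x_2]$ by a Zariski-density argument.

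For the degree, Theorem \ref{mainTheorem} gives $\deg \Delta_{\be} = 2g - 2 + 2\z$. Since $\be = dF$ is a meromorphic differential, $\mathcal L$ is the canonical bundle and $\z = \p + 2g - 2$, so $\deg \Delta_{\be} = 6g + 2\p - 6$; substituting $g = (k-1)(\ell-1)$ and $\p = 2(k+\ell)$ yields $6k\ell - 2(k+\ell)$. To invoke Lemma \ref{LemIrr}, I would take $X = \bigcup_{i=0}^g X_i$, the union of the real ovals of $\cR$. Part (i) of Hypothesis \ref{HypothesisX} holds because $\eta_0, \eta_1, \eta_2$ are all real on the real ovals, so for real $\bx$ the nonreal zeros of $\be(\bx)$ come in complex-conjugate pairs and a coalescence must occur at a fixed point of the involution, hence on $X$. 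Part (ii) is \cite[Lem.~12]{BB24}, as noted right after Hypothesis \ref{HypothesisX}. Lemma \ref{LemIrr} then gives $\Delta_{\be} = c \cdot p^m$ with $p$ irreducible and homogeneous.

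The technical crux, and the main obstacle of the plan, is showing $m = 1$. I would fix a generic $\bx_0 \in \mathcal A \mathcal H \mathcal S$ at which exactly one pair $Z_1, Z_2$ of zeros of $\be(\bx_0)$ collapses to a simple double zero at a point $Z_0$, all remaining zeros staying simple and bounded away from $Z_0$. In a local chart $(\zeta, V)$ around $Z_0$ the elementary symmetric functions $\zeta(Z_1) + \zeta(Z_2)$ and $\zeta(Z_1)\zeta(Z_2)$ depend holomorphically on $\vec{x}$, so $D(\vec{x}) := (\zeta(Z_1) - \zeta(Z_2))^2$ is holomorphic; the simplicity of the double zero together with the implicit function theorem imply that $D$ has a simple zero along $\mathcal A \mathcal H \mathcal S$ near $\bx_0$. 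The local description of $\Delta_{\be}$ obtained in the proof of Theorem \ref{mainTheorem}, combined with the proportionalities \eqref{Pipropto} and \eqref{Pi2Propto}, shows that after dividing out the nonvanishing contributions of the unchanged pairs and of the linear factors $L^{R_j}, L^{S_j}, L^Q$ appearing in \eqref{DiscrDef}, $\Delta_{\be}$ vanishes exactly as $D(\vec{x})$ does. Hence the vanishing order of $\Delta_{\be}$ along $\mathcal A \mathcal H \mathcal S$ is one, forcing $m = 1$ and the irreducibility of $\Delta_{\be}$.

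Finally, $x_0$ does not divide $\Delta_{\be}$: the rank argument recalled just after Hypothesis \ref{HypothesisX} shows that $\be(\bx)$ has pairwise distinct zeros for every real $\bx = [0 : \tilde x_1 : \tilde x_2]$ with $(\tilde x_1, \tilde x_2) \neq 0$, so $\Delta_{\be}(0, x_1, x_2) \not\equiv 0$. Therefore $A(x_1, x_2) = \Delta_{\be}(1, x_1, x_2)$ is an irreducible polynomial of degree $6k\ell - 2(k+\ell)$. Since $\AC$ is an infinite real $1$-dimensional set contained in the irreducible complex affine curve $\{A = 0\} \subset \C^2$, it is Zariski-dense there, and any real polynomial $B$ vanishing on $\AC$ must therefore be divisible by $A$, giving the asserted minimality. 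The remaining steps amount to degree bookkeeping and standard algebro-geometric reasoning.
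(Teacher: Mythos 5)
Your proposal is correct and follows essentially the same route as the paper: the degree from Theorem \ref{mainTheorem} together with $\z=\p+2g-2$, Lemma \ref{LemIrr} applied with $X$ the union of the real ovals, a simple (first-order) coalescence of zeros away from the cusps to force the exponent in $\Delta_{\be}=p^m$ to be $m=1$, and irreducibility to exclude a factor of $x_0$ and to deduce minimality. The only weak point is your justification of Hypothesis \ref{HypothesisX}(i): conjugation symmetry alone does not rule out two \emph{non-conjugate} complex zeros merging off the real ovals, so, as the paper implicitly does, one should appeal to the known zero structure of $dF$ from \cite{BB23+} (for real $(x_1,x_2)$ at most one conjugate pair of zeros lies off the real ovals), rather than to symmetry alone.
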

\begin{proof}
    By Theorem \ref{mainTheorem} we know that the $\be$-discriminant $\Delta_{\be}$ is of degree $2g-2 + 2 \cdot \# z = 6g + 2 \cdot  \# p-6$. Moreover, as Hypothesis \ref{HypothesisX} is satisfied for this model by choosing $X$ to be the union of the real ovals of $\mathcal R$, we know by Lemma \ref{LemIrr} that $\Delta_{\be}$ is a power of an irreducible polynomial. However, as away from the cusp points of the arctic curve there is a \emph{simple} coalescence of zeros, it already follows that $A(x_1, x_2) = \Delta_{\be}(1, x_1, x_2)$ will vanish to first order, hence $\Delta_{\be}$ is in fact irreducible. Irreducibility of $A$ follows directly from the irreducibility of $\Delta_{\be}$. Moreover, the degree of $A$ equals the degree of $\Delta_{\be}$ as otherwise one could factor out a power of $x_0$ from $\Delta_{\be}$, again contradicting irreducibility. Now it follows from Lemma \ref{LemmaZeros} that the zero set of $A$ coincides exactly with the arctic curve of the model and the statement on the minimality is implied by the irreducibility of $A$. This finishes the proof. 
\end{proof}
\begin{remark}
    Note that the argument above implies that $\AC$ is an irreducible real algebraic curve. This fact was already observed for a wide class of dimer models in \cite[Thm.~5.15]{ADPZ} using the variational characterization of the height function.
\end{remark}

\begin{remark}
    Theorem \ref{Thm:Aztec2}  gives us a very compact proof of the arctic circle theorem for uniform tilings of the Aztec diamond due to Jockusch, Propp and Shor \cite{JPS}. In fact, any bounded degree two curve with $\frac{\pi}{2}$-rotational symmetry must already be a circle.
\end{remark}
Let us now briefly mention the quasi-periodic tilings of the Aztec diamond recently studied in \cite{BB24}, \cite{BdT24} (we will mostly follow \cite{BB24}). Such quasi-periodic models are given in terms of so-called Harnack data (see also \cite[Sect.~8]{BBS24}), denoted by $\lbrace \mathcal R, \lbrace \alpha_i^\pm, \beta_j^\pm \rbrace \rbrace$, consisting of an M-curve $\mathcal R$ playing the role of the spectral curve and a choice of finitely many special points $\lbrace \alpha_i^\pm, \beta_j^\pm \rbrace = \lbrace \alpha^\pm_1, \dots, \alpha^\pm_m,  \beta^\pm_1, \dots, \beta^\pm_m \rbrace \in X_0 \subset \mathcal R$, also called \emph{angles}, satisfying a cyclic condition one of the real ovals $X_0$, cf.~Fig.~\ref{Fig:X0_quasi}. 
\begin{figure}
\vspace{23pt}
    \begin{tikzpicture}
  \coordinate (center) at (0,2);
  \def\radius{2cm}
  \draw (center) circle[radius=\radius];

  \fill[red] (center) ++(0:\radius) circle[radius=2pt];
  \fill[red] (center) ++(-20:\radius) circle[radius=2pt];
  \fill[red] (center) ++(20:\radius) circle[radius=2pt];
  \fill[red] (center) ++(180:\radius) circle[radius=2pt];
  \fill[red] (center) ++(200:\radius) circle[radius=2pt];
  \fill[red] (center) ++(160:\radius) circle[radius=2pt];


  \fill[blue] (center) ++(90:\radius) circle[radius=2pt];
  \fill[blue] (center) ++(70:\radius) circle[radius=2pt];
  \fill[blue] (center) ++(110:\radius) circle[radius=2pt];
  \fill[blue] (center) ++(-90:\radius) circle[radius=2pt];
  \fill[blue] (center) ++(-70:\radius) circle[radius=2pt];
  \fill[blue] (center) ++(-110:\radius) circle[radius=2pt];

\put(-20,-12){\textcolor{blue}{$\underbrace{\hspace{40pt}}_{}$}};
  \put(-24,-27){\textcolor{blue}{$\alpha^-_{1}, \dots, \alpha^-_{m}$}};

  \put(-20,123){\textcolor{blue}{$\overbrace{\hspace{40pt}}_{}$}};
  \put(-25,135){\textcolor{blue}{$\alpha^+_{1}, \dots, \alpha^+_{m}$}};

  \put(-80,54){\textcolor{red}{$\begin{cases}
      \\
      \\
      \\
  \end{cases}$}};

  \put(69,54){\textcolor{red}{$\begin{rcases*}
\\
\\
\\
\end{rcases*}$}};

\put(-140,54){\textcolor{red}{$\beta^-_{1}, \dots, \beta^-_{m}$}};

\put(86,54){\textcolor{red}{$\beta^+_1, \dots, \beta^+_m$}};
\end{tikzpicture}
\vspace{30pt}
    \caption{Schematic view of the placement of the angles on the oval $X_0$, cf.~Fig.~\ref{Fig:UnboundedOval}.}
    
    \label{Fig:X0_quasi}
\end{figure}
The resulting weights are in general \emph{not} periodic and given in terms of so-called \emph{train tracks} and a \emph{discrete Abel map} which is defined in terms of the angles. The initial construction of these weights for quite general planar graphs goes back to a paper of Fock \cite{Fock}. 

Qualitatively, large tilings of quasi-periodic Aztec diamonds behave similarly to the periodic ones, see \cite{BB24}, \cite{BdT24}. These models are also more general and encompass the periodic weights as a special case, see \cite[Prop.~11]{BdT24}. However, as their definition would go beyond the scope of the present paper, we will refer to \cite[Sect.~2]{BBS24}, \cite[Sect.~2]{BdT24} for a detailed description.

\begin{remark}
    In \cite{BB24} the frozen, rough and smooth regions are defined in terms of the limit shape (see \cite[Lem.~13]{BB24}).
    This differs from the standard definition based on the decay rate of dimer correlations (or the fluctuations of the height function) following Kenyon, Okounkov and Sheffield in \cite{KOS06}. However, at least in the case of quasi-periodic models of the Aztec diamond, explicit double contour formulas for the inverse of the Kasteleyn matrix were obtained in \cite[Thm.~13]{BdT24}. One would expect that these formulas can be used to show the correct correlation decay rate in each region by proving convergence to the appropriate translation invariant Gibbs measure as in \cite{BB23+}.
\end{remark}

The bottom line is that the aforementioned paper \cite{BB24} provides, among others, a characterization of arctic curves for quasi-periodic models of the Aztec diamond in terms of a coalescence of zeros condition of a meromorphic differential analogous to Lemma \ref{LemmaZeros}. The meromorphic differential has the standard form $\be(1, x_1, x_2) = \eta_0 + x_1 \eta_1 + x_2 \eta_2$, $(x_1, x_2) \in [-1,1]^2$ and is introduced in \cite[Eq.~(25)]{BB24}. It follows from \cite[Table 2]{BB24} that $\be$ has a simple pole for every \emph{distinct} element of the set $\lbrace \alpha_i^\pm, \beta_j^\pm \rbrace$ of angles. Geometrically, the number $\p$ of poles of $\be$ equals the number of frozen regions (see e.g.~Fig.~\ref{Fig:kxl} for a $2\times 2$-periodic model with eight frozen regions and one smooth region).

Now one can conclude exactly as in Theorem \ref{Thm:Aztec2} that the polynomial $A(x_1, x_2) = \Delta_{\be}(1, x_1, x_2)$ has the arctic curve $\AC$ as its real zero set and is the polynomial of minimal degree to do so. Again, the degree of $A(x_1, x_2)$ is equal to $6g+2\cdot \p-6$, where $g$ is the number of smooth regions (also equal to the genus of the spectral curve $\mathcal R$) and $\p$ is the number of frozen regions (also equal to the degree of the pole divisor of $\be(\vec{x})$), agreeing with formula \eqref{DegFormula}.

\subsection{Arctic curves of the hexagon}\label{SectHex}
For the description of quasi-periodic models of the hexagon we again refer to \cite{BB24}. This model is somewhat more complicated than the quasi-periodic Aztec diamond, as the corresponding meromorphic differential, now denoted by  $\widehat \be(\vec{x})$, lives on a double cover $\widehat{\mathcal{R}}$ of the spectral curve $\mathcal R$. This necessitates a distinction between the double cover being \emph{ramified} vs.~\emph{unramified}. Both cases are discussed in \cite[Sect.~7]{BB24}.

\medskip

\noindent
\textbf{Unramified cover.} The unramified case treated in \cite[Sect.~7.2]{BB24} leads to a double cover $\widehat{\cR}$ of genus $\widehat g = 2g-1$, where $g \geq 1$ is the genus of the spectral curve $\mathcal R$. The corresponding model of the hexagon will have $\widehat g$ smooth regions and $\p$ frozen regions, where $\p$ is equal to the number of poles of $\widehat\be$.  In particular, $\widehat\be$ has $\z = \p + 2\widehat g -2$ zeros and thus the corresponding $\widehat\be$-discriminant is of degree $6\widehat g + 2\cdot \p -6$.

As explained in \cite[p.~42]{BB24}, the arctic curve is characterized by a coalescence of zeros condition for $\widehat\be$ analogous to Lemma \ref{LemmaZeros}. Thus, as for the $k \times \ell$-periodic Aztec diamond we can determine the degree of the arctic curve:
\begin{theorem}
    For the quasi-periodic model of the hexagon with an unramified cover studied in \cite[Sect.~7.2]{BB24} we have
\begin{align*}
    \textbf{\emph{degree of arctic curve}} = 6\cdot \#\textbf{\emph{smooth regions}}+2 \cdot \# \textbf{\emph{frozen regions}} - 6.
\end{align*}
\end{theorem}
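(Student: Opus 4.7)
The plan is to replay the argument of Theorem \ref{Thm:Aztec2} essentially verbatim, but with $\widehat{\boldsymbol{\eta}}$ on $\widehat{\mathcal R}$ in place of $dF$ on $\mathcal R$. First I would invoke Theorem \ref{mainTheorem} applied to the meromorphic differential $\widehat{\boldsymbol{\eta}}(\vec{x}) = x_0\eta_0 + x_1\eta_1 + x_2\eta_2$ on the compact Riemann surface $\widehat{\mathcal R}$ of genus $\widehat g = 2g-1$. Since $\widehat{\boldsymbol{\eta}}$ has $\p$ simple poles (one per frozen region, per the discussion preceding the statement) and hence $\#z = \p + 2\widehat g - 2$ zeros, the $\widehat{\boldsymbol{\eta}}$-discriminant $\Delta_{\widehat{\boldsymbol{\eta}}}$ is a well-defined homogeneous polynomial of degree
\begin{align*}
    2\widehat g - 2 + 2 \cdot \#z = 6\widehat g + 2 \cdot \p - 6.
\end{align*}

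Next I would verify Hypothesis \ref{HypothesisX} for $\widehat{\boldsymbol{\eta}}$ in order to apply Lemma \ref{LemIrr}. Since $\widehat{\mathcal R}$ is an unramified double cover of an M-curve, it inherits its own real structure, and one takes $X$ to be the union of its real ovals. Condition (i) of Hypothesis \ref{HypothesisX} is precisely the statement that for real parameters $\bx \in \R P^2$ all multiple zeros of $\widehat{\boldsymbol{\eta}}(\bx)$ lie on the real locus --- this is part of the characterization of the arctic curve in \cite[Sect.~7.2]{BB24} that plays the role of Lemma \ref{LemmaZeros} here. Condition (ii), the rank-two condition, follows from the same counting/asymptotic argument used for the Aztec case (cf.\ \cite[Lem.~12]{BB24}): if the rank dropped somewhere on $X$, one could solve a $2\times 3$ real linear system to produce a nontrivial coalescence at infinity, contradicting the confinement of the arctic curve to a bounded region $\Omega$. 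With Hypothesis \ref{HypothesisX} in hand, Lemma \ref{LemIrr} gives that $\Delta_{\widehat{\boldsymbol{\eta}}}$ is a power of an irreducible homogeneous polynomial.

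Finally, I would promote ``a power'' to ``the first power''. Away from the cusps of the arctic curve, the coalescence of zeros of $\widehat{\boldsymbol{\eta}}(\bx)$ is simple, so from the local structure \eqref{MainDef} the dehomogenization $A(x_1, x_2) = \Delta_{\widehat{\boldsymbol{\eta}}}(1, x_1, x_2)$ vanishes to first order at a generic point of the arctic curve, which forces the exponent to be $1$ and hence $\Delta_{\widehat{\boldsymbol{\eta}}}$ itself to be irreducible. Irreducibility also forbids factoring any power of $x_0$ out of $\Delta_{\widehat{\boldsymbol{\eta}}}$, so $\deg A = \deg \Delta_{\widehat{\boldsymbol{\eta}}}$, and the hexagon analogue of Lemma \ref{LemmaZeros} identifies the real zero set of $A$ with $\mathcal A \mathcal C$. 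Combining these gives $\deg \mathcal A \mathcal C = 6\widehat g + 2\p - 6$ with $\widehat g = \#\textbf{smooth regions}$ and $\p = \#\textbf{frozen regions}$, which is the desired formula.

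I expect the main obstacle to be the clean verification of Hypothesis \ref{HypothesisX}(ii) on $\widehat{\mathcal R}$: the unramified double cover construction of \cite[Sect.~7.2]{BB24} needs to be unpacked enough to confirm that the real ovals of $\widehat{\mathcal R}$ indeed carry the rank-two condition for $(\eta_0, \eta_1, \eta_2)$, and that all real higher-order zeros of $\widehat{\boldsymbol{\eta}}(\bx)$ lie on those ovals. Once this structural input is borrowed from \cite{BB24}, the rest of the argument is formally identical to the Aztec case and requires no new computation.
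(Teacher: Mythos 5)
Your proof matches the paper's approach: the paper treats this theorem as an immediate corollary of Theorem \ref{mainTheorem} and Lemma \ref{LemIrr}, using the hexagon analogue of Lemma \ref{LemmaZeros} from \cite{BB24}, and simply references the argument of Theorem \ref{Thm:Aztec2} rather than repeating it. Your replay of that argument --- including the verification of Hypothesis \ref{HypothesisX} on the real ovals of the unramified double cover $\widehat{\mathcal R}$ and the promotion of ``power of an irreducible'' to ``irreducible'' via simple vanishing of the discriminant away from cusps --- is precisely what the paper intends by that reference.
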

This coincides with formula \eqref{DegFormula} for the Aztec diamond. Note that $\widehat g = 2g-1 \geq 1$, meaning that for such models a smooth phase is always present. 

While the above degree formula might seem universal, this is not quite so. Below we will see that in the case of the \emph{ramified} cover the degree of the arctic curve follows a different formula.

\medskip 

\noindent
\textbf{Ramified cover.} The ramified case treated in \cite[Sect.~7.1]{BB24} leads to a double cover $\widehat{\cR}$ of genus $\widehat g = 2g$ with ramification points $B_1$, $B_2$, where $g$ is the genus of the spectral curve $\mathcal R$. The corresponding model of the hexagon will have $\widehat g$ smooth regions and $\p$ frozen regions, where $\p$ is again equal to the number of poles of $\widehat\be$, see Fig.~\ref{BB24_Fig3}. Consequently, $\widehat\be$ will have $\z = \p + 2\widehat g -2$ zeros.

Unlike the case of the Aztec diamond and the hexagon in the unramified case, $\widehat\be$ will have stationary zeros at the ramification points $B_1$, $B_2$. These do \emph{not} contribute to the coalescence of zeros condition characterizing the arctic curve.\footnote{This is not stated explicitly in \cite{BB24} but follows from the diffeomorphism between $\widehat{\mathcal R}_+^\circ$ and the rough region, see \cite[Prop.~34]{BB24}.} Thus, we define the zero map following \eqref{Zhat} to be
\begin{align*}
    \widehat{\mathsf{Z}}(\bx) = \mathsf{Z}(\bx) - B_1 - B_2,
\end{align*}
where $\mathsf{Z}$ is the zero map of $\widehat\be$, i.e.~$\widehat{\mathsf{Z}}$ represents the nonstationary zeros of $\widehat \be$. Note that the degree of $\widehat {\mathsf{Z}}$ is equal to $\# \widehat z = \z -2$. It now follows that the corresponding $\widehat {\mathsf{Z}}$-discriminant is of degree $2\widehat g - 2 + 2(\z -2) = 6\widehat g + 2\cdot \p -10$. 

Again we have obtained the degree of the arctic curve:
\begin{theorem}
For the quasi-periodic model of the hexagon with a ramified cover studied in \cite[Sect.~7.1]{BB24} we have
\begin{align}\label{HexDegree}
    \textbf{\emph{degree of arctic curve}} = 6\cdot \#\textbf{\emph{smooth regions}}+2 \cdot \# \textbf{\emph{frozen regions}} - 10.
\end{align}
\end{theorem}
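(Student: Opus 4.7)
The plan is to mirror the argument used for Theorem \ref{Thm:Aztec2}, but adapted to the ramified cover setting where two stationary zeros of $\widehat{\boldsymbol{\eta}}$ sit at the ramification points $B_1, B_2$ and must be removed from the zero divisor before applying the discriminant machinery. Concretely, I would work with the modified zero map $\widehat{\mathsf{Z}}(\mathbf{x}) = \mathsf{Z}(\mathbf{x}) - B_1 - B_2$, as introduced in \eqref{Zhat} and justified in Section \ref{Sect:Generealizations}, so that $\deg \widehat{\mathsf{Z}} = \#\widehat{z} = \#z - 2 = \#p + 2\widehat{g} - 4$.

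First I would invoke the generalization discussed at the end of Section \ref{Sect:Construction}: the construction of the $\boldsymbol{\eta}$-discriminant in Theorem \ref{mainTheorem} depends only on the two properties (global Abel-image constancy and local analytic dependence of zeros) that $\widehat{\mathsf{Z}}$ still satisfies after deleting the two stationary zeros. Applied on $\widehat{\mathcal{R}}$ (genus $\widehat{g} = 2g$), this produces a homogeneous polynomial $\Delta_{\widehat{\boldsymbol{\eta}}}$ of degree $2\widehat{g} - 2 + 2\cdot \#\widehat{z} = 2\widehat{g} - 2 + 2(\#p + 2\widehat{g} - 4) = 6\widehat{g} + 2\cdot \#p - 10$, whose projective zero set coincides with the locus where the nonstationary zeros of $\widehat{\boldsymbol{\eta}}(\mathbf{x})$ coalesce. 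By the characterization of the arctic curve recalled from \cite[Sect.~7.1]{BB24}, this coalescence locus is exactly the arctic curve in $[-1,1]^2$.

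Next I would verify Hypothesis \ref{HypothesisX} in the present setting, taking $X$ to be the union of the real ovals of $\widehat{\mathcal{R}}$. Part (i) is immediate from the fact that nonstationary double zeros of $\widehat{\boldsymbol{\eta}}$ for real $\mathbf{x}$ lie on the real ovals (a counting argument analogous to \cite[Lem.~12]{BB24} shows the remaining zeros are forced to be pairwise distinct), and part (ii) follows from the analogue of \cite[Lem.~12]{BB24} for hexagon models that $(f_1, f_2)$ has nonvanishing Wronskian on the real ovals. With Hypothesis \ref{HypothesisX} in hand, Lemma \ref{LemIrr} gives that $\Delta_{\widehat{\boldsymbol{\eta}}}$ is a power of an irreducible polynomial; a first-order (generic cusp-free) coalescence argument identical to the one in the proof of Theorem \ref{Thm:Aztec2} then upgrades this to outright irreducibility, so no extra factor of $x_0$ can be split off and the degree of $A(x_1,x_2) = \Delta_{\widehat{\boldsymbol{\eta}}}(1,x_1,x_2)$ equals that of $\Delta_{\widehat{\boldsymbol{\eta}}}$. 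Substituting $\widehat{g} = \#\textbf{smooth regions}$ and $\#p = \#\textbf{frozen regions}$ yields \eqref{HexDegree}.

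The only genuinely delicate point, which is the main obstacle, is the bookkeeping that removing $B_1 + B_2$ from $\mathsf{Z}$ is the \emph{correct} operation: one must check that the ramification points are not counted as coalescing with other zeros in the steepest-descent characterization of the arctic curve and that Hypothesis \ref{HypothesisX} is unaffected by their presence. This amounts to verifying, via the diffeomorphism between $\widehat{\mathcal{R}}_+^\circ$ and the rough region from \cite[Prop.~34]{BB24}, that the stationary zeros at $B_1, B_2$ never meet a nonstationary zero for $(x_1,x_2) \in \Omega$; once that is established, the rest of the argument is a direct transcription of the proof of Theorem \ref{Thm:Aztec2}.
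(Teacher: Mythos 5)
Your proposal is correct and follows essentially the same route as the paper: remove the stationary zeros at the ramification points via the modified zero map $\widehat{\mathsf{Z}}(\mathbf{x}) = \mathsf{Z}(\mathbf{x}) - B_1 - B_2$ of \eqref{Zhat}, apply the construction of Theorem \ref{mainTheorem} (via the generalization of Section \ref{Sect:Generealizations}) to get a discriminant of degree $2\widehat g - 2 + 2(\#z-2) = 6\widehat g + 2\cdot\#p - 10$, and conclude as in Theorem \ref{Thm:Aztec2} using the coalescence characterization from \cite[Sect.~7.1]{BB24} together with Hypothesis \ref{HypothesisX} and Lemma \ref{LemIrr}. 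You even identify the same justification (the diffeomorphism of \cite[Prop.~34]{BB24}) that the paper uses to argue the stationary zeros do not enter the coalescence condition.
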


\begin{figure}[ht]
    \centering
    \includegraphics[width=0.6\linewidth]{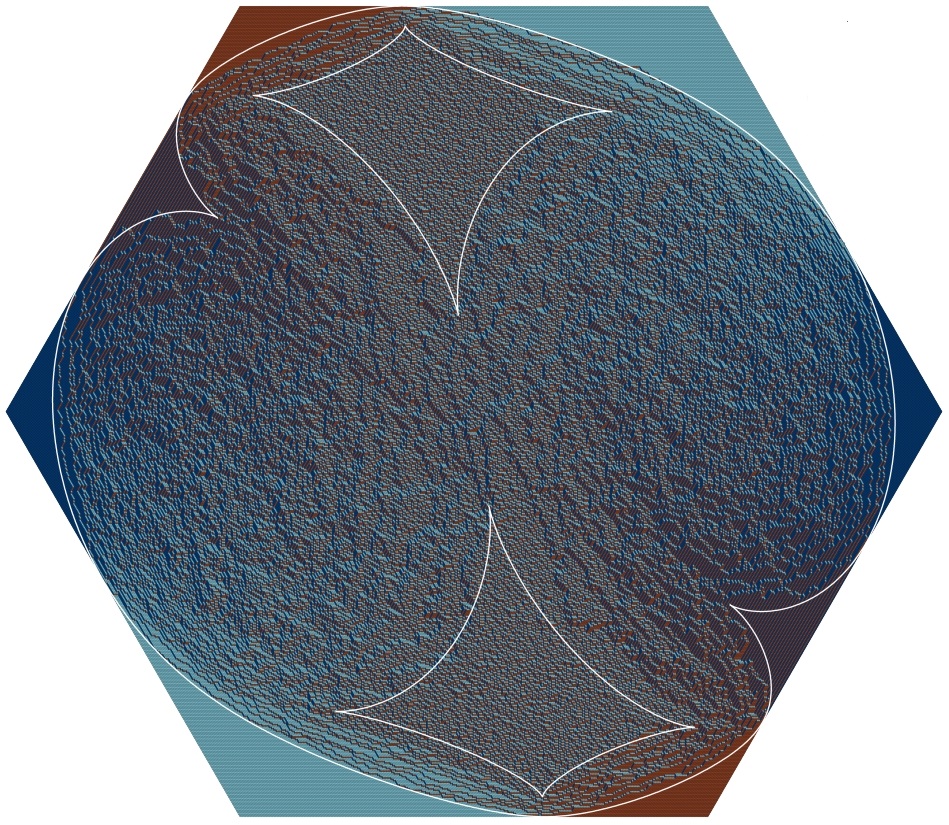}
    \caption{Large tiling of a quasi-periodic model of the hexagon in the ramified case with two smooth regions and eight frozen regions. According to formula \eqref{HexDegree} the arctic curve (in white) is of degree $6 \times 2 + 2 \times 8 -10 =18$ (image taken from \cite[Fig.~3]{BB24}).}
    \label{BB24_Fig3}
\end{figure}
\noindent
\textbf{Other models of the hexagon.} Not all quasi-periodic (or even periodic) models of the hexagon are covered by the construction outlined in \cite{BB24}. Indeed, there are other choices for the double cover $\widehat{\cR} \to \cR$ that can be made, and categorizing all possibilities seems to be a nontrivial problem. For example, in \cite{CDKL20} a one parameter family of periodic models without a smooth phase were studied. Provided the temperature parameter is sufficiently low, the double cover $\widehat{\mathcal R}$ will consist of two disjoint copies of the original spectral curve $\mathcal R$ of the model, leading to a rough region consisting of two disjoint components \cite[Prop.~2.6]{CDKL20} (cf.~Fig.~\ref{PerHex}). The arctic curve in this case is the union of two disjoint ellipses. In particular, the corresponding minimal polynomial defining the arctic curve is of degree $4$ and factorizes into two quadratic polynomials.

The two types of models of the hexagon (ramified/unramified cover) studied in \cite{BB24} always lead to a connected rough region. 
\begin{figure}[ht]
    \centering
\begin{minipage}{.5\textwidth}
  \centering
  \includegraphics[width=0.92\linewidth]{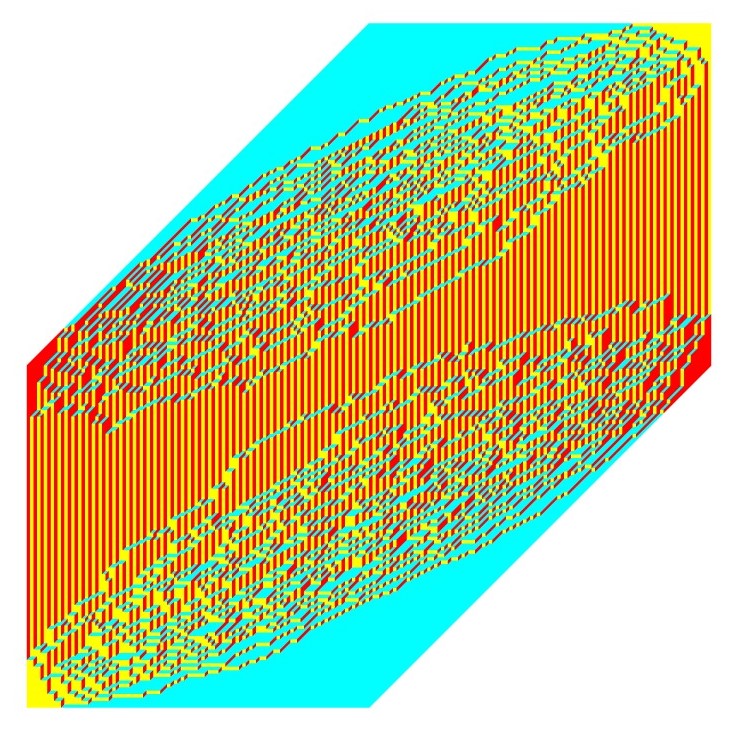}
\end{minipage}%
\begin{minipage}{.5\textwidth}
  \centering
  \includegraphics[width=0.95\linewidth]{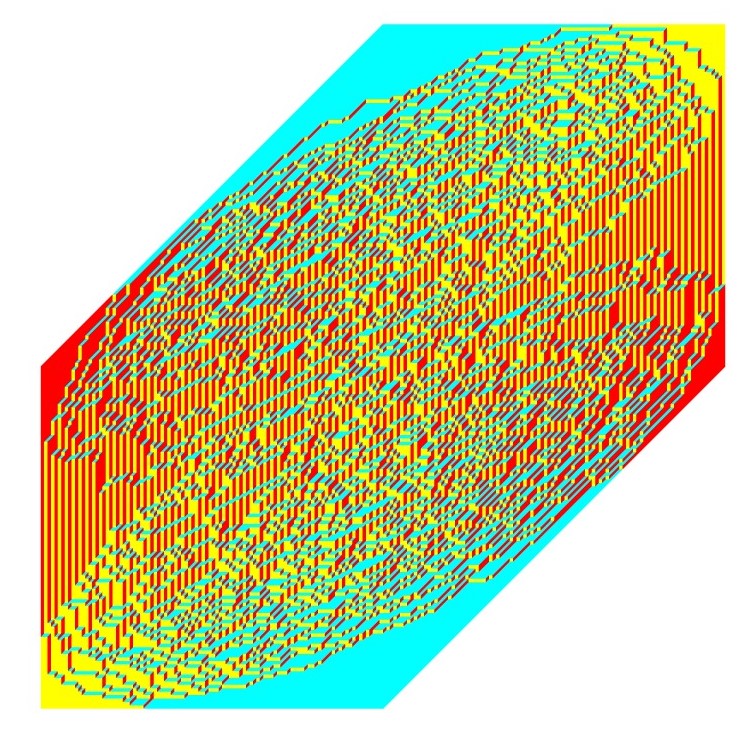}
  
\end{minipage}
\captionof{figure}{Tilings of the periodic model of the hexagon studied in \cite{CDKL20}. \textbf{Left}: Random tiling of size $100$ in the low-temperature regime $(\alpha = 0.1)$. The rough region is disconnected and the arctic curve consists of two ellipses. \textbf{Right}: Random tiling of size $100$ in the high-temperature regime $(\alpha = 0.2)$. The rough region is connected and the arctic curve is of degree $6$ (image kindly provided by Wenkui Liu).}
  \label{PerHex}
\end{figure}
Note that reducibility of the arctic curve does not contradict Lemma \ref{LemIrr}, as by definition Riemann surfaces are connected.
Hence in the low temperature regime, $\widehat \cR$ is strictly speaking not a Riemann surface as it is not connected. 

\section{Resultants on compact Riemann surfaces}\label{Sect:MixRes}
We finish this paper with a short discussion of \emph{resultants} defined on Riemann surfaces. This section is not directly related to the preceding results on arctic curves. However, it might be of separate interest.

General mixed resultants have been studied by Gelfand, Kapranov and Zelevinsky in \cite[Ch.~3.3, 8]{GKZ94}. In the following we provide explicit formulas in terms of Riemann theta functions for the resultant of two meromorphic sections of the form 
\begin{align}\label{bebnu}
\begin{split}
    \be(\vec{x}) &= x_0 \eta_0 + \dots + x_n \eta_n, \qquad \ \vec{x} \in \C^{n+1} \setminus \lbrace 0 \rbrace,
    \\
    \bnu(\vec{y}) &= y_0 \nu_0 + \dots + y_m \nu_m, \qquad\vec{y} \in \C^{m+1} \setminus \lbrace 0 \rbrace.
    \end{split}
\end{align}
Instead of Hypothesis \ref{HypoEta} we now require the set $\lbrace \eta_0, \dots, \eta_n \rbrace$ (resp.~$\lbrace \nu_0, \dots, \nu_m \rbrace$) to be linearly independent, and $\be$, $\bnu$ not having stationary zeros in common. 

We will use the terminology \emph{$(\be, \bnu)$-resultant} for the corresponding resultant. Below we show that the $\be$-discriminant can be obtained as a special case of such a resultant as in the familiar genus $0$ case, see Proposition \ref{DeltaRes}. An alternative notion of a resultant of two meromorphic functions on a Riemann surface,  characterizing the coalescence of two zeros or two poles,  was studied in \cite{GT08}.
\subsection{Construction of \texorpdfstring{$(\be, \bnu)$}{}-resultants}
To get started consider the two meromorphic sections in \eqref{bebnu}. We emphasize that the underlying holomorphic line bundles over $\mathcal R$ can be different. Let us denote by $\mathsf{Z}(\bx) = \sum_{j=1}^{\z} Z_j$ and $\mathsf{W}(\mathbf{y}) = \sum_{k=1}^{\# w} W_k$ the respective zero maps induced by $\be$ and $\bnu$. In the following we will choose some representatives $\vz_j = u(Z_j), \vec{w}_k = u(W_k) \in \C^g$, where $u$ is the usual Abel map.

Our goal is to explicitly construct a resultant-type expression for the pair $(\be, \bnu)$. That is, we are looking for a polynomial
\begin{align*}
\text{Resultant}_{\be, \bnu} \colon  \C^{n+m+2} \to \C,
\end{align*}
which is homogeneous of degrees $d_x$, $d_y$ in the variables $x_0, \dots, x_n$ and $y_0, \dots, y_m$ respectively, such that
\begin{align*}
\Big|\frac{\text{Resultant}_{\be, \bnu}(x_0, \dots, x_n\, ; \, y_0, \dots, y_m)}{|(x_0, \dots, x_n)|^{d_x}\cdot |(y_0, \dots, y_m)|^{d_y}}\Big| \propto \prod_{j, k} \dist(Z_j, W_k).
\end{align*}
To this end we define in analogy to \eqref{Pi1}  the following product:
\begin{align}\label{tPi1}
    \widetilde{\Pi}_1(\vz_1, \dots, \vz_{\z} \, ; \, \vec{w}_1, \dots, \vec{w}_{\# w}) = \prod_{j, k} \vte(\vz_j, \vec{w}_k) \in \C.
\end{align}
Here $\vte(\vr, \vs) = \theta\big(\vs - \vr + \ve\big)$ with $\vr, \vs \in \C^g$, and $\ve \in \C^g$ is chosen such that the function $\Ee(R, S) = \theta\Big(\int_R^S \vo + \ve\Big)$ with $R, \, S \in \cR$ does not vanish identically if either $R$ or $S$ is one of the stationary zeros of $\be$ or $\bnu$ respectively. Furthermore, as in \eqref{SigmaRule}, we assume that the representatives $\vz_j$, $\vec{w}_k$ are always chosen such that 
\begin{align}\label{SigmaRule2}
    \vz_1 + \dots + \vz_{\z} = \Sigma_{\be}, \quad \vec{w}_1 + \dots + \vec{w}_{\# w} = \Sigma_{\bnu},
\end{align}
for some appropriate vectors $\Sigma_{\be}, \Sigma_{\bnu} \in \C^g$.

As we have already seen for $\Pi_1$ defined in \eqref{Pi1}, the expression \eqref{tPi1} will depend on the choices of $\vec{z}_j$, $\vec{w}_k$ even after taking into account \eqref{SigmaRule2}. To correct for this we define similarly to \eqref{Pi2} two additional products given by
\begin{align*}
    \widetilde \Pi_{2,z}(\vz_1, \dots, \vz_{\z} \, ; \, \vec{q} \,) = \prod_{j=1}^{\z} \vte(\vz_j, \vq \, ), \qquad \widetilde \Pi_{2,w}(\vec{w}_1, \dots, \vec{w}_{\# w} \, ; \, \vec{p}) = \prod_{k=1}^{\vec{w}} \vte(\vec{p}, \vec{w}_k).
\end{align*} Here we assume that $\vq = u(Q)$, $\vec{p} = u(P)$, where $Q,P \in \cR$ are choose such that $\Ee(\,\cdot\, , Q)$, $\Ee(P, \,\cdot\,)$ are not identically zero.

One can easily check using \eqref{Pi2Mono} that the expression $\RR$ given by
\begin{align*}
    \RR(\bx, \mathbf{y}) =\frac{\widetilde\Pi_1(\vz_1, \dots, \vz_{\z}\, ; \, \vec{w}_1, \dots, \vec{w}_{\# w})}{[\widetilde \Pi_{2,z}(\vz_1, \dots, \vz_{\z}\, ; \, \vq \, )]^{\# w}[\widetilde \Pi_{2,w}(\vec{w}_1, \dots, \vec{w}_{\# w}\, ; \, \vec{p} \, )]^{\z}}
\end{align*}
will not depend on the choice of representatives $\vz_j$, $\vec{w}_k$ provided \eqref{SigmaRule2} is satisfied. In fact, $\RR$ is a meromorphic function on $\C P^n \times \C P^m$, cf.~Proposition \ref{Prop}. For the sake of simplicity, we did not make the dependence of $\RR$ on $\Sigma_{\be}, \Sigma_{\bnu}, \vec{e}, \vq, \vec{p}$ explicit. 

As in Section \ref{Sect:Construction} we need to understand the poles and spurious zeros of $\RR$. This in done exactly as for the construction of  $\be$-discriminats, hence we only state the end result, cf.~Theorem \ref{mainTheorem}. 
\begin{theorem}\label{ThmRes}
    With the assumptions on $\be$, $\bnu$ stated above the $(\be, \bnu)$-resultant  exists and is unique up to nonzero multiples. It is explicitly given by 
    \begin{align}\label{GenRes}
    \emph{Resultant}_{\be, \bnu}(\vec{x} \, ; \, \vec{y}) &= \RR(\bx,  \mathbf{y})  \big(L^Q_{\be}(\vec{x}) \big)^{\# w} \big( L^P_{\bnu}(\vec{y}) \big)^{\# z}.
    \end{align}
    In particular, the $(\be, \bnu)$-resultant is homogeneous of degree $\# w$ in $x_0, \dots, x_n$ and homogeneous of degree $\z$ in $y_0, \dots, y_m$.
\end{theorem}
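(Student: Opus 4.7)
The plan is to mirror the construction of the $\be$-discriminant in Theorem \ref{mainTheorem}, now in two independent projective variables. By the same Hartogs-type argument as in Proposition \ref{Prop}, applied separately on $\bx$-slices and $\mathbf{y}$-slices of $\C P^n \times \C P^m$, the function $\RR(\bx, \mathbf{y})$ is meromorphic. The strategy is to read off the divisor of $\RR$ from the zero structure of the prime function $\vte$, then cancel every spurious zero and pole using appropriate powers of the linear forms $L^Q_{\be}$ and $L^P_{\bnu}$, leaving a globally holomorphic bihomogeneous function on $(\C^{n+1}\setminus\{0\})\times(\C^{m+1}\setminus\{0\})$ that vanishes precisely when some $Z_j$ collides with some $W_k$.

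For the zero/pole bookkeeping, exactly as in \eqref{Pipropto}, \eqref{Pi2Propto}, one computes
\begin{align*}
    \widetilde\Pi_1 &\propto \prod_{j,k}\dist(Z_j,W_k)\,\prod_{j,\ell}\dist(Z_j,R_\ell)^{\# w}\,\prod_{k,\ell}\dist(W_k,S_\ell)^{\# z},
    \\
    \widetilde\Pi_{2,z} &\propto \prod_j \dist(Z_j,Q)\,\prod_{j,\ell}\dist(Z_j,R_\ell),
    \\
    \widetilde\Pi_{2,w} &\propto \prod_k \dist(W_k,P)\,\prod_{k,\ell}\dist(W_k,S_\ell).
\end{align*}
Raising the last two to the powers $\# w$ and $\# z$ respectively and dividing, the $R_\ell$- and $S_\ell$-contributions cancel \emph{exactly} between numerator and denominator, leaving
\begin{align*}
    \RR(\bx,\mathbf{y}) \propto \frac{\prod_{j,k}\dist(Z_j,W_k)}{\prod_j \dist(Z_j,Q)^{\# w}\prod_k \dist(W_k,P)^{\# z}}.
\end{align*}
Applying Lemma \ref{LemG} in the $\bx$-variables in a chart about $Q$ (and analogously in $\mathbf{y}$ near $P$), the product $\prod_j \dist(Z_j, Q)$ differs from $L^Q_{\be}(\vec{x})$ by a locally nonvanishing holomorphic factor, and similarly for $\prod_k \dist(W_k, P)$ and $L^P_{\bnu}(\vec{y})$. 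Multiplying $\RR$ by $(L^Q_{\be}(\vec{x}))^{\# w}(L^P_{\bnu}(\vec{y}))^{\# z}$ therefore cancels the denominator without introducing new zeros, producing a locally holomorphic function proportional to $\prod_{j,k}\dist(Z_j,W_k)$.

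By separate holomorphicity (Hartogs) applied in each block of variables, \eqref{GenRes} defines a globally holomorphic function on $(\C^{n+1}\setminus\{0\})\times(\C^{m+1}\setminus\{0\})$. It is bihomogeneous of degree $(\# w, \# z)$, since $\RR$ descends to $\C P^n \times \C P^m$ while $L^Q_{\be}$ and $L^P_{\bnu}$ are linear; a further Hartogs extension then promotes it to a bihomogeneous polynomial of the stated bidegree, satisfying the defining proportionality of the $(\be,\bnu)$-resultant. Uniqueness up to nonzero constants follows as in the discussion after Definition \ref{DefDiscr}: the ratio of two such resultants would be a bihomogeneous meromorphic function on $\C^{n+1}\times\C^{m+1}$ whose zeros and poles could only be supported at the origin of either factor, which for $n, m \geq 1$ forces it to be constant. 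The main technical obstacle, as in Theorem \ref{mainTheorem}, is the local analysis at \emph{compound} collision points where a genuine collision $Z_j = W_k$ simultaneously meets a spurious point in $\mathbf R \cup \mathbf S \cup \{P, Q\}$; this is handled by combining chart-by-chart applications of Lemma \ref{LemG} with multiplicities in the $\bx$- and $\mathbf{y}$-variables, exploiting the hypothesis that $\be$ and $\bnu$ share no stationary zeros so that the two kinds of collisions can be separated on $\mathcal R$.
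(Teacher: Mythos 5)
Your proof is correct and follows essentially the same route the paper indicates for Theorem~\ref{ThmRes}, namely transplanting the steps of Theorem~\ref{mainTheorem} and Lemma~\ref{LemG} to the two-block setting, with the key simplification — which you independently identify and which the paper notes in the remark after the theorem — that the spurious $R_\ell$- and $S_\ell$-contributions of $\widetilde\Pi_1$ cancel exactly against those of $\widetilde\Pi_{2,z}^{\#w}\widetilde\Pi_{2,w}^{\#z}$, leaving only the $Q$- and $P$-poles to be removed by $L^Q_{\be}$ and $L^P_{\bnu}$.
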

It is apparent from \eqref{GenRes} that, unlike in the case of the $\be$-discriminants, the spurious zeros are actually not present. One can easily check that they essentially cancel with some of the zeros of the denominator of $\RR$. In fact, the only remaining poles of $\RR$ with multiplicities $\# w$ and $\z$ come from the points $Q$ and $P$ respectively. 
\subsection{\texorpdfstring{$\be$}{}-discriminants via \texorpdfstring{$(\be, \bnu)$}{}-resultants} As in the classical case, one would expect that an $\be$-discriminant can be written as a special case of a $(\be, \bnu)$-resultant. For this we will assume that the $\eta_0, \dots, \eta_n$ are linearly independent meromorphic \emph{functions} which do not have a common zero.\footnote{If the $\eta_j$ are meromorphic differentials, consider instead the meromorphic functions $\widetilde \eta_j = \frac{\eta_j}{\omega}$, where $\omega$ is a nonzero holomorphic differential. Then only the pole divisor changes, while the zero sets of $\be$ and $\widetilde \be$, hence the corresponding discriminants, remain the same. 
 See also Remark \ref{Rmk:Bundles}.} Additionally we need to assume that the meromorphic differentials $d\eta_0, \dots, d\eta_n$ are linearly independent as well.

Let us assume that $\mathcal R$ is of genus $g$ and that $\be$ has $\p$ poles and therefore $\z = \p$ zeros (as $\be$ is a meromorphic function). Then $\mathbf{d}\be$ will be a meromorphic section of the canonical line bundle and will have $\p + \# d$ poles, where $\# d$ is the number of \emph{distinct} poles of the pole divisor $\mathsf{P} = \sum_{\ell=1}^{\# d} r_{\ell} P_\ell$ of $\be$ (here $r_1 + \dots + r_{\# d} = \# p$). It follows that $\mathbf{d}\be$ will have $\# w = \p + \# d + 2g-2$ zeros.

We can now apply Theorem \ref{ThmRes} to conclude that the $(\be, \mathbf{d} \be)$-resultant will be a homogeneous polynomial of degree $\z + \# w = 2g-2 + 2\cdot \z + \# d$. This is in general not the same as the degree of the $\be$-discriminant $\Delta_{\be}$, see Theorem \ref{mainTheorem}. The discrepancy stems from the presence of poles and also arises in the classical genus $g = 0$ case, see \eqref{ClassicalDiscRes}. Indeed, away from the pole divisor of $\be$, the coalescence of zeros of $\be$ is equivalent to the simultaneous vanishing of $\be$ and $\mathbf{d}\be$ as one would expect. However, if a zero coincides with a pole of $\be$, something different happens. Recall that with our convention if for some $\bx' \in \C P^n$ the order of a pole of $\be(\vec{x})$ drops from its usual value, then this point will count as a zero of $\be$ (even if $\be(\vec{x}\,')$ does not actually vanish there). However, the order of the pole of $\mathbf{d} \be(\vec{x})$
 would then also drop at $\bx'$. Hence, if $\be$ has a ``simple zero" at an element of its pole divisor $\mathsf{P}$, $\mathbf{d}\be$ will also have a ``zero" at that point, implying that the resultant will vanish, but not the discriminant. In other words, the $(\be, \mathbf{d}\be)$-resultant will have additional zeros corresponding to the coalescence of zeros and poles. This can be seen in the genus $g = 0$ case from the identity (see \cite[Ch.~12, Eq.~(1.29)]{GKZ94})
 \begin{align}\label{ClassicalDiscRes}
     \Delta_n(\mathcal P) = \frac{1}{x_n} \text{Resultant}_{n,n-1}(\mathcal P, \mathcal P'), \qquad \mathcal P(z) = \sum_{k=0}^n x_k z^k,
 \end{align}
 where $\Delta_n$ is the classical discriminant of a polynomial of degree $n$ and $\text{Resultant}_{n,n-1}$ is the classical resultant of polynomials of degree $n, n-1$ (see \cite[Ch.~12]{GKZ94} for their basic properties). Note that the right-hand side of \eqref{ClassicalDiscRes} contains a division by the leading coefficient $x_n$ correcting for the aforementioned coalescence between zeros and infinity ($=$ only pole of $\mathcal P)$. It turns out that this correction also works in the general case as we will see in the following.
 
 First, let us denote as above by $P_1, \dots, P_{\# d}$ the distinct poles of $\be$. For each pole $P_\ell$ of degree $r_\ell$ choose a chart mapping $P_\ell$ to $0$ in which $\be(\vec{x})$ has the form
 \begin{align*}
     \be(\vec{x}) = \frac{L^{P_\ell}(\vec{x})}{\zeta^{r_\ell}} + O(\zeta^{-r_\ell+1})
 \end{align*}
 (this is consistent with definition \eqref{defL}, see Remark \ref{RemarkL}). Note that $L^{P_\ell}(\vec{x})$ is a homogeneous linear function of $x_0, \dots, x_n$ and is, up to a multiplicative constant, chart independent. With this in mind, the following identity between the $\be$-discriminant and $(\be, \mathbf{d}\be)$-resultant holds. \begin{proposition}\label{DeltaRes}
      With the notation from above we have (up to multiplicative constants)
     \begin{align}\label{Delta=Res}
         \Delta_{\be}(x_0, \dots, x_n) = \frac{\emph{Resultant}_{\be, \mathbf{d}\be}(x_0, \dots, x_n; x_0, \dots, x_n)}{\prod_{\ell =1}^{\# d} L^{P_\ell}(x_0, \dots, x_n)}. 
     \end{align}
 \end{proposition}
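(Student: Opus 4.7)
The plan is to identify the right-hand side of \eqref{Delta=Res}, up to a nonzero scalar, with the $\be$-discriminant via the uniqueness statement following Definition \ref{DefDiscr}. It therefore suffices to show that the right-hand side is a homogeneous polynomial of degree $2g-2+2\p$ satisfying the characterizing property \eqref{MainDef}.

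The first step is a degree count. By Theorem \ref{ThmRes} applied to the pair $(\be,\mathbf{d}\be)$, the resultant is homogeneous of degree $\# w=\p+\# d+2g-2$ in $\vec{x}$ and of degree $\z=\p$ in $\vec{y}$, so specializing $\vec{y}=\vec{x}$ yields a homogeneous polynomial of degree $2g-2+2\p+\# d$. Dividing by $\prod_\ell L^{P_\ell}$, which has degree $\# d$, gives degree $2g-2+2\p$, matching $\deg\Delta_{\be}$ from Theorem \ref{mainTheorem}.

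The second and central step is to show that $\prod_\ell L^{P_\ell}$ divides $\text{Resultant}_{\be,\mathbf{d}\be}(\vec{x};\vec{x})$ in $\C[x_0,\dots,x_n]$. A local Laurent-expansion argument at each pole $P_\ell$ shows that when $L^{P_\ell}(\vec{x})=0$ the leading Laurent coefficient of $\be(\vec{x})$ at $P_\ell$ vanishes, so the pole order drops from $r_\ell$ to $r_\ell-1$; term-by-term differentiation gives the analogous drop for $\mathbf{d}\be(\vec{x})$, from $r_\ell+1$ to $r_\ell$. In the zero convention of \eqref{DefZ}, this means that $P_\ell$ is a simple ``zero'' both of $\be(\vec{x})$ and of $\mathbf{d}\be(\vec{x})$, contributing a factor $\dist(P_\ell,P_\ell)=0$ to the product $\prod_{j,k}\dist(Z_j,W_k)$ that governs the resultant. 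Since each $L^{P_\ell}$ is an irreducible linear form and, for a nondegenerate choice of $\vec{\eta}$, the hyperplanes $\{L^{P_\ell}=0\}$ are pairwise distinct, Hilbert's Nullstellensatz combined with unique factorization yields the asserted divisibility.

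Finally, I will verify \eqref{MainDef} for the quotient. Away from the pole divisor, a zero of $\be(\vec{x})$ of multiplicity $r\geq 2$ at $Z^\ast$ produces $r$ coinciding $Z_j$'s and $r-1$ coinciding $W_k$'s at $Z^\ast$, contributing $r(r-1)$ vanishing pairs to $\prod_{j,k}\dist(Z_j,W_k)$ and $r(r-1)$ vanishing ordered pairs $(Z_i,Z_j)$ with $i\neq j$ to $\prod_{i\neq j}\dist(Z_i,Z_j)$, so the orders of vanishing match. The only remaining discrepancy comes from the spurious simple zeros of the resultant along $\{L^{P_\ell}=0\}$, which are precisely removed by the denominator. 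The main obstacle will be the bookkeeping at nongeneric strata where $L^{P_\ell}$ vanishes jointly with higher Laurent coefficients, or where a pole-coincidence meets a genuine coalescence of zeros; these loci have codimension at least two and can be handled by a local analysis in the spirit of Lemma \ref{LemG}, ensuring the quotient is holomorphic (hence polynomial) with matching orders of vanishing on both sides.
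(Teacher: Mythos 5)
Your proof is correct and follows essentially the same strategy as the paper's: both reduce to the degree count from Theorem \ref{ThmRes} and Theorem \ref{mainTheorem}, and both verify the identity by matching vanishing orders of distance products via local chart analysis in the spirit of Lemma \ref{LemG}. The paper packages the divisibility and the multiplicity matching into the single identity $\prod_{j,k}\dist(Z_j,W_k)\propto\prod_{i\neq j}\dist(Z_i,Z_j)\cdot\prod_{j,\ell}\dist(Z_j,P_\ell)$ and then compares both sides of \eqref{Delta=Res} as meromorphic functions on $\C P^n$, whereas you separate out the statement that $\prod_\ell L^{P_\ell}$ divides the specialized resultant (via the observation that $L^{P_\ell}(\vec{x})=0$ forces a simultaneous pole-order drop for $\be$ and $\mathbf{d}\be$ at $P_\ell$, hence a vanishing factor $\dist(P_\ell,P_\ell)$), but the underlying local analysis is the same, and both defer the careful multiplicity bookkeeping at nongeneric strata.
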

 \begin{proof}
     Note that both side of \eqref{Delta=Res} are homogeneous of degree $2g-2+2 \cdot \z$, hence their ratio is a well-defined meromorphic function of $\C P^n$. It remains to show that this meromorphic function has no poles or zeros, and therefore is a constant.

     Because of Lemma \ref{LemG} and Theorem \ref{mainTheorem} it suffices to show that
     \begin{align}\label{LastEq}
         \underbrace{\prod_{j, k} \dist(Z_j, W_k)}_{\propto\ |\text{Resultant}_{\be, \mathbf{d}\be}(x_0, \dots, x_n; x_0, \dots, x_n)|} \propto \quad  \underbrace{ \prod_{i \not = j} \text{dist}(Z_i, Z_j)}_{\propto \ |\Delta_{\be}|} \times \underbrace{\prod_{j, \ell} \text{dist}(Z_j, P_\ell)}_{\propto \ |\prod_\ell L^{P_\ell}|}.
     \end{align}
     Recall that here the $Z_j$ are the zeros of $\be$, the $W_k$ are the zeros of $\mathbf{d}\be$ and $P_\ell$ are the distinct poles of $\be$. That both sides of \eqref{LastEq} vanish simultaneously to the same order can be shown via local analysis using charts as in the proof of Lemma \ref{LemG}. We leave the details to the reader.
 \end{proof}
Note that formula \eqref{Delta=Res} generalizes the classical identity \eqref{ClassicalDiscRes}, as in this case we have $L^\infty(\vec{x}) = x_n$.

\medskip
\noindent{\bf Acknowledgments.} The author is thankful to Arno Kuijlaars and Tomas Berggren for many helpful discussions and to Alexander I.~Bobenko, Nikolai Bobenko, C\'edric Boutillier and B\'eatrice de Tili\`ere for clarifying their work. This research was
supported by the Methusalem grant METH/21/03 – long term structural funding of
the Flemish Government, and the starting grant from the Ragnar S\"oderbergs Foundation. The author acknowledges the support of the Royal Swedish Academy of Sciences through the Mittag--Leffler Institute where parts of this work were written while the author attended the special program on Random Matrices and Scaling Limits in the Fall of 2024.

\end{document}